\tikzset{
  edge node/.code={%
      \expandafter\def\expandafter\tikz@tonodes\expandafter{\tikz@tonodes #1}}}
\tikzset{
  subseteq/.style={
    draw=none,
    edge node={node [sloped, allow upside down, auto=false]{$\subseteq$}}},
  Subseteq/.style={
    draw=none,
    every to/.append style={
      edge node={node [sloped, allow upside down, auto=false]{$\subseteq$}}}
  }
}
\tikzstyle{flostitle} =[draw,fill=white, text=black, rounded corners=2, inner sep=2pt]
\newif\if@blind
\if@blind \sethlcolor{black}\else
\newcounter{bulletscount}
\definecolor{mygreen}{rgb}{0,0.5,0}
\definecolor{forestgreen}{rgb}{0.13, 0.55, 0.13}
\newcommand{\peter}[1]{}
\newcommand{\flo}[1]{}
\newcommand\zthree  {\textsc{Z3}\@\xspace}
\newcommand\define{\mathrel{:=}}
\newcommand\redcross{{\color{red}{\ding{56}}}}
\newcommand{\setcond}[2]{\{ #1\mid \, #2\} }
\newcommand\gtick{{\color{mygreen}{\ding{52}}}}
\newcommand{\closedform}{$\phi$}
\newcommand{\closedm}{\phi}
\newcommand{\closedmt}{\phi_{t}}
\newcommand{\trans}{t^\Delta}
\newcommand{\mal}{\cdot}
\newcommand{\ineq}{$k\cdot m \geq c$ }
\newcommand{\ineqm}{k\cdot m \geq c }
\newcommand{\sumk}[1]{{k}_{#1}}
\newcommand{\mul}{\mathit{mul}}
\newcommand{\act}[1]{\mathit{ACT(#1)}}
\newcommand{\Real}{\mathbb{R}}
\newcommand{\ktdelta}{k\mal t^{\Delta}}
\newcommand{\Z}{\mathbb{Z}}
\newcommand{\N}{\mathbb{N}}
\newcommand{\fire}[1]{[ #1 \rangle}
\newcommand{\abs}[1]{\left\lvert #1 \right\rvert}
\newcommand{\tDelta}{t^\Delta}
\newcommand{\tMinus}{t^-}
\newcommand{\tPlus}{t^+}
\DeclareMathOperator{\ACT}{Act}
\DeclareMathOperator{\SOL}{Sol}
\newcommand{\Act}[1]{\ACT(#1)}
\newcommand{\Sol}[2]{\SOL(#1,#2)}
\newcommand{\cg}{\textsc{CGA}\@\xspace}
\newcommand{\ic}{\textsc{ICA}\@\xspace}
\newcommand{\inequalizer}{\textsc{Inequalizer}\@\xspace}
\newcommand{\mist}{\textsc{Mist}\@\xspace}
\newcommand{\amark}{m}
\newcommand{\avec}{v}
\newcommand{\avecp}{u}
\newcommand{\probnamereach}{\textsf{LSV(R)}}
\newcommand{\probnamecov}{\textsf{LSV(C)}}
\newcommand\bench[2][]{%
	\if\relax\detokenize{#1}\relax%
	\textsc{#2}\else \textsc{#2}\hspace{0.3pt}{\footnotesize(}#1{\footnotesize)}\fi\@\xspace}
\pgfplotsset{linux/.style={%
		ylabel={\dartagnan},
		xlabel={\herd},
		xtick={0.01,0.1,1,10,100,1000},
		ytick={0.01,0.1,1,10,100,1000},
		xmax=1800,
		ymax=1800,
		xmin=0.01,
		ymin=0.01}}
\title{Petri Net Invariant Synthesis}
\author{Peter Chini\inst{1} \and Florian Furbach\inst{2}}
\institute{
	TU Braunschweig,
	\email{p.chini@tu-braunschweig.de}
	\and
	Uppsala University,
\email{florian.furbach@it.uu.se}
}
\begin{document}
\maketitle
%\flo{
%	input zu algorithmus 1 definieren?\\
%	ich glaub der begriff instrument the algorithm passt nicht wirklich, alternative?\\
%	offene intervalle? "(" statt +1 wäre näher an den formeln aber kann man kenntnisse über das offene interval vorraussetzen?\\
%	Warnings raus\\
%}
%\\
%
%Ideen zum kuerzen - momentan nicht nötig:
%\begin{itemize}
%	\item outline weg
%	\item incremental solving und minization weg in experimente
%	\item platz neben algorithmus 1 noch irgendwie nutzen
%\end{itemize}

\begin{abstract}
	We study the synthesis of inductive half spaces (IHS).
	These are linear inequalities that form inductive invariants for Petri nets, capable of disproving reachability or coverability.
	IHS generalize classic notions of invariants like traps or siphons.
	%This makes their synthesis a desirable tool for disproving reachability or coverability whenever traditional invariants may fail.
	Their synthesis is desirable for disproving reachability or coverability where traditional invariants may fail.
	
	We formulate a CEGAR-loop for the synthesis of IHS.
	The first step is to establish a structure theory of IHS.
	We analyze the space of IHS with methods from discrete mathematics and derive a linear constraint system closely over-approximating the space.
	To discard false positives, we provide an algorithm that decides whether a given half space is indeed inductive, a problem that we prove to be \coNP-complete.
	%We implemented the CEGAR-loop in the tool \inequalizer~and tested it against state-of-the-art techniques.
	%
	%For an efficient synthesis we first establish a structure theory of IHS.
	%The key is to analyze the space of all IHS with techniques from discrete mathematics.
	%We derive a linear constraint system, closely over-approximating the space.
	%We then derive a close over-approximation of this space, stated as a linear constraint system.
	%We formulate a CEGAR approach that utilizes this approximation. 
	%To discard false positives, we provide an algorithm that decides whether a given linear inequality is an IHS -- a problem that we prove to be \coNP-complete.
	%Finally, we formulate a CEGAR approach utilizing both the algorithm and the constraint system to synthesize IHS.
	We implemented the CEGAR-loop in the tool \inequalizer~and our experiments show that it is competitive against state-of-the-art techniques.
\end{abstract}
\section{Introduction}
A major task of today's program verification is to formulate and prove safety properties.
Such a property describes the desirable and undesirable behavior of a program, often expressed in terms of safe and unsafe states.
A safety property is satisfied if all executions of a program explore only safe states.
Phrased differently, it is violated if an unsafe state is reachable via an execution.	
%Deciding this is usually undecidable or at least rather complex already for simple properties~\cite{10.1007/3-540-58043-3_23}.
Testing reachability is usually a rather complex problem and often undecidable~\cite{Clarke2018,Hartmanis1967,Turing1937,Sipser1997}.

To restore decidability, the behavior of a program is often over-approximated.
Intuitively, an over-approximation describes a property that holds for all reachable states but fails for unsafe states.
Hence, over-approximations act like a separator between reachable and unsafe states and therefore provide a proof for the non-reachability of the latter.
%Under-approximations limit safety verification to a manageable part of executions.
%While they can show that an unsafe state is reachable, they cannot prove the absence of an execution leading to an unsafe state.
%To obtain such proofs, we study over-approximations that hold for the reachable states but fail for unsafe states.
%They separate safe and unsafe states and prove that no unsafe state is reachable.
Computing over-approximations is often achieved by generating some type of invariant~\cite{Floyd1967Flowcharts,Hoare:1969:ABC:363235.363259,Blanchet:2003:SAL:781131.781153,10.1007/978-3-540-30579-8_2}.
The challenge is to find a type that admits an efficient generation and that is expressive enough to separate reachable from unsafe states.
Inductive invariants are a prominent example~\cite{10.1007/978-3-540-69738-1_27,Cousot:1978:ADL:512760.512770,Gulwani:2008:PAC:1375581.1375616}.
If an inductive invariant holds for some state, then it also holds for any successor after a step of an execution.
Hence, if an inductive invariant is satisfied initially, it holds for all reachable states.

We generate inductive invariants for Petri nets, a well-established model of concurrent programs~\cite{Peterson:1981:PNT:539513,24143}.
Here, safety verification is usually expressed in terms of the Petri net reachability or coverability problem.
The former is known to be \ComplexityFont{ACKERMANN}-complete~\cite{Czerwinski2021,Leroux2021,Czerwinski2019,Leroux2019}, the latter is \EXPSPACE-complete~\cite{RACKOFF1978223,yale1976reachability,Cardoza:1976:ESC:800113.803630}.
Despite the ongoing algorithmic development, in particular for coverability~\cite{KARP1969147,10.1007/978-3-642-32940-1_35,10.1007/978-3-642-31131-4_12,10.1007/978-3-540-75596-8_9,10.1007/978-3-642-21834-7_5}, computational requirements of solving both problems often exceeds practical limits.
This has led to the development of classic Petri net invariants like \emph{traps}, \emph{siphons}, or \emph{place invariants}~\cite{Peterson:1981:PNT:539513} that may help to solve both problems more efficiently.
Typically, these invariants are based on linear dependencies of places or transitions and can be synthesized easily by incorporating tools and solvers from linear programming.
%Safety verification is usually expressed in terms of the Petri net reachability or coverability problem.
%Both problems are complex but decidable~\cite{grr,RACKOFF1978223,Cardoza:1976:ESC:800113.803630}.
%Reachability is known to be non-elementary-hard~\cite{Czerwinski2019,yale1976reachability,DBLP:conf/lics/BlondinFGHM15} and coverability is $\EXPSPACE$-complete~\cite{RACKOFF1978223,yale1976reachability,Cardoza:1976:ESC:800113.803630}.
%Despite the ongoing algorithmic development, especially for coverability~\cite{KARP1969147,10.1007/978-3-642-32940-1_35,10.1007/978-3-642-31131-4_12,10.1007/978-3-540-75596-8_9,10.1007/978-3-642-21834-7_5}, computational requirements of solving both problems often exceeds practical limits.
%This led to the development of invariants approximating the problems.
%Well known examples are \emph{place invariants}, \emph{traps}, and \emph{siphons}~\cite{Peterson:1981:PNT:539513}.
%The synthesis of these invariants is comparatively simple but we would like to generate a more expressive type.

The trade-off for the efficient synthesis of these classic invariants is that their expressiveness is limited and often not sufficient to prove non-reachability of a marking.
We study \emph{inductive half spaces} (IHS)~\cite{Sankaranarayanan2003,Triebel2015}, a type of invariants with increased expressiveness.
These consist of a tuple $(k,c)$, where $k$ is a vector over the places of the Petri net and $c$ is an integer.
The corresponding \emph{half space} is a subset of the space of markings, containing all markings $\amark$ that satisfy the inequality $\ineqm$.
It is called \emph{inductive} if the markings that are in the half space do not leave it after firing a transition.
Inductive half spaces generalize many of the classical Petri net invariants~\cite{Triebel2015} and preserve their \emph{linear nature}.
However, the synthesis of IHS remained an open problem.
%We study \emph{inductive half spaces} (IHS), a type of invariants that generalizes the mentioned notions~\cite{Sankaranarayanan2003,Triebel2015}.
%Recall that a place invariant is a vector $k$ over the places so that there is a $c \in \Z$ with $k \cdot m = c$ for each reachable marking $m$ of the Petri net.
%IHS relax the definition to an inequality by requiring $k \cdot m \geq c$.
%This has two implications.
%First, the constant $c$ is no longer uniquely determined by $k \cdot m$.
%It is now an independent part of the IHS.
%Second, $k \cdot m \geq c$ defines a half space within the set of markings.
%It contains all $m$ that satisfy the inequality.
%We refer to $(k,c)$ as \emph{half space}.
%Such a tuple is \emph{inductive}, and thus an IHS, if the markings that are in the half space do not leave it after firing a transition of the net.

Our contribution is a method for the synthesis of inductive half spaces.
More precise, we compute IHS that separate an initial marking $m_0$ from a final marking $m_f$, proving the latter non-reachable.
This task is formalized in the \emph{linear safety verification problem} $\probnamereach$.
Given $\amark_0$ and $\amark_f$, it asks for an IHS $(k,c)$ such that $k \cdot \amark_0 \geq c$ and $k \cdot \amark_f < c$.
The problem was first considered in~\cite{Sankaranarayanan2003} for continuous Petri nets.
The synthesis of IHS is much easier in the continuous case.
In fact, an entire subclass we call \emph{non-trivial inductive half spaces} does not occur in this setting.
So far, $\probnamereach$ has not been considered in its full generality and its decidability is still unknown.

We provide a semi-decision procedure for $\probnamereach$ using \emph{counter example guided abstraction refinement} (CEGAR)~\cite{CEGAR00}, a state-of-the-art technique in program verification.
We illustrate the approach
in \autoref{Figure:cegar}.
Suppose we are given
\begin{wrapfigure}[11]{r}{5.3cm}
	\vspace{-1cm}
	\begin{center}
\begin{tikzpicture}[>=stealth', bend angle=45, auto, ->, node distance=1.5cm]
\node[] (pn)  {Petri net $N$, markings $\amark_0, \amark_f$};
\node[below of = pn, draw,rounded corners] (smt)  {SMT-solver};
\node[below of = smt, draw,rounded corners] (algo)  {Checker};
%\node[right of = algo] (solt)  {};
\node[right of = algo, node distance = 2cm] (sol)  {\gtick \ineq};
%\node[left of = notkt] (notk)  {$\neg mul(k)$};
\node[right of = smt] (false)  {\redcross};

\draw[->, color=green] (algo) edge node {c} (sol);

\draw[->] (pn) edge[left] node {\closedform} (smt);
\draw[->, color=green] (smt) edge[out = 315, in = 45,right] node {$k$} (algo);
%\draw[->] (algo) edge[left] node {} (notk);
\draw[->, color=red] (algo) edge[out = 135, in = 225] node {$\neg mul(k)$} (smt);
\draw[->, color=red] (smt) edge[left] node {} (false);
%\draw[->] (notk) edge[left] node {} (smt);
\end{tikzpicture}
\end{center}
	\vspace{-0.5cm}
	\caption{The CEGAR loop.}
	\label{Figure:cegar}
\end{wrapfigure}
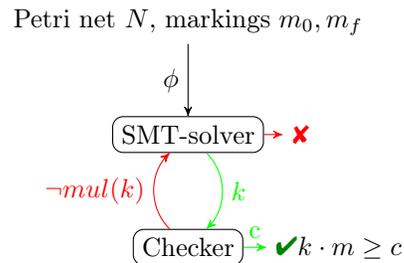
a Petri net $N$, an initial marking $\amark_0$, and a marking $\amark_f$ for which we want to disprove reachability from $\amark_0$.
Our approach attempts to synthesize an IHS that separates $m_f$ from the reachable markings of $N$.
It begins by constructing a formula $\phi$ of linear constraints from the given information and passes it to an SMT-solver.
Roughly, $\phi$ describes necessary conditions for solutions of $\probnamereach$.
For each solution $(k,c)$ of $\probnamereach$, the vector $k$ is a solution of $\phi$.
%Vice versa, if $k$ is a solution of $\phi$, there might exist a $c \in \Z$ such that $(k,c)$ solves $\probnamereach$.
If the SMT-solver does not find a solution to $\phi$, then no separating IHS exists.
Otherwise we find a vector $k$ of a half space candidate.
We then determine whether there exists a $c \in \Z$ such that $(k,c)$ is indeed inductive.
In order to synthesize such a $c$, we developed a \emph{constant generation algorithm} (\cg).
If \cg~is successful, we have found a separating IHS $(k,c)$.
Otherwise, $k$ does not admit a suitable constant $c$ and we apply a refinement.
We set $\phi = \phi \wedge \neg\mul(k)$, where $\neg\mul(k)$ is a linear constraint that excludes all multiples of $k$ and repeat the above process.
Note that the loop may not terminate.
But if it does, we obtain an answer to $\probnamereach$.

To realize the CEGAR-loop, we make the following main contributions.
\begin{itemize}
	\item We develop a structure theory of inductive half spaces.
	It decomposes the space of IHS into \emph{trivial} and \emph{non-trivial} half spaces.
	While the synthesis of trivial IHS is simple, the synthesis of non-trivial ones is challenging.
	By employing techniques from discrete mathematics, we can determine necessary conditions for non-trivial IHS and construct the required formula $\phi$.
	\item We present two algorithms: the \emph{inductivity checker} (\ic)  and the \emph{constant generation algorithm} (\cg).
	The former determines whether a given half space is inductive, a problem that we prove \coNP-complete.
	This answers an open question from~\cite{Triebel2015}.
	\ic~combines structural properties of IHS with dynamic programming.
	The algorithm \cg~synthesizes a constant $c$ for a solution $k$ of $\phi$.
	\cg~is an instrumentation of \ic.
	As termination argument, it uses an interesting connection between IHS and the \emph{Frobenius number}.
	\item We implemented the CEGAR-loop in the tool \inequalizer.
	Employing it, we disproved reachability and coverability for a benchmark of widely used concurrent programs. 
	The results are compared to algorithms implemented in \mist~\cite{mist} and show \inequalizer to be competitive.
\end{itemize}
\paragraph{Related Work}\label{sec:pnrelated}
The reachability problem of Petri nets is a central problem in theoretical computer science.
Its complexity was finally resolved after 45 years and proven to be \ComplexityFont{ACKERMANN}-complete.
The upper bound is due to Leroux and Schmitz~\cite{Leroux2019}.
The authors refined several classical algorithms for reachability like the one by Kosaraju~\cite{Kosaraju1982}, Mayr~\cite{Mayr1981,grr}, and Lambert~\cite{Lambert1992}.
Hardness was first considered by Lipton~\cite{yale1976reachability}. 
He proved reachability \EXPSPACE-hard.
Czerwinski et al~\cite{Czerwinski2019} improved the lower bound to non-elementary.
A new result due to Leroux, Czerwinski, and Orlikowski~\cite{Leroux2021,Czerwinski2021} closes the gap completely.

Many safety verification tasks can be phrased in terms of coverability.
The \EXPSPACE-completeness of the problem was determined by Rackoff \cite{RACKOFF1978223} and Lipton~\cite{yale1976reachability}.
Despite this, efficient algorithms keep getting developed~\cite{KARP1969147}.
Modern approaches are based on forward or backward state space exploration~\cite{zbMATH05506898,GEERAERTS2006180,10.1007/978-3-642-32940-1_35,10.1007/978-3-642-39799-8_10,10.1007/978-3-642-31131-4_12}.
A method that has drawn interest are so-called unfoldings~\cite{Esparza99anunfolding,McMillan:1995:TSS:204526.204528,Esparza:1996:IMU:646480.693793,Langerak:1999:CFP:647768.733925}. 
Notably, Abdulla et al~\cite{Abdulla2004} solve coverability by constructing an unfolding that represents backwards reachable states.
They analyze it using an SMT-formula.

Profiting from advances in SMT-solving, deriving program properties by constraint solving has become popular~\cite{gupta2009tests,Abdulla2004,10.1007/978-3-319-08867-9_40,Gulwani:2008:PAC:1375581.1375616}. 
In~\cite{10.1007/978-3-540-24622-0_20}, ranking functions are synthesized by solving linear inequalities.
%in order to prove termination of unnested programs.
Synthesis methods for Petri nets involving SMT-solving often simplify the task by using continuous values.
In~\cite{10.1007/978-3-319-08867-9_40}, Esparza et al generate inductive invariants disproving co-linear properties.
Sankaranarayanan et al~\cite{Sankaranarayanan2003} synthesize IHS over continuous 
Petri nets.
Compared to the latter, we generate a larger class of invariants: \emph{non-trivial} IHS do not occur in~\cite{Sankaranarayanan2003} but can be necessary for discrete nets (see~\autoref{Fig:Runexample}).
%For instance, an IHS for the net in~\autoref{Fig:Runexample} needs to be non-trivial.
The structure of IHS was first considered by Triebel and Sürmeli~\cite{Triebel2015}.
The authors show that IHS generalize notions like traps, siphons, and place invariants.
\paragraph{Outline}
In \autoref{Section:Prelim}, we introduce the necessary notions around Petri nets.
The structure of IHS is examined in \autoref{Section:Charinduct}.
In \autoref{Section:Generating}, we formulate the SMT-formula in the CEGAR loop.
The algorithms \ic~and \cg~are given in \autoref{Sec:checking}.
Experimental results are presented in \autoref{sec:pn:experiments}.
For brevity, we omit a number of formal proofs.
They can be found in the appendix.\flo{ref arxive}

\section{Linear Safety Verification}
\label{Section:Prelim}
We introduce the linear safety verification problems for Petri nets.
They formalize the question of whether there exists an inductive half space which disproves reachability or coverability of a certain marking.
To this end, we formally introduce half spaces and the necessary notions around Petri nets.
%---------------------------------------------------------------------
%---------------------------------------------------------------------
%---------------------------------------------------------------------
%---------------------------------------------------------------------
\paragraph{Petri Nets}
A \emph{Petri net} is a tuple $N = (P,T,F)$, where $P$ is a finite set of \emph{places}, $T$ is a finite set of \emph{transitions}, and $F: (P \times T) \cup (T \times P) \rightarrow \N$ is a \emph{flow function}.
We denote the number of places $\abs{P}$ by $n$. 
The places are numbered. For convenience, we use a place $p_i$ and their numeric value $i$ interchangeably: 
Given a vector $x\in \N^n$, we denote its $i$-th component as both $x(i)$ and $x(p_i)$. 
For a transition $t \in T$, we define vectors $\tMinus, \tPlus \in \N^n$. 
The $i$-th component of $\tMinus$, with $p_i \in P$, is defined to be $F(p_i,t)$, written $\tMinus(i)=\tMinus(p_i) \define F(p_i,t)$.
Similarly, $\tPlus(i) = \tPlus(p_i) \define F(t,p_i)$.
The vector $\tDelta$ captures the difference $\tDelta \define \tPlus - \tMinus$.

The semantics of a Petri net $N$ is defined in terms of markings.
A \emph{marking} $\amark$ is a vector in $\N^n$.
Intuitively, it puts a number of \emph{tokens} in each place.
A marking is said to \emph{enable} a transition $t$ if $\amark(p) \geq \tMinus(p)$ for each place $p \in P$, written $\amark \geq \tMinus$.
The set of all markings that enable $t$ is called the \emph{activation space} of $t$ and is denoted by $\ACT(t)$.
Note that $\ACT(t) = \setcond{\tMinus + \avec}{ \avec \geq 0 }$.
If $\amark \in \ACT(t)$, then $t$ can be \emph{fired}, resulting in the new marking $\amark' = \amark + \tDelta$.
This constitutes the \emph{firing relation}, written as $\amark \fire{t} \amark'$. 
We lift the relation to sequences of transitions $\sigma = t_1 \dots t_k \in T^*$ where convenient, writing $\amark \fire{\sigma} \amark'$. 
A marking $\amark_f$ is called \emph{reachable} from a marking $\amark_0$ if there is a sequence of transitions $\sigma$ such that $\amark_0 \fire{\sigma} \amark_f$. 
We use $\mathit{post}^*(\amark_0)$ to denote the markings reachable from $\amark_0$ and $\mathit{pre}^*(\amark_f)$ are the markings from which $\amark_f$ is reachable.
The \emph{upward closure} of $\amark_f$ is ${\uparrow\!\amark_f}=\setcond{\amark\in \N^n}{\amark\geq \amark_f}$.
A marking $\amark_f$ is \emph{coverable} from $\amark_0$, if there is a sequence of transitions $\sigma$ and an $\amark\in\ \uparrow\!\amark_f$ such that $\amark_0 \fire{\sigma} \amark$. 
%---------------------------------------------------------------------
%---------------------------------------------------------------------
%---------------------------------------------------------------------
%---------------------------------------------------------------------
\paragraph{(Inductive) Half Spaces}
We describe sets of markings by means of half spaces.
Let $N = (P,T,F)$ be a Petri net, $k \in \Z^n$ a vector, and $c \in \Z$ an integer. 
The \emph{half space} defined by $k$ and $c$ is $\Sol{k}{c} = \setcond{\amark \in \Z^n}{ k \cdot \amark \geq c}$.
Here, $k \cdot \amark = \sum_{p \in P} k(p) \cdot \amark(p)$ is the usual scalar product.
We also refer to the tuple $(k,c)$ as half space.
Note that we could also define half spaces via $k \cdot \amark \leq c$. 
This is of course equivalent since $k \cdot \amark \geq c$ if and only if $-k \cdot M \leq -c$. 
We are interested in half spaces that are inductive in the sense that they cannot be left by firing transitions.
A half space $(k,c)$ is \emph{$t$-inductive} if for any $\amark \in \Act{t} \cap \Sol{k}{c}$ we have $\amark + \tDelta \in \Sol{k}{c}$. 
A half space $(k,c)$ is \emph{inductive} if it is $t$-inductive for all $t \in T$. 
We use IHS as a shorthand for inductive half space.

A half space $(k,c)$ is not $t$-inductive if and only if it contains a marking $\amark$ with $k \cdot \amark  \geq c$ that enables $t$, i.e. $\amark  \geq \tMinus$, and from which we leave the half space by firing $t$: $k \cdot (\amark + \tDelta) < c$.
Since $\amark \geq \tMinus$ if and only if there is an $x\in \N^n$ with $\amark = \tMinus + x$, we can state inductivity in terms of an infeasibility requirement:
%---------------------------------------------------------------------
%---------------------------------------------------------------------
%---------------------------------------------------------------------
%---------------------------------------------------------------------
\begin{restatable}{theorem}{inductive}
\label{Theorem:inductive}
	A half space $(k,c)$ is $t$-inductive iff there is no vector $x \in \N^n$ with
	\begin{align*}
		c \leq k\cdot x + k \cdot t^- < c- k\cdot t^\Delta.
	\end{align*}
\end{restatable}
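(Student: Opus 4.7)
The plan is to prove the equivalence by contrapositive: I will show that $(k,c)$ fails to be $t$-inductive if and only if there exists an $x \in \N^n$ satisfying the displayed double inequality. The entire proof is essentially an unfolding of the definitions of $t$-inductivity, $\Act{t}$, and $\Sol{k}{c}$, followed by an algebraic rearrangement.

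First I would write out the negation of $t$-inductivity: $(k,c)$ is not $t$-inductive iff there exists a marking $\amark$ with $\amark \in \Act{t} \cap \Sol{k}{c}$ but $\amark + \tDelta \notin \Sol{k}{c}$. Using the preceding identity $\Act{t} = \{\tMinus + v \mid v \geq 0\}$, I can parametrize such an $\amark$ as $\amark = \tMinus + x$ for some $x \in \N^n$, and conversely every such $\tMinus + x$ lies in $\Act{t}$. This step gives a bijection between candidate witnesses $\amark$ and vectors $x \in \N^n$.

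Next I would translate the two membership conditions into linear inequalities in $k$. The condition $\amark \in \Sol{k}{c}$ becomes $k \cdot (\tMinus + x) \geq c$, i.e.\ $k \cdot x + k \cdot \tMinus \geq c$. The condition $\amark + \tDelta \notin \Sol{k}{c}$ becomes $k \cdot (\tMinus + x + \tDelta) < c$, which after subtracting $k \cdot \tDelta$ from both sides rewrites as $k \cdot x + k \cdot \tMinus < c - k \cdot \tDelta$. Combining the two inequalities gives exactly the displayed chain
\[ c \leq k \cdot x + k \cdot \tMinus < c - k \cdot \tDelta. \]
Conversely, any $x \in \N^n$ satisfying this chain yields a witness $\amark = \tMinus + x$ to non-inductivity by reversing the same manipulations. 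Taking contrapositives of both implications yields the claim.

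I do not expect any real obstacle here, since the statement is a mechanical reformulation of the definition; the only thing to be careful about is keeping the direction of the strict versus non-strict inequality straight when moving $k \cdot \tDelta$ across the second inequality, and making sure the substitution $\amark = \tMinus + x$ with $x \in \N^n$ really captures all of $\Act{t}$ (which is guaranteed by the earlier description of $\Act{t}$ in the excerpt).
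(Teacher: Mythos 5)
Your proof is correct and follows essentially the same route as the paper's: negate $t$-inductivity, substitute $\amark = \tMinus + x$ using the parametrization of $\Act{t}$, and rearrange the resulting linear inequalities into the displayed chain. The only cosmetic difference is that the paper writes the chain of equivalences in a single displayed derivation rather than spelling out the bidirectional argument in prose.
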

%---------------------------------------------------------------------
%---------------------------------------------------------------------
%---------------------------------------------------------------------
%---------------------------------------------------------------------
\autoref{Theorem:inductive} provides a way of disproving inductivity of a half space by finding a suitable vector $x$.
It is a key ingredient of our further development.
%It is the key ingredient of our algorithm in \autoref{Sec:checking}.
%This theorem provides us a way to disprove inductivity by finding a vector that satisfies two linear inequalities. It is a key insight behind our method for checking inductivity in \autoref{Sec:checking}.
%---------------------------------------------------------------------
%---------------------------------------------------------------------
%---------------------------------------------------------------------
%---------------------------------------------------------------------
\begin{figure}
		\centering
		\begin{minipage}{.47\textwidth}
			\centering
			\begin{tikzpicture}[>=stealth', bend angle=45, auto, ->, shorten >= 1pt]
	\node [place] (p1) {$p_1$};
	\coordinate[right of = p1, node distance = 2cm] (tco);
	\node[place, right of = tco, node distance = 2cm] (p2) {$p_2$};
	
	\node [transition] (t) at (tco) {$t$};
	\path (p1) edge[out = 315, in = 225] node[above] {2} (t);
	\path (t) edge[out = 135, in = 45] (p1);
	\path (p2) edge[out = 135, in = 45] (t);
	\path (t) edge[out = 315, in = 225] node[above] {2} (p2);

	\node [transition, above of = tco, node distance = 1.3cm] (u) {$u$};
	\path (p1) edge[out = 90, in = 180] (u);
	\coordinate[yshift = 3pt] (use) at (u.south east);
	\path (use) edge[out = 0, in = 105] node[below] {4} (p2);
	\coordinate[yshift = -3pt] (une) at (u.north east);
	\path (p2) edge[out = 90, in = 0] node[above] {2} (une);
	
	\node [transition, below of = tco, node distance = 1.3cm] (v) {$v$};
	\coordinate[yshift = -3pt] (vnw) at (v.north west);
	\path (p1) edge[out = 285, in = 180] (vnw);
	\coordinate[yshift = 3pt] (vsw) at (v.south west);
	\path (vsw) edge[out = 180, in = 270] node[below] {2} (p1);
	\path (v) edge[out = 0, in = 270] (p2);
\end{tikzpicture}
			\captionof{figure}{\label{Fig:Runexample} Petri net with places $p_1, p_2$, transitions $u,t,v$. 
				Edges are entries of the flow function $F$. 
				We omit the label if it is $1$.
			}
		\end{minipage}
		\hspace{0.3cm}
		\begin{minipage}{.43\textwidth}
			\centering
			\begin{tikzpicture}[
	dot/.style = {circle, fill = black, inner sep = 0pt, minimum height = 2pt},
	reddot/.style = {circle, fill = blue, inner sep = 0pt, minimum height = 2pt},
	greendot/.style = {circle, fill = forestgreen, inner sep = 0pt, minimum height = 2pt}
	]
	
	% Origin and axis
	\coordinate (origin) at (0,0);
	
	% tMinus and borders of Act(t)
	\coordinate (tMinus) at (1,0.5);
	\node[yshift = -5pt] at (tMinus) {\scriptsize $\tMinus$};
	\draw[-] (tMinus) -- ++(1.5,0);
	\draw[-] (tMinus) -- ++(0,2);
	
	% Dots
	\foreach \x in {0,1,...,5}{
		\foreach \y in {0,1,...,5}{
			\node[dot] at ($(origin) + (0.5*\x,0.5*\y)$) {};
		}
	}
	
	% Activation space
	\fill [yellow, opacity = 0.3] (tMinus) rectangle (2.6,2.6);
	
	% Invariant and half space
	\draw[-,shorten >= -3pt, shorten <= -3pt] (1.5,0) -- (0,2.25);
	\foreach \x in {3,...,5}{
		\foreach \y in {1,...,5}{
			\node[reddot] at ($(origin) + (0.5*\x,0.5*\y)$) {};
		}
	}
	\foreach \y in {2,...,5}{
		\node[reddot] at ($(origin) + (1,0.5*\y)$) {};
	}
	
	% M0 and Mf
	\node[greendot] (M0) at (1.5,0.5) {};
	\node[yshift = -7pt, xshift = 3pt] at (M0) {\scriptsize {\color{forestgreen}$\amark_0$}};
	\node[greendot,fill = red] (Mf) at (0,2) {};
	\node[yshift = -7pt] at (Mf) {\scriptsize {\color{red}$\amark_f$}};
	
	% tDelta
	\path[draw, ->, red, shorten >= 1pt] (1,1.5) --
	node[pos = 0.4, above, yshift = 2pt] {\scriptsize $\tDelta$}
	(0.5,2);
	
	% Invisible node for caption matching
	\node [below = 0.7cm of M0, opacity = 0] {$e$};
\end{tikzpicture}
			\captionof{figure}{\label{Fig:Rungeometry} Geometric interpretation of the half space $(k,c)$ in $\Z^2$.
				It is inductive and separates $m_0$ from $m_f$.}
		\end{minipage}
\end{figure}
%---------------------------------------------------------------------
%---------------------------------------------------------------------
%---------------------------------------------------------------------
%---------------------------------------------------------------------
\paragraph{Example}
We provide some geometric intuition.
Consider the Petri net in \autoref{Fig:Runexample}. 
Focus on transition $t$. 
The vectors describing $t$ are \mbox{$\tMinus = (2,1)$} (incoming edges), \mbox{$\tPlus = (1,2)$}  (outgoing edges), and $\tDelta = (-1,1)$.
The activation space of $t$ is \mbox{$\Act{t} = \setcond{(2,1) + (x,y)}{x,y \in \N}$}.
It is visualized by the yellow area in \autoref{Fig:Rungeometry}.
Let {\color{forestgreen}$\amark_0 = (3,1)$} and {\color{red}$\amark_f = (0,4)$}.
Consider the half space defined by $k = (3,2)$ and $c = 9$.
In \autoref{Fig:Rungeometry}, it is indicated by the diagonal line $k \cdot x = c, x \in \Real^2$. 
The set of integer vectors above it is $\Sol{k}{c}$.
Clearly, $\amark_0 \in \Sol{k}{c}$ and $\amark_f \notin \Sol{k}{c}$, the half space separates the markings.
The markings in {\color{blue}$\Act{t} \cap \Sol{k}{c}$} are colored blue in \autoref{Fig:Rungeometry}.
The half space is $t$-inductive:
if $\amark \in \Act{t} \cap \Sol{k}{c}$, firing $t$ does not lead to a marking below the line.
As we will see in \autoref{Section:Charinduct}, $(k,c)$ is also $u$ and $v$-inductive. 
Hence, it proves non-reachability of $\amark_f$ from $\amark_0$.
%---------------------------------------------------------------------
%---------------------------------------------------------------------
%---------------------------------------------------------------------
%---------------------------------------------------------------------
\paragraph{Linear Safety Verification}
Our goal is to find inductive half spaces that disprove reachability or coverability.
Given a Petri net $N$ and two markings $\amark_0,\amark_f$, we study two corresponding algorithmic problems:
the \emph{linear safety verification problem} \probnamereach~for reachability and its coverability variant \probnamecov.
%---------------------------------------------------------------------
%---------------------------------------------------------------------
%---------------------------------------------------------------------
%---------------------------------------------------------------------
\begin{description}
	\item[\probnamereach:] {} Is there an IHS $(k,c)$ with $\amark_0 \in \Sol{k}{c}$ and $\amark_f \notin \Sol{k}{c}$? 
	\item[\probnamecov:]{} Is there an IHS $(k,c)$ with $\amark_0 \in \Sol{k}{c}$ and ${\uparrow\!\amark_f} \cap \Sol{k}{c}=\emptyset$? 
\end{description}
%---------------------------------------------------------------------
%---------------------------------------------------------------------
%---------------------------------------------------------------------
%---------------------------------------------------------------------

%	Formally, the first algorithmic problem we will study is \probnamereach, 
%\emph{linear safety verification (reachability version)}: 
%\begin{quote}
%	{\bfseries Given:} A Petri net $N$ and two markings $\amark_0,\amark_f$.\\
%	{\bfseries Question:} Is there an IHS $(k,c)$ such that $\amark_0 \in \Sol{k}{c}$ and $\amark_f \notin \Sol{k}{c}$? 
%\end{quote}
%Problem \probnamecov, \emph{linear safety verification (coverability version)}, is defined by: 
%\begin{quote}
%	{\bfseries Given:} A Petri net $N$ and two markings $\amark_0,\amark_f$.\\
%	{\bfseries Question:} Is there an IHS $(k,c)$ such that $\amark_0 \in \Sol{k}{c}$ and ${\uparrow\!\amark_f} \cap \Sol{k}{c}=\emptyset$? 
%\end{quote}
%---------------------------------------------------------------------
%---------------------------------------------------------------------
%---------------------------------------------------------------------
%---------------------------------------------------------------------
The reader familiar with separability will note that disproving reachability of $\amark_f$ from~$\amark_0$ amounts to finding a separator between $\mathit{post}^*(\amark_0)$ and $\mathit{pre}^*(\amark_f)$.  
A separator is a set $S\subseteq \N^n$ so that $\mathit{post}^*(\amark_0)\subseteq S$ and $S\cap \mathit{pre}^*(\amark_f)=\emptyset$. 
The difference between separability and linear safety verification is that separators are neither required to be half spaces nor required to be inductive.

The choice for half spaces and inductivity is motivated by the constraint-based approach to safety verification that we pursue. 
Half spaces can be given in terms of $(k, c)$, a format that is computable by a solver.
Inductivity yields a local check for separation.
Indeed, if $(k, c)$ is inductive and $\amark_0 \in \Sol{k}{c}$, we already have $\mathit{post}^*(\amark_0)\subseteq \Sol{k}{c}$.
Similarly, if $(k, c)$ is inductive and $\amark_f \notin \Sol{k}{c}$, then $\mathit{pre}^*(\amark_f)\cap \Sol{k}{c}=\emptyset$.
Hence, $\Sol{k}{c}$ is indeed a separator.
But there are separators that are neither half spaces nor inductive.
To see the latter, consider a transition that is not enabled in $\mathit{post}^*(\amark_0)$ but in a separator $S$.
Firing the transition may lead to a marking outside of $S$ and violate inductivity.

While reachability and coverability are decidable for Petri nets, decidability of \probnamereach~and \probnamecov~is unknown.
Our approach semi-decides both problems.
% REMOVED:
%The difficulty is to synthesize half spaces that are indeed inductive.
%As we will see, the space of IHS decomposes into \emph{trivial} half spaces that are easy to identify as inductive and \emph{non-trivial} half spaces that need to be tested.
%Synthesizing the latter is hard.
%We establish a over-approximates this part.
%
%Once a full characterization of non-trivial IHS is found, a decision procedure is immediate.\flo{es gibt schon eine characterization, eine besser würde zu einer besseren decision procedure führen - falls es eine bessere gibt. das in die conclusion?}
%We tackle the problem by decomposing the set of half spaces into two classes.
%So-called trivial half spaces will always be inductive (and so the check is trivial).
%Non-trivial half spaces are the ones for which inductivity is non-trivial to check. 
%Our main findings are necessary conditions for non-trivial half spaces to be inductive. %given in \autoref{sec:pn:nontrivial}.

\section{Half Spaces}
\label{Section:Charinduct}
In order to synthesize inductive half spaces, we consider the structure of the space of IHS in more detail.
Our goal is to derive a linear constraint system that closely approximates the structure of the space.
The system can then be passed to an SMT-solver to synthesize candidates of half spaces.

Since IHS require inductivity for all transitions, their structure can be convoluted.
Therefore, we do not immediately consider the space of all IHS.
Instead, we first focus on half spaces that are inductive for a single transition.
We derive linear constraints describing these half spaces. 
They are combined in \autoref{Section:Generating} in order to obtain the desired SMT-formula for the space of all IHS.

The set of half spaces that are inductive for a given transition splits into two parts: the \emph{trivial} half spaces and the \emph{non-trivial} ones.
We first focus on the former.
Trivial half spaces were already described in~\cite{Sankaranarayanan2003,Triebel2015}.
They satisfy one of three conditions that immediately imply inductivity and can be easily synthesized.
We provide a formal definition below.
%---------------------------------------------------------------------
%---------------------------------------------------------------------
%---------------------------------------------------------------------
%---------------------------------------------------------------------

The first condition for triviality describes the fact that the vector $k$ and the transition $t$ point into the same direction.
The half space $(k,c)$ is \emph{oriented towards transition $t$} if $k \cdot \tDelta \geq 0$.
Since the scalar product provides information about the angle between $k$ and $\tDelta$, the condition means that firing transition $t$ moves a marking in the half space further away from the border.
To give an example, consider the half space $(k,c)$ with $k = (3,2)$ from~\autoref{Fig:Rungeometry}. 
It is oriented towards transitions $u$ and $v$.
We have $u^\Delta = (-1,2)$ and $v^\Delta = (1,1)$, hence $k \cdot u^\Delta$ and $k \cdot v^\Delta$ are both non-negative.
The half space is not oriented towards $t$ since $t^\Delta = (-1,1)$.
Firing $t$ means moving closer to the border of the half space.

It easy to see that a half space which is oriented towards a transition $t$ is actually $t$-inductive.
This observation is a first step in the synthesis of IHS.
In~fact, note that generating a half space $(k,c)$ that separates two markings $\amark_0$ and $\amark_f$ and that is oriented towards $t$ amounts to finding a solution $(k,c)$ of the linear constraint system
$k \cdot \amark_0 \geq c \wedge k \cdot \amark_f < c \wedge k \cdot \tDelta \geq 0$.

The second condition for triviality uses the fact that for $k \geq 0$, the function $k \cdot \amark$ is monotone on markings.
We call a half space $(k,c)$ \emph{monotone for transition $t$} if $k \geq 0$ and $k \cdot (\tMinus + \tDelta) \geq c$.
Note that with larger markings, $k \cdot m$ grows.
This means if the smallest marking in the half space enabling $t$, namely $\tMinus$, stays within the half space after firing $t$, the same holds for all larger markings.
The requirement is captured in the inequality $k \cdot (\tMinus + \tDelta) \geq c$.
Hence, monotone half spaces are inductive and can be synthesized as solutions of $k \geq 0 \wedge k \cdot (\tMinus + \tDelta) \geq c$.

The last condition is dual to monotonicity.
A half space $(k,c)$ is \emph{antitone for transition $t$} if $k \leq 0$ and $k \cdot \tMinus < c$.
The latter requirement describes that $\tMinus$ does not lie in the half space.
Since $k \leq 0$ this means that $\Act{t} \cap \Sol{k}{c} = \emptyset$.
Hence, antitone half spaces are inductive.
Moreover, they can be generated as solutions to the linear constraints $k \leq 0 \wedge k \cdot \tMinus < c$.
We summarize:
%---------------------------------------------------------------------
%---------------------------------------------------------------------
%---------------------------------------------------------------------
%---------------------------------------------------------------------
\begin{definition}
	A half space $(k,c)$ is \emph{trivial wrt. $t$} if one of the following holds: $(k,c)$ is oriented towards $t$, $(k,c)$ is monotone for $t$, or $(k,c)$ is antitone for $t$.
\end{definition}
%---------------------------------------------------------------------
%---------------------------------------------------------------------
%---------------------------------------------------------------------
%---------------------------------------------------------------------
\begin{theorem}{(\cite{Triebel2015})}\label{Theorem:TrivialInductive}
	If $(k,c)$ is trivial with respect to $t$ then it is $t$-inductive.
\end{theorem}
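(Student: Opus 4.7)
The plan is to handle the three cases of triviality separately, for each one verifying the definition of $t$-inductivity directly: for every $m\in\Act{t}\cap\Sol{k}{c}$ we must check $k\cdot(m+\tDelta)\geq c$. Since $m\in\Act{t}$ means $m=\tMinus+x$ for some $x\in\N^n$, the expression $k\cdot(m+\tDelta)$ always rewrites as $k\cdot(\tMinus+\tDelta)+k\cdot x$, which will make each case essentially a one-line sign argument. Alternatively, one could invoke \autoref{Theorem:inductive} and show that the infeasibility requirement $c\leq k\cdot x+k\cdot\tMinus<c-k\cdot\tDelta$ has no solution; I would mention this as a cleaner route and use it where it shortens the argument.

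First I would dispatch the \emph{oriented} case. If $k\cdot\tDelta\geq 0$ and $m\in\Sol{k}{c}$, then $k\cdot(m+\tDelta)=k\cdot m+k\cdot\tDelta\geq c$, so the activation condition is not even needed here. Next, the \emph{monotone} case: writing $m=\tMinus+x$ with $x\geq 0$, we compute $k\cdot(m+\tDelta)=k\cdot(\tMinus+\tDelta)+k\cdot x\geq c+0$, where the first summand is at least $c$ by assumption and the second is non-negative because $k\geq 0$ and $x\geq 0$. Note that in this case the hypothesis $m\in\Sol{k}{c}$ is actually redundant.

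Finally, the \emph{antitone} case is handled by showing the intersection $\Act{t}\cap\Sol{k}{c}$ is empty, after which $t$-inductivity holds vacuously. For any $m=\tMinus+x\in\Act{t}$ with $x\geq 0$, we get $k\cdot m=k\cdot\tMinus+k\cdot x\leq k\cdot\tMinus<c$, using $k\leq 0$ and $x\geq 0$ for the first inequality and the antitonicity hypothesis for the second. Hence $m\notin\Sol{k}{c}$.

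I do not expect any real obstacle; the content of the theorem is essentially that each triviality condition has been designed exactly so that one of the three obstructions to inductivity (the transition pointing outward, a countermodel above $\tMinus$, or a countermodel enabling $t$ at all) is ruled out. The only mild care needed is to be explicit that in the antitone case the scalar product $k\cdot x$ with $k\leq 0$ and $x\geq 0$ is non-positive, and in the monotone case that it is non-negative; otherwise the proof is immediate from the definitions.
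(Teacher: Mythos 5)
Your proof is correct and takes essentially the same approach as the paper, which also handles the three triviality cases (oriented, monotone, antitone) by direct computation from $m=\tMinus+x$ and the sign of $k\cdot x$. The case-by-case sign arguments match the paper's three lemmas exactly, including the observation that the antitone case makes $\Act{t}\cap\Sol{k}{c}$ empty.
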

%---------------------------------------------------------------------
%---------------------------------------------------------------------
%---------------------------------------------------------------------
%---------------------------------------------------------------------

Non-trivial half spaces are not automatically $t$-inductive.
As an example, consider the half space from \autoref{Fig:Rungeometry}.
Recall that $k = (3,2)$ and $c = 9$.
If we replace $c$ by $c' = 8$, we get that $(k,c')$ is a non-trivial half space that is not $t$-inductive.
We have $k \cdot \tMinus = 8 = c'$ but $k \cdot (\tMinus + \tDelta) = 7 < c'$.
Hence, when firing $t$ from $\tMinus$, we leave $(k,c')$.
This has two implications.
First, we need an algorithm to test whether a non-trivial half space is indeed inductive.
Second, we cannot hope for a simple synthesis as for trivial half spaces.
The former is resolved by the algorithm \ic~which we show in \autoref{Sec:checking}.
For the latter, we develop an independent structure theory in the subsequent section.
\subsection{Non-Trivial Half Spaces}\label{sec:pn:nontrivial}
We consider half-spaces that are non-trivial but inductive.
These are neither oriented towards the transition of interest, nor monotone, nor antitone.
Our first insight is a structural theorem which strongly impacts the synthesis of non-trivial IHS.
In fact, we show that a half space $(k,c)$ which is not oriented towards a transition $t$ but $t$-inductive cannot have positive and negative entries in $k$.
This means we can restrict to $k \geq 0$ or $k \leq 0$ when synthesizing non-trivial IHS.
%By analyzing their structure, we comprehend how restrictive the assumption of triviality is.
%The first concern is this.
%If a half space $(k,c)$ is not oriented towards a transition, triviality requires $k \geq 0$ or $k \leq 0$.
%This means we could be missing IHS where $k$ contains positive and negative entries.
%The main finding is that such IHS cannot exist.
%Instead, triviality is complete in the following sense.
%---------------------------------------------------------------------
%---------------------------------------------------------------------
%---------------------------------------------------------------------
%---------------------------------------------------------------------
\begin{restatable}{theorem}{unmixed}
	\label{Theorem:Unmixed}
	Let $(k, c)$ be a half space that not oriented towards a transition $t$ but $t$-inductive.
	Then, we have $k\geq 0$ or $k\leq 0$.
\end{restatable}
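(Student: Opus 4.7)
The plan is to prove the contrapositive: assume $k$ has both a strictly positive and a strictly negative entry and that $(k,c)$ is not oriented towards $t$, i.e., $k\cdot \tDelta < 0$; we show $(k,c)$ cannot be $t$-inductive. By \autoref{Theorem:inductive} it suffices to exhibit an $x \in \N^n$ with $c \leq k\cdot x + k \cdot \tMinus < c - k\cdot \tDelta$. Setting $L \define -k\cdot \tDelta \geq 1$ and $d \define c - k \cdot \tMinus$, the task reduces to finding $x \in \N^n$ with $k\cdot x \in [d, d+L)$.

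The first step is to understand the set $S \define \{k \cdot x \mid x \in \N^n\} \subseteq \Z$. It contains $0$, is closed under addition, and by the sign assumption on $k$ contains both a strictly positive value (some $k_i$ with $k_i > 0$, realised by a single token on place $i$) and a strictly negative one (analogously, some $k_j$ with $k_j < 0$). A short Bezout-style computation shows that any additive submonoid of $\Z$ with elements of both signs is actually a subgroup: if $s, t > 0$ with $s, -t \in S$, then $(t-1)\,s + s\,(-t) = -s \in S$. Since $S$ is therefore a subgroup of $\Z$ that contains every $k_i$, we obtain $S = g\Z$ with $g \define \gcd\{|k_i| \mid k_i \neq 0\}$.

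The second step is a divisibility argument. Because $g$ divides every entry of $k$ and $\tDelta \in \Z^n$, it divides $k\cdot \tDelta$, hence $g \mid L$, which forces $L \geq g$. The half-open interval $[d, d+L)$ then has length at least $g$ and must contain a multiple of $g$, say $v$. By the first step there exists $x \in \N^n$ with $k\cdot x = v$, and so $k\cdot x + k\cdot \tMinus = v + k\cdot \tMinus \in [c,\, c - k\cdot \tDelta)$. Applying \autoref{Theorem:inductive} contradicts the assumed $t$-inductivity of $(k,c)$.

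The main obstacle is the first step: the passage from submonoid to subgroup is elementary but depends on keeping combinations in $\N$, and pinning down $S$ as precisely $g\Z$ is what unlocks the clean bound $L \geq g$ in step two. Without this identification one would only know that $S$ contains \emph{some} positive element, with no control on its size relative to the narrow window $[d, d+L)$, and the argument would break.
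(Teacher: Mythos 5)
Your proof is correct, and it takes a genuinely different route from the paper's. The paper argues concretely: from the mixed signs of $k$ it constructs, via floor/ceiling arithmetic, a vector $\avec \in \Z^n$ whose value $k\cdot\avec$ lands in the forbidden interval, and then — again exploiting the mixed signs — it builds \emph{nonnegative syzygies} $s_p$ (nonnegative vectors with $k\cdot s_p = 0$) that let it shift $\avec$ into $\Act{t}$ without changing the scalar product, producing an explicit marking that violates $t$-inductivity. You instead study the \emph{value set} $S = \{k\cdot x : x \in \N^n\}$ directly, observe that mixed signs turn this additive submonoid of $\Z$ into a full subgroup $g\Z$ with $g = \gcd\{|k_i| : k_i \neq 0\}$, and finish with a divisibility argument: $g \mid (-k\cdot\tDelta)$ forces the window $[d, d+L)$ to be at least $g$ wide, hence to hit a multiple of $g$, which is realisable. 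The two uses of mixed signs are the mirror images of the same phenomenon — the paper's syzygies are exactly the nonnegative kernel elements that your monoid-to-group computation $(t-1)s + s(-t) = -s$ is implicitly building — but your formulation is arguably cleaner and more explanatory: it makes transparent why the continuous case of \cite{Sankaranarayanan2003} is trivial (there $S$ is dense regardless of signs) and it anticipates the $\gcd$/Frobenius structure that the paper only brings in later for \autoref{corollary_bound_on_c}. One small presentational note: you only spell out the inverse of a positive element of $S$; the symmetric computation $vs + (s-1)(-v) = v$ for a negative element $-v$ is equally short and worth stating to fully justify ``$S$ is a subgroup.''
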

%---------------------------------------------------------------------
%---------------------------------------------------------------------
%---------------------------------------------------------------------
%---------------------------------------------------------------------
The proof of the theorem relies on the notion of \emph{syzygies} known from commutative algebra~\cite{Greuel2002}.
We adapt it to our setting.
A \emph{syzygy of $k$} is a vector $s \in \Z^n$ with $k \cdot s = 0$.
This means that adding a syzygy to a marking $m$ does not change the scalar product with $k$.
We have $k \cdot m = k \cdot (m + s)$.
Hence, if $m \in \Sol{k}{c}$, we get that $m + s \in \Sol{k}{c}$ for all syzygies $s$ of $k$.
We proceed with the proof.
%	Given a finitely generated module $M$ over a ring R and a set $k_1,...,k_n$ of generators, a syzygy of $M$ is an element $(s_1,...,s_n) \in R^n$ for which $s_1 k_1+\cdots +s_n k_n=0$ holds.
%	For our purpose, the intuition behind a syzygy is that it is a vector that returns 0 if it is multiplied with $k$. 
% -------------------------------------------------------------
% -------------------------------------------------------------
% -------------------------------------------------------------
% -------------------------------------------------------------
\begin{proof}
	Assume $(k,c)$ is $t$-inductive and not oriented towards $t$ but there are $i \neq j$ with $k(i) > 0$ and $k(j) < 0$.
	We show that $(k,c)$ cannot be $t$-inductive which contradicts the assumption.
	The idea is as follows.
	 We set $\avecp(i) = \lceil \frac{c}{k(i)} \rceil$ and $\avecp(\ell) = 0$ for $\ell \neq i$. 
	 Note that $\avecp \in \Sol{k}{c}$.
	From $u$, we construct a vector $v \in \Z^n$ that lies in $\Sol{k}{c}$ but $v + \tDelta \notin \Sol{k}{c}$.
	Note that $v$ might not be a proper marking.
	By adding non-negative syzygies to $v$, we obtain a marking $m \in \Act{t} \cap \Sol{k}{c}$ with $m + \tDelta \notin \Sol{k}{c}$.
	Hence, $(k,c)$ is not $t$-inductive.
	%We show that any half space $(k, c)$ that is not oriented towards $t$ and where $k(i)>0$ and $k(j)<0$ for some $i, j\in [1, n]$ cannot be $t$-inductive.
%	Let $(k,c)$ be such an inequality and $\avecp \in \Z^n$ a vector in $\Sol{k}{c}$. 
%	Note that such a vector always exists.
%	We construct from $\avecp$ a vector $\avec$ close to the border of the half space.
%	Then the idea is to lift $\avec$ to a marking $\amark$ by adding non-negative syzygies. 
%	The resulting marking lies in $\Act{t}\cap\Sol{k}{c}$ and shows that $(k,c)$ cannot be $t$-inductive by satisfying $\amark +\tDelta\not\in \Sol{k}{c}$.
		
	The vector $\avec$ is defined by $\avec = \avecp + \lfloor \frac{c - k\cdot \avecp} {k \cdot \tDelta} \rfloor \mal \tDelta \in \Z^n$.
	Since $(k,c)$ is not oriented towards $t$, we have $k \cdot \tDelta < 0$.
	Hence, $\avec$ is well-defined.
	By $\lfloor x \rfloor \geq x - 1$, we obtain the following inequality showing that $\avec \in \Sol{k}{c}$:
	\begin{align*}
		k \cdot \avec
		\geq k \cdot \avecp + \big( \frac{c - k \cdot \avecp}{k \cdot \tDelta} - 1\big) \cdot k \cdot \tDelta
		= c - k \cdot \tDelta 
		\geq c.
	\end{align*}
	
	Similarly, by $\lfloor x \rfloor \leq x$, we obtain that $v + \tDelta \notin \Sol{k}{c}$:
	\begin{align*}
		k \cdot (\avec + \tDelta)
		\leq k \cdot \avecp + \frac{c - k \cdot \avecp}{k \cdot \tDelta} \cdot k \cdot \tDelta + k \cdot \tDelta
		= c + k \cdot \tDelta < c.
	\end{align*}
	
	Note that $v$ is not yet a counter example for $t$-inductivity.
	Indeed, we cannot ensure that $\avec$ is a marking that enables $t$.
	But we can construct such a marking by adding syzygies to $\avec$.
	For a place $p \in P$ let $e_p$ denote the $p$-th unit vector.
	This means $e_p(p) = 1$ and $e_p(q) = 0$ for $q \neq p$.
	For any place $p$, we construct a syzygy $s_p$ defined as follows.
	If $k(p) > 0$, we set $s_p = -k(j)\mal e_p + k(p)\mal e_j$.
	If $k(p) < 0$, we set $s_p = -k(p)\mal e_i + k(i)\mal e_p$.
	For the case $k(p) = 0$, we simply set $s_p = e_p$.
	Note that for all places $p$, we have $s_p \geq 0$ and $k \cdot s_p = 0$.

	The syzygies $s_p$ allow for adding non-negative values to each component of $\avec$ without changing the scalar product with $k$.
	Hence, there exist $\mu_p \in \N$ such that $\avec + \sum_{p \in P} \mu_p \mal s_p \geq \tMinus$.
	By setting $m = \avec + \sum_{p \in P} \mu_p \mal s_p$, we get a marking in $\ACT(t)$ that satisfies $k \cdot m = k \cdot \avec \geq c$ and $k \cdot (m + \tDelta) = k \cdot (\avec + \tDelta) < c$.
	Hence, $m$ contradicts $t$-inductivity of $(k,c)$ and we obtain the desired contradiction.
	\qed
\end{proof}
%---------------------------------------------------------------------
%---------------------------------------------------------------------
%---------------------------------------------------------------------
%---------------------------------------------------------------------

%\input{relevant}

The theorem allows us to assume $k \geq 0$ or $k \leq 0$ when synthesizing non-trivial inductive half spaces.
However, we cannot hope for a compact linear constraint system like we have for trivial half spaces.
The reason is as follows.
Assume we have a constraint system $L(k,c)$ of polynomial size describing the space of $t$-inductive non-trivial half spaces.
Each solution of $L(k,c)$ corresponds to such a half space and vice versa.
We can then decide, in polynomial time, whether a given half space $(k,c)$ is $t$-inductive.
Indeed, an algorithm would first decide whether $(k,c)$ is trivial or non-trivial.
In the former case, $t$-inductivity immediately follows.
In the latter case, the algorithm checks if $(k,c)$ is a solution to $L(k,c)$.
All these steps can clearly be carried out in polynomial time.
However, the algorithm would contradict the \coNP-hardness of checking $t$-inductivity, which we prove in \autoref{Sec:checking}.
Hence, the system $L(k,c)$ of polynomial size cannot exist.

%Non-trivial inequalities do not admit a simple structure like trivial inequalities do.
%In fact, we cannot even hope for a linear constraint system of polynomial size, describing the space of $t$-inductive non-trivial inequalities.
%The reason is as follows: if such a system would exist, call it $L(k,c)$, we could decide $t$-inductivity for each given \pinequ~$(k,c)$ in polynomial time.
%An algorithm for the problem would first test whether $(k,c)$ is trivial or non-trivial.
%In the first case, $t$-inductivity would be immediate.
%In the latter case, the algorithm would plug $(k,c)$ into the system $L(k,c)$ and evaluate.
%Depending on the evaluation, $(k,c)$ would be $t$-inductive or not.
%All the described steps can be carried out in polynomial time.
%Hence, this contradicts the $\NP$-hardness of the problem (in binary encoded input) which we will prove in \autoref{Sec:checking}.

Although a concise constraint system for the space of non-trivial IHS seems out of reach, we can give a close linear approximation.
To this end, we derive two necessary conditions for non-trivial IHS that can be formulated in terms of linear constraints.
The first one is given in the following lemma.
The proof follows from \autoref{Theorem:Unmixed} and from inverting the constraints for trivial half spaces.
%By utilizing the derived structural properties, we obtain a linear constraint system the solutions of which are precisely the vectors $k$ that appear in $t$-inductive non-relevant inequalities.
%The constraint system will be of particular importance when synthesizing inductive invariants.
%---------------------------------------------------------------------
%---------------------------------------------------------------------
%---------------------------------------------------------------------
%---------------------------------------------------------------------
\begin{lemma}\label{Lemma:Nontrivial}
	A $t$-inductive half space $(k,c)$ that is non-trivial for $t$ either satisfies (a) $k \geq 0$ and $k \cdot \tMinus < c - k \cdot \tDelta$ or (b) $k \leq 0$ and $k \cdot \tMinus \geq c$.
\end{lemma}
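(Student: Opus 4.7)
The plan is to derive both conditions by a clean case split, using \autoref{Theorem:Unmixed} to restrict the sign of $k$ and then reading off the remaining inequality from the negation of the relevant triviality clause. Concretely, since $(k,c)$ is non-trivial for $t$, it is by definition not oriented towards $t$, not monotone for $t$, and not antitone for $t$. Non-orientation gives $k \cdot \tDelta < 0$, and together with $t$-inductivity this is exactly the hypothesis of \autoref{Theorem:Unmixed}. Applying that theorem yields $k \geq 0$ or $k \leq 0$, which splits the proof into the two cases that correspond to (a) and (b).

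In case $k \geq 0$, I would invoke the failure of monotonicity. Monotonicity for $t$ demands both $k \geq 0$ and $k \cdot (\tMinus + \tDelta) \geq c$. The first conjunct already holds, so the second must fail, giving $k \cdot (\tMinus + \tDelta) < c$. Rewriting this as $k \cdot \tMinus < c - k \cdot \tDelta$ produces condition (a) verbatim.

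In case $k \leq 0$, I would symmetrically invoke the failure of antitonicity. Antitonicity requires $k \leq 0$ together with $k \cdot \tMinus < c$; since the sign constraint holds, the remaining conjunct must fail, yielding $k \cdot \tMinus \geq c$, which is condition (b).

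I expect no real obstacle: once \autoref{Theorem:Unmixed} is available, the argument is purely bookkeeping against the three trivialities. The only point that merits a sentence of care is checking that the hypotheses of \autoref{Theorem:Unmixed} are met, i.e.\ that non-triviality indeed implies $(k,c)$ is not oriented towards $t$, which is immediate from the definition of trivial and the fact that orientation is one of the three sufficient conditions. Everything else is algebraic rearrangement.
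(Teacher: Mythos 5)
Your proof is correct and follows exactly the route the paper indicates: apply \autoref{Theorem:Unmixed} (whose hypothesis, $k\cdot\tDelta<0$, is supplied by the failure of orientation) to obtain the sign split, then in each branch invert the remaining triviality condition (monotonicity when $k\geq 0$, antitonicity when $k\leq 0$) to read off the inequality. The paper states precisely this argument in one line, so your proposal is a faithful elaboration of the intended proof.
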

%---------------------------------------------------------------------
%---------------------------------------------------------------------
%---------------------------------------------------------------------
%---------------------------------------------------------------------

%Note that \autoref{Theorem:Unmixed} is essential for proving the lemma.
%The statement does not follow immediately from negating the definition of trivial inequalities.
The lemma provides geometric intuition to separate non-trivial from trivial half spaces.
If $(k,c)$ is non-trivial, $\Sol{k}{t} \cap \Act{t}$ is a strict non-empty subset of the activation space $\Act{t}$.
This stands in contrast to the trivial case.
Here, $(k,c)$ is either oriented towards $t$ or the following holds.
If $(k,c)$ is monotone, we have $\Sol{k}{t} \cap \Act{t} = \Act{t}$ 
and if $(k,c)$ is antitone, we have $\Sol{k}{t} \cap \Act{t} = \emptyset$.

We employ \autoref{Lemma:Nontrivial} to derive a further necessary condition for non-trivial half spaces.
It provides a lower bound for the absolute values of the vector $k$.
%To this end, we invoke the criterion of inductivity given in \autoref{Theorem:inductive}.
% where  $\Sol{k}{t} \cap \Act{t}$ is either empty or (nearly) the whole activation space.
%Moreover, \autoref{Lemma:Nontrivial} is a tool for proving the following necessary criterion.
%---------------------------------------------------------------------
%---------------------------------------------------------------------
%---------------------------------------------------------------------
%---------------------------------------------------------------------
\begin{restatable}{lemma}{NecessNontrivial}\label{Lemma:NecessNontrivial}
	Let $(k,c)$ be a $t$-inductive half space that is non-trivial for $t$.
	For any entry $k(i)$ of $k$, with $|k(i)|$ denoting its absolute value, we have:
	%Then, for any place $p$, it holds 
	\begin{align}
	k(i) = 0\; \lor \; \abs{k(i)} \geq - k \cdot \tDelta  \tag{5}
	\end{align}
\end{restatable}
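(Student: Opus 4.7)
The plan is to argue by contraposition via \autoref{Theorem:inductive}. Suppose, toward a contradiction, that some index $i$ satisfies $0 < \abs{k(i)} < -k\cdot\tDelta$; note that the right-hand side is strictly positive because a non-trivial $t$-inductive $(k,c)$ is not oriented towards $t$, hence $k\cdot\tDelta<0$. I would then exhibit an $x\in\N^n$ with
\[
  c \leq k\cdot x + k\cdot\tMinus < c - k\cdot\tDelta,
\]
which by \autoref{Theorem:inductive} contradicts $t$-inductivity.

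The candidates I would consider are $x = \lambda\cdot e_i$ with $\lambda\in\N$, so that $k\cdot x = \lambda\cdot k(i)$ ranges over integer multiples of $k(i)$. The condition on $x$ then becomes $\lambda\cdot k(i)\in I$, where $I = [\,c - k\cdot\tMinus,\; c - k\cdot\tDelta - k\cdot\tMinus\,)$ is a half-open integer interval of length exactly $-k\cdot\tDelta$. Since by assumption $\abs{k(i)}$ is strictly less than this length, consecutive multiples of $k(i)$ are spaced more densely than the width of $I$, so some multiple of $k(i)$ must land inside $I$. The only remaining issue is to ensure the multiplier $\lambda$ can be chosen in $\N$ (i.e., non-negative).

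This is where I would invoke \autoref{Lemma:Nontrivial} to split into its two cases. In case (a), $k\geq 0$ and $k\cdot\tMinus < c - k\cdot\tDelta$, so $k(i)>0$ and the right endpoint of $I$ is positive. Either $0\in I$ (which happens precisely when $k\cdot\tMinus\geq c$, giving $\lambda=0$), or $I$ lies strictly in $(0,\infty)$, in which case $\lambda = \lceil (c-k\cdot\tMinus)/k(i)\rceil$ produces a positive multiple of $k(i)$ inside $I$. Case (b) is dual: $k\leq 0$ and $k\cdot\tMinus\geq c$ force the left endpoint of $I$ to be $\leq 0$, and the same density argument gives a non-positive multiple $\lambda\cdot k(i)$ with $\lambda\geq 0$. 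In either case the witness $x=\lambda\cdot e_i$ is constructed.

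I do not expect a deep obstacle here: the pigeonhole step (interval longer than gap forces a hit) is elementary, and \autoref{Lemma:Nontrivial} has already done the nontrivial work of pinning down the sign pattern of $k$ and locating the forbidden band relative to zero. The main care needed is just disciplined bookkeeping in the boundary subcases where $I$ straddles or touches zero, and in verifying that the multiple produced by the density argument has the correct sign. Once that is handled, \autoref{Theorem:inductive} delivers the contradiction and the lemma follows.
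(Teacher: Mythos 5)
Your proposal is correct and follows essentially the same route as the paper's proof: contraposition via \autoref{Theorem:inductive}, a witness of the form $x = z \cdot e_i$, a split into the sign cases given by \autoref{Lemma:Nontrivial}, and the observation that the target interval has width $-k\cdot\tDelta$ while multiples of $k(i)$ are spaced $\abs{k(i)}$ apart, forcing a hit. The paper phrases the pigeonhole step as a minimality argument on $z$ rather than a density argument on the interval, but the content and the case analysis (including the $\lambda = 0$ boundary subcase when $k\cdot\tMinus \geq c$) are the same.
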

%---------------------------------------------------------------------
%---------------------------------------------------------------------
%---------------------------------------------------------------------
%---------------------------------------------------------------------

The idea behind the lemma is the following.
If the absolute value of an entry of $k$ is too small then we can construct a vector $x \in \N^n$ such that $k \cdot x + k \cdot \tMinus$ lies between $c$ and $c - k \cdot \tDelta-1$.
This violates the condition stated in \autoref{Theorem:inductive}. 

\section{Generating Invariants}
\label{Section:Generating}
We combine the conditions from \autoref{Section:Charinduct} to formulate a linear SMT-formula $\phi$ approximating the space of inductive half spaces.
A solution to $\phi$ is a vector $k$ that potentially forms an IHS.
To keep the constraints in the formula linear, we cannot generate a corresponding constant $c$ immediately.
Instead, we replace $c$ by bounds imposed by \probnamereach~and \probnamecov~and generate candidates for $c$ in a second synthesis step with the algorithm \cg.
The algorithm is given in \autoref{Sec:checking}.

Recall that in \probnamereach, we are interested in finding an inductive half space $(k,c)$ that separates an initial marking $\amark_0$ from a marking $\amark_f$.
Phrased differently, we want $\amark_0 \in \Sol{k}{c}$ and $\amark_f \notin \Sol{k}{c}$.
The former implies that $k \cdot \amark_0 \geq c$, the latter implies $k \cdot \amark_f < c$.
The inequalities yield that $k \cdot \amark_0 > k \cdot \amark_f$ and impose two bounds on $c$, namely $c \in [k \cdot \amark_f + 1, k \cdot \amark_0]$.
We apply the bounds to the constraints obtained for trivial half spaces and derive the following conditions:
\begin{minipage}{.4\textwidth}
	\begin{align}
		&k\mal \amark_0>k\mal \amark_f \tag{0}\label{eq:gen0}\\
		&k \mal t^\Delta \geq 0 \tag{1}\label{eq:gen1}
	\end{align}
\end{minipage}
\begin{minipage}{.5\textwidth}
	\begin{align}
		&k\leq 0 \; \land \; k\mal t^- < k\mal m_0 \tag{2}\label{eq:gen2}\\
		&k\geq 0 \; \land \; k\mal t^- > k\mal m_f -k\mal \trans \tag{3}\label{eq:gen3}
	\end{align}
\end{minipage}
\medskip
%---------------------------------------------------------------------
%---------------------------------------------------------------------
%---------------------------------------------------------------------
%---------------------------------------------------------------------
%				\begin{align}
%				&k\mal \amark_0>k\mal \amark_f \tag{0}\label{eq:gen0}\\
%&k \mal t^\Delta \geq 0 \tag{1}\label{eq:gen1}\\
%&k\leq 0 \; \land \; k\mal t^- < k\mal m_0 \tag{2}\label{eq:gen2}\\
%&k\geq 0 \; \land \; k\mal t^- > k\mal m_f -k\mal \trans \tag{3}\label{eq:gen3}
%				\end{align}

Each of the conditions (\ref{eq:gen1}), (\ref{eq:gen2}), and (\ref{eq:gen3}) models a type of trivial half spaces.
For instance, (\ref{eq:gen1}) describes half spaces that are oriented towards $t$.
Together with (\ref{eq:gen0}), we ensure $t$-inductivity for some $c$ within the bounds.
To describe non-trivial half spaces, we employ~\autoref{Theorem:Unmixed} and \autoref{Lemma:NecessNontrivial}. 
We derive the following constraints: \newline
%Each single condition (\ref{eq:gen1})-(\ref{eq:gen3}) together with (\ref{eq:gen0}) ensures trivial t-inductivity for some values of $c$ within the limits.
%If none of these condition hold, then we get non-trivial inequalities. In this case ~\autoref{Theorem:Unmixed} and \autoref{Lemma:NecessNontrivial} apply and we get \ref{eq:gen4} and \ref{eq:gen5} respectively:
%---------------------------------------------------------------------
%---------------------------------------------------------------------
%---------------------------------------------------------------------
%---------------------------------------------------------------------
\begin{minipage}{.4\textwidth}
	\begin{align}
		k\geq 0 \lor k\leq 0 \tag{4}\label{eq:gen4}
	\end{align}
	\end{minipage}
	\begin{minipage}{.5\textwidth}
	\begin{align}
		\forall_i ~ k(i) = 0\; \lor \; \abs{k(i)} \geq - k \cdot \tDelta  \tag{5}\label{eq:gen5}
	\end{align}
\end{minipage}
\medskip
%---------------------------------------------------------------------
%---------------------------------------------------------------------
%---------------------------------------------------------------------
%---------------------------------------------------------------------

We collect all the constraints in the SMT-formula $\closedmt$ in order to find the desired inductive half space.
Note that the half space must be separating (\ref{eq:gen0}).
Moreover, it is either trivial, so it satisfies one out of (\ref{eq:gen1}), (\ref{eq:gen2}), and (\ref{eq:gen3}), or it is non-trivial and satisfies (\ref{eq:gen4}) and (\ref{eq:gen5}).
We construct the formula accordingly:
\begin{align*}
	\closedmt \define (\ref{eq:gen0}) \land ((\ref{eq:gen1})\lor (\ref{eq:gen2}) \lor (\ref{eq:gen3}) \lor ((\ref{eq:gen4}) \land (\ref{eq:gen5}))).
\end{align*}

As mentioned above, it is not possible to construct a linear constraint system of polynomial size that captures all $t$-inductive half spaces and yields $k$ and $c$.
However, $\closedmt$ is a tight approximation.
In fact, its solutions are precisely those vectors $k$ that can form a $t$-inductive half space which separates $\amark_0$ from $\amark_f$.
%---------------------------------------------------------------------
%---------------------------------------------------------------------
%---------------------------------------------------------------------
%---------------------------------------------------------------------
\begin{restatable}{lemma}{genc}
	\label{lem:genc}
	There exists a constant $c\in \Z$ such that $(k,c)$ is a $t$-inductive half space with $k \cdot \amark_0 \geq c$ and $k \cdot \amark_f < c$ if and only if $k$ is a solution to $\closedmt$.
\end{restatable}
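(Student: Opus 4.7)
The plan is to prove the equivalence in both directions, with the backward direction handled by cases on which disjunct of $\closedmt$ is satisfied.

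\emph{Forward direction.} Assume $(k,c)$ is $t$-inductive with $k\cdot \amark_0 \geq c > k\cdot \amark_f$. Then~(\ref{eq:gen0}) is immediate. If $(k,c)$ is trivial for $t$, the defining inequality of each sub-type combines with the separation bounds to yield one of~(\ref{eq:gen1}),~(\ref{eq:gen2}),~(\ref{eq:gen3}): orientation gives~(\ref{eq:gen1}); antitonicity gives~(\ref{eq:gen2}) via $k\cdot\tMinus < c \leq k\cdot \amark_0$; monotonicity gives~(\ref{eq:gen3}) via $k\cdot(\tMinus+\tDelta)\geq c > k\cdot \amark_f$. Otherwise $(k,c)$ is non-trivial, and~(\ref{eq:gen4}) follows from \autoref{Theorem:Unmixed} while~(\ref{eq:gen5}) follows from \autoref{Lemma:NecessNontrivial}.

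\emph{Backward direction, trivial disjuncts.} Suppose $k$ solves $\closedmt$. By~(\ref{eq:gen0}) and integrality, the interval $[k\cdot \amark_f+1,\,k\cdot \amark_0]$ is non-empty. If~(\ref{eq:gen1}) or~(\ref{eq:gen2}) holds, take $c \define k\cdot \amark_0$: in the former case $(k,c)$ is oriented, in the latter it is antitone since $k\cdot \tMinus < k\cdot \amark_0 = c$. If~(\ref{eq:gen3}) holds, take $c \define k\cdot \amark_f + 1$; the strict inequality in~(\ref{eq:gen3}), upgraded by integrality, yields $k\cdot(\tMinus+\tDelta)\geq c$, so $(k,c)$ is monotone. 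In each of these sub-cases \autoref{Theorem:TrivialInductive} delivers $t$-inductivity.

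\emph{Backward direction, case~(\ref{eq:gen4})$\,\wedge\,$(\ref{eq:gen5}).} This is the main obstacle, since the witness must necessarily be non-trivial and no closed-form choice of $c$ is apparent. We invoke \autoref{Theorem:inductive}: $t$-inductivity of $(k,c)$ is equivalent to the absence of $x\in\N^n$ placing $k\cdot x + k\cdot \tMinus$ in the half-open window $[c,\,c - k\cdot \tDelta)$. By~(\ref{eq:gen4}) we may assume $k\geq 0$ (the case $k\leq 0$ being symmetric), so $\{k\cdot x : x\in\N^n\}$ is a numerical semigroup generated by the non-zero entries of $k$. Condition~(\ref{eq:gen5}) bounds each generator below by the window length $-k\cdot \tDelta$, guaranteeing that gaps of sufficient size exist in the semigroup. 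One then slides the window over the allowed interval $[k\cdot \amark_f+1,\,k\cdot \amark_0]$ until it aligns with such a gap; this placement is precisely the task performed by the constant generation algorithm \cg~of \autoref{Sec:checking}, whose correctness supplies the desired $c$.
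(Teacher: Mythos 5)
Your forward direction and the three trivial backward cases are correct and follow essentially the same route as the paper: inverting the definitions of oriented/monotone/antitone against the separation bounds, and picking $c = k\cdot \amark_0$ (antitone/oriented) or $c = k\cdot \amark_f + 1$ (monotone) respectively.

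The gap is in the backward direction for the case~(\ref{eq:gen4})$\,\wedge\,$(\ref{eq:gen5}). You write that ``this placement is precisely the task performed by the constant generation algorithm \cg, whose correctness supplies the desired $c$.'' That is not a proof. \cg\ is a \emph{search} procedure: given $k$, it either finds a valid $c$ or correctly reports that none exists. Appealing to its correctness therefore cannot establish that a valid $c$ exists --- which is exactly what this direction of the lemma asserts. Worse, the correctness and termination arguments for \cg\ in \autoref{Sec:checking} (the Frobenius bound of \autoref{thm:frobeniusbound}) are themselves built on the structural theory of \autoref{Section:Charinduct}, so the dependency runs the wrong way. The paper instead proves an explicit auxiliary lemma in the appendix which, for a non-mixed $k$ with $k\cdot \tDelta < 0$ and every non-zero entry bounded below in absolute value by $-k\cdot \tDelta$, \emph{constructs} a $c$ making $(k,c)$ $t$-inductive: $c \define k\cdot \tMinus + 1$ when $k\geq 0$, and $c \define k\cdot \tMinus + k\cdot \tDelta$ when $k\leq 0$, verified directly against the criterion of \autoref{Theorem:inductive}. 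You should give an argument of this concrete form.

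There is a second, more serious issue hidden in your phrase ``one then slides the window over the allowed interval $[k\cdot \amark_f+1,\,k\cdot \amark_0]$ until it aligns with such a gap.'' You assert, but do not argue, that a gap of length $-k\cdot\tDelta$ in the reachable sums can always be placed \emph{inside} the separating interval. Conditions~(\ref{eq:gen4}) and~(\ref{eq:gen5}) guarantee that a suitable gap exists immediately above $k\cdot\tMinus$, but they say nothing about its position relative to $k\cdot \amark_f$ and $k\cdot \amark_0$. This is precisely the content that a proof of this direction must supply, and your sentence only restates the goal. Make the construction explicit and then check (or explain why you may assume) that the resulting $c$ falls in the required range.
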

%---------------------------------------------------------------------
%---------------------------------------------------------------------
%---------------------------------------------------------------------
%---------------------------------------------------------------------

Our goal is to synthesize an IHS that separates $\amark_0$ from $\amark_f$.
Since IHS are $t$-inductive for all transitions $t$, we join all $\closedmt$ in a conjunction $\closedm \define \bigwedge_{t \in T} \closedmt$.
The SMT-formula $\closedm$ describes the desired linear approximation of the space of IHS.
It is a main ingredient of our CEGAR loop outlined in~\autoref{Figure:cegar}.
According to \autoref{lem:genc}, solutions to $\closedm$ are those vectors $k$ that admit a constant $c_t$ for each transition $t$ such that $(k,c_t)$ is $t$-inductive.
The problem is that these $c_t$ may be different for each transition.
Hence, $\closedm$ generates half space candidates and what is left to find is a single value $c$ such that $(k,c)$ is $t$-inductive for each $t$.
We can compute all possible values for $c$ with the algorithm \cg.
A detailed explanation is given in \autoref{Sec:checking}.
Once a common $c$ is found, we have synthesized the desired IHS.
Otherwise, the CEGAR loop starts the refinement.
If a solution $k$ of $\closedm$~does not have a suitable constant $c$ to form an IHS, then neither does any multiple of $k$.
This means we can exclude all multiples  in future iterations of the CEGAR loop.
Let $\mul(k)$ be the formula satisfied by a $k' \in \Z^n$ if and only if there exists an $a \in \N$ such that $a \cdot k = k'$.
Then, the refinement performs the update $\closedm \define \closedm \land \neg \mul(k)$.
The following lemma states correctness.
%---------------------------------------------------------------------
%---------------------------------------------------------------------
%---------------------------------------------------------------------
%---------------------------------------------------------------------
\begin{restatable}{lemma}{lemmamult}
	Let $k'\define a\mal k$ with $a\in \N$. If $(k',c)$ is an IHS, then so is $(k, \lceil \frac c a \rceil )$.
\end{restatable}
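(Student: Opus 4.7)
The plan is to verify inductivity directly, transition by transition. Set $c' \define \lceil c/a \rceil$ (assuming $a \geq 1$, since otherwise the statement of the lemma is vacuous). Fix an arbitrary transition $t \in T$ and a marking $m \in \Sol{k}{c'} \cap \Act{t}$. The goal is to show $m + \tDelta \in \Sol{k}{c'}$.

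First I would lift $m$ into $\Sol{k'}{c}$: since $k\cdot m \geq c'$ and $a \geq 1$, one obtains
\begin{align*}
k'\cdot m = a\,(k \cdot m) \geq a \cdot \lceil c/a \rceil \geq c,
\end{align*}
so indeed $m \in \Sol{k'}{c}$. Combined with $m \in \Act{t}$, inductivity of $(k',c)$ yields $k'\cdot (m + \tDelta) \geq c$, that is $a\,(k\cdot (m + \tDelta)) \geq c$, hence $k\cdot (m+\tDelta) \geq c/a$. Because $k\cdot (m+\tDelta)$ is an integer, this inequality can be sharpened to $k\cdot (m+\tDelta) \geq \lceil c/a \rceil = c'$, which is the desired conclusion $m+\tDelta \in \Sol{k}{c'}$.

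Since $t$ was arbitrary, this establishes $t$-inductivity for every $t\in T$, proving that $(k, c')$ is an IHS. There is no real obstacle here: the argument is a straightforward chase through the definitions, and the only arithmetic facts used are the scaling identity $k'\cdot x = a\,(k\cdot x)$, the rounding inequality $a\lceil c/a \rceil \geq c$, and the observation that an integer-valued quantity bounded below by $c/a$ is bounded below by $\lceil c/a\rceil$. The subtlety worth flagging in the write-up is the integrality step, which is exactly what allows the constant to tighten from $c/a$ to its ceiling and is the reason the refinement step in the CEGAR loop can safely discard all multiples of a failed candidate $k$.
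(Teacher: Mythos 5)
Your proof is correct and uses the same arithmetic as the paper (the scaling identity $k'\cdot x = a(k\cdot x)$ together with the ceiling inequality $a\lceil c/a\rceil \geq c$ and integrality to round up to $\lceil c/a\rceil$); the only difference is that you argue inductivity directly while the paper phrases it contrapositively, showing that a marking witnessing non-inductivity of $(k, \lceil c/a\rceil)$ also witnesses non-inductivity of $(k', c)$. These are the same proof up to reorganization.
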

%---------------------------------------------------------------------
%---------------------------------------------------------------------
%---------------------------------------------------------------------
%---------------------------------------------------------------------
%\begin{proof}
%	We use contraposition. 
%	Given a marking $\amark$ that violates t-inductivity of $(k, \lceil \frac c a \rceil )$. 
%	We show that it violates t-inductivity of $(k',c)$ as well.
%	
%	If it holds $k\mal \amark \geq \lceil \frac c a \rceil$, then it follows $ k\mal \amark \geq \frac c a$ from $\lceil \frac c a \rceil\geq \frac c a$ and thus $ k'\mal \amark \geq c$.
%	 %Note that $\frac{c}{a}$ may not be an integer but $k\mal \amark$ is an integer that is at least  $\frac{c}{a}$ and thus at least $\lceil \frac c a \rceil$. This means  $(k, \lceil \frac c a \rceil )$ holds. 
%	If it holds  $k\mal (\amark +\tDelta)<\lceil \frac c a \rceil $ then it follows 
%	 $k\mal (\amark +\tDelta)+1 \leq \lceil \frac c a \rceil $. Finally, we conclude  
%	 $k'\mal (\amark +\tDelta)<c$ using $\lceil \frac c a \rceil \leq \frac c a +1$.\qed
%\end{proof}
%This lemma shows that our method of excluding multiples does not forbid vectors that may form an IHS. If a multiple $k'=a\mal k$ of $k$ could form an IHS $(k',c)$, then $k$ would have already formed an IHS $(k, \lceil \frac c a \rceil )$.

The presented CEGAR approach generates inductive half spaces.
In order to semi-decide \probnamereach, our approach needs to yield an IHS whenever we are given a yes-instance.
This means we need to ensure that any candidate vector $k$ is generated by the SMT-solver at some point so that we do not miss possible IHS.
This is achieved by adding a constraint imposing a bound on the absolute values of the entries of $k$.
If the formula becomes unsatisfiable, the bound is increased.
It remains to show how our semi-decider for \probnamereach~can be adapted to \probnamecov.
%---------------------------------------------------------------------
%---------------------------------------------------------------------
%---------------------------------------------------------------------
%---------------------------------------------------------------------
\paragraph{Coverability}
Recall that a solution $k$ of $\closedm$ satisfies Condition (\ref{eq:gen0}).
It ensures the existence of a value $c$ such that $k \cdot \amark_0 \geq c$ and $k \cdot \amark_f < c$, meaning $\amark \in \Sol{k}{c}$ and $\amark_f \notin \Sol{k}{c}$.
While this is sufficient for disproving reachability, it is not for coverability.
When we solve \probnamecov, we need to additionally guarantee that $\uparrow\!\amark_f \cap \Sol{k}{c}$ is empty.
It turns out that this requirement can be captured by a simple modification of $\closedm$. We only need to ensure that $k$ is negative.
%---------------------------------------------------------------------
%---------------------------------------------------------------------
%---------------------------------------------------------------------
%---------------------------------------------------------------------
\begin{theorem}\label{Theorem:CoverNegative}
	Let $(k,c)$ be a half space (not necessarily inductive) such that $\amark_f\notin \Sol{k}{c}$.
	Then we have $\uparrow\!\amark_f \cap \Sol{k}{c} =\emptyset$ if and only if $k \leq 0$.
\end{theorem}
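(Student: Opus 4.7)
The plan is to prove the two directions of the biconditional separately, using the direct definitions of $\uparrow\!\amark_f$ and $\Sol{k}{c}$. The hypothesis $\amark_f \notin \Sol{k}{c}$ gives us the strict inequality $k \cdot \amark_f < c$, which anchors both directions.

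For the direction $k \leq 0 \Rightarrow \uparrow\!\amark_f \cap \Sol{k}{c} = \emptyset$, I would take an arbitrary $\amark \in\, \uparrow\!\amark_f$ and write $\amark = \amark_f + x$ with $x \in \N^n$, so that $x \geq 0$. Then
\begin{align*}
k \cdot \amark \;=\; k \cdot \amark_f + k \cdot x \;\leq\; k \cdot \amark_f \;<\; c,
\end{align*}
where the middle inequality uses $k \leq 0$ and $x \geq 0$ componentwise. Hence $\amark \notin \Sol{k}{c}$, and the intersection is empty.

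For the contrapositive of the other direction, I would assume $k \not\leq 0$, i.e. there exists some index $i$ with $k(i) > 0$, and exhibit a marking in $\uparrow\!\amark_f \cap \Sol{k}{c}$. Let $e_i$ denote the $i$-th unit vector and pick $\amark \define \amark_f + N \cdot e_i$ for some $N \in \N$ to be chosen. Clearly $\amark \geq \amark_f$, so $\amark \in\, \uparrow\!\amark_f$, and we compute
\begin{align*}
k \cdot \amark \;=\; k \cdot \amark_f + N \cdot k(i).
\end{align*}
Since $k(i) > 0$, choosing any $N \geq \lceil (c - k \cdot \amark_f)/k(i) \rceil$ forces $k \cdot \amark \geq c$, so $\amark \in \Sol{k}{c}$. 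Thus the intersection is non-empty, contradicting the assumption.

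There is no real obstacle here; the statement is essentially a monotonicity observation. The only subtlety is making sure the constructed witness $\amark$ remains in $\N^n$, which is automatic since we add a non-negative multiple of a unit vector to a marking. I would present the two directions in roughly four lines each, with the hypothesis $k \cdot \amark_f < c$ explicitly invoked so the reader sees why strict inequality of $\amark_f \notin \Sol{k}{c}$ matters on the $(\Leftarrow)$ side.
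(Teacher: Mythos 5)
Your proof is correct and follows essentially the same route as the paper's: the $(\Leftarrow)$ direction is the same monotonicity computation, and for $(\Rightarrow)$ the paper also adds tokens to a single place $i$ with $k(i)>0$, choosing the amount $c - k\cdot\amark_f$ directly (and using $k(i)\geq 1$) rather than dividing by $k(i)$ as you do; this is a cosmetic difference only.
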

%---------------------------------------------------------------------
%---------------------------------------------------------------------
%---------------------------------------------------------------------
%---------------------------------------------------------------------
The intuition is as follows. 
If $k\leq 0$ does not hold, then we can start with $\amark \define \amark_f$ and put tokens into a place $i$ with $k_i>0$ until $k \mal \amark\geq c$. This means $k\leq 0$ is sufficient and necessary.
Each solution $k$ of $\closedm$ satisfies $\amark_f\notin \Sol{k}{c}$ for some $c$.
In order to disprove coverability, we apply \autoref{Theorem:CoverNegative}~and add constraint $k \leq 0$ to $\closedm$.
This ensures that any synthesized IHS separates $\amark_0$ from $\uparrow\!\amark_f$.
\section{Checking Inductivity}
\label{Sec:checking}

We present the algorithms \ic~and \cg.
The former decides $t$-inductivity for a given half space $(k,c)$ and transition $t$.
The latter is an instrumentation of \ic~capable of synthesizing all constants $c$ such that $(k,c)$ is $t$-inductive, if only the vector $k$ is given.
%We examine the problem of deciding $t$-inductivity for a given half space $(k,c)$ and transition $t$.
%We design an algorithm that solves this.
%Moreover, we show how the algorithm can be instrumented to compute all $c$ such that $(k,c)$ is an IHS if only $k$ is given.
\cg~constitutes the remaining bit of our CEGAR loop.
Finally, we show that deciding $t$-inductivity is an \coNP-complete problem.
The proof once again employs a connection to discrete mathematics.
%---------------------------------------------------------------------
%---------------------------------------------------------------------
%---------------------------------------------------------------------
%---------------------------------------------------------------------
\subsection{Algorithms}
We start with the \emph{inductivity checker} (\ic).
Fix a half space $(k,c)$ and a transition $t$.
We need to decide whether $(k,c)$ is $t$-inductive.
If $(k,c)$ is trivial with respect to $t$, then inductivity follows from \autoref{Theorem:TrivialInductive}.
Hence, we assume that $(k,c)$ is non-trivial.
%For a given trivial half space $(k,c)$, $t$-inductivity immediately follows from \autoref{Theorem:TrivialInductive}.
%Hence, we may assume that the input $(k,c)$ is non-trivial.
The idea of \ic~is to algorithmically check the constraint formulated in \autoref{Theorem:inductive} via dynamic programming.
Roughly, the algorithm searches for a value $k \cdot \amark$, where $\amark \in \act{t}$, that lies in the \emph{target interval} $[c,c-k \cdot \tDelta - 1]$.
If such a value can be found, $(k,c)$ is not $t$-inductive.
Otherwise, it is $t$-inductive.

To state \ic, we adapt \autoref{Theorem:inductive}.
Let $K\define \setcond{k(i)}{i \in [1,n]}$ contain all entries of the the given vector $k$.
We consider sequences $\sumk{1} \ldots \sumk{\ell} \in K^*$.
Note that $\sumk{i}$ does not denote the $i$-th entry of $k$ but the $i$-th element in the sequence.
Then, $(k,c)$ is $t$-inductive if and only if there does not exist a sequence $\sumk{1} \ldots \sumk{\ell}$ with
\begin{align}
	c \leq k \cdot \tMinus + \sum_{i=1}^{\ell} \sumk{i} < c - k \cdot \tDelta. \tag{6} \label{eq:sumkthm}
\end{align}
%---------------------------------------------------------------------
%---------------------------------------------------------------------
%---------------------------------------------------------------------
%---------------------------------------------------------------------
\begin{algorithm}
	\caption{\emph{Inductivity Checker} (\ic)}
	\label{alg:Infinum}
	queue.add($ k \mal t^-$)\;
	reached[$ k \mal t^-$]$\define$ True\;
	\Repeat{queue.isEmpty}{
		$current$:=queue.remove()\;
		\If{$c\leq current <c-k\mal \trans $}{\label{alg:ifuninductive}
			\Return Not inductive\;
		}
		\For{$k\in K$}{
			\If{$(current +k <c-k\mal \trans \land k\geq 0)$ \label{alg:ifadda}\\
				$\lor (current +k \geq c \land k\leq 0) $} {	\label{alg:ifaddb}
				\If{$\neg$reached[$current+k$]}{
					queue.add($current+k$)\label{alg:queueadd}\;
					reached[$current+k$]:=True
				}
			}	
		}
	}
	\Return Inductive
\end{algorithm}
%---------------------------------------------------------------------
%---------------------------------------------------------------------
%---------------------------------------------------------------------
%---------------------------------------------------------------------

\ic~is stated as \autoref{alg:Infinum}.
It searches for a sequence in $K^*$ satisfying~(\ref{eq:sumkthm}).
Recall that we assumed $(k,c)$ to be non-trivial.
Then, according to \autoref{Theorem:Unmixed}, $k$ does not contain both, positive and negative entries.
\ic~starts at $k \cdot \tMinus$ and iteratively adds values of $K$ until it either reaches the target interval $[c,c-k \cdot \tDelta - 1]$ or finds that none such value is reachable.
To this end, \ic~employs dynamic programming.
This avoids recomputing the same value and speeds up the running time.
An example of a run of \ic~is illustrated in \autoref{fig:run}.
If the currently reached value lies below the target interval, at least one value of $K$ has yet to be added.
Once we overshoot the target interval, we can exclude the current value and go to the next one in the queue.
When we hit the interval, we can report non-inductivity.

In the appendix we show that \ic~is correct.
Moreover, we prove that it runs in pseudopolynomial time.
That is, polynomial in the values $k$,$c$, and $t$, or exponential in their bit size.
Note that this does not contradict the \coNP-hardness of checking $t$-inductivity which we prove below.
%---------------------------------------------------------------------
%---------------------------------------------------------------------
%---------------------------------------------------------------------
%---------------------------------------------------------------------
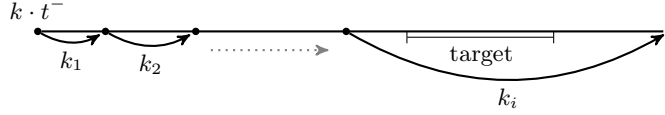
\begin{figure}
	\begin{center}
		\begin{tikzpicture}[->,>=stealth',shorten >=1pt,auto,node distance=1cm,
		thick,main/.style={draw,circle,fill=black,minimum size=2pt,
			inner sep=0pt}]
		
		\node[main, label=above:$k \cdot t^-$] (start) {};
		\node[main] (k1) [ right of=start,xshift=-0.1cm] {};
		\node[main] (k2) [ right of=k1,,xshift=0.2cm] {};
		\node[main] (kistart) [ right of=k2,xshift=1cm] {};
		\node[] (c) [ right of=kistart,xshift=-0.2cm] {};
		
		\node[] (ki2) [ right of=c] {};
		\node[] (c-kt-) [ right of=ki2] {};
		
		\node[] (ki1) [ right of=c-kt-,xshift=-0.5cm] {};
		\node[main] (ki) [ right of=ki1] {};
		
		\draw[-](start) to node[] {} (ki);
		\draw[->,bend right](start) to node[below] {$\sumk{1}$} (k1);
		\draw[->,bend right](k1) to node[below] {$\sumk{2}$} (k2);
		\draw[dotted, gray]([yshift=-2mm,xshift=2mm] k2.south) to node[] {} ([yshift=-2mm,xshift=-2mm] kistart.south);
		\draw[|-|,thin]([yshift=-2mm] c.north) to node[below] {target} ([yshift=-2mm] c-kt-.north);
		
		\draw[->,bend right](kistart) to node[below] {$\sumk{i}$} (ki);
		\end{tikzpicture}
	\end{center}
	\caption{Example run of \autoref{alg:Infinum} (\ic) in the case $k \geq 0$.
		It starts at value $k \cdot \tMinus$.
		The algorithm adds values of $K$ until it either overshoots the target interval or hits it.}
	\label{fig:run}
\end{figure}
\vspace*{-0.5cm}
%---------------------------------------------------------------------
%---------------------------------------------------------------------
%---------------------------------------------------------------------
%---------------------------------------------------------------------
\paragraph{Constant Generation}
Given a vector $k$ and a transition $t$, \ic~can be instrumented to compute all values $c$ such that $(k,c)$ is $t$-inductive.
We refer to the instrumentation as \emph{constant generation algorithm} (\cg).
\cg~computes all necessary sums $k \cdot \tMinus + \sum_{i=1}^\ell k_i$ with $k_1 \dots k_i \in K^*$ and returns all $c$ such that $[c,c-k \cdot \tDelta - 1]$ does not contain any of the computed sums.
Intuitively, we fit the interval between these sums.
Note that each of the returned values $c$ satisfies the characterization of $t$-inductivity as stated in~(\ref{eq:sumkthm}).
We show that \cg~is correct and terminates.

For termination, we need the so-called \emph{Frobenius number}~\cite{10.2307/2371684}.
Let $a \in \N^n$ be a vector such that $\gcd(a) = \gcd(a(1), \dots, a(n)) = 1$.
Here, $\gcd$ denotes the \emph{greatest common divisor}.
The \emph{Frobenius number of $a$} is the largest integer that cannot be represented as a positive linear combination of $a(1), \dots, a(n)$.
The number exists and is bounded by $a_{\max} \cdot a_{\min}$, where $a_{\max}$ is the largest and $a_{\min}$ the smallest entry of $a$~\cite{10.2307/2371684}.
Note that this means that each value $x \geq a_{\max} \cdot a_{\min}$ can be represented as a positive linear combination $x = a \cdot m$ with $m \in \N^n$.

This has implications for \cg.
Assume we are given a vector $k \geq 0$ with $\gcd(k) = 1$.
The possible values of $c$ such that $(k,c)$ is $t$-inductive cannot exceed $k \cdot \tMinus + k_{\max} \cdot k_{\min}$.
Otherwise, the interval $[c,c-k \cdot \tDelta - 1]$ will contain a linear combination of the form $k \cdot \tMinus + \sum_{i=1}^\ell k_i$ which breaks the inductivity requirement~(\ref{eq:sumkthm}). 
The argument can be generalized for any $gcd(k)\geq 1$:
%---------------------------------------------------------------------
%---------------------------------------------------------------------
%---------------------------------------------------------------------
%---------------------------------------------------------------------
\begin{restatable}{theorem}{boundonc}
	\label{thm:frobeniusbound}
	\label{corollary_bound_on_c}
	Let $(k,c)$ be a non-trivial $t$-inductive half space and let $k_{\max}, k _{\min}$ denote the entries of $k$ with maximal and minimal absolute value.
	\begin{enumerate}
		\item If $k \geq 0$, we have $c < k_{\max} \cdot k_{\min} + k \cdot \tMinus$.
		\item If $k \leq 0$, we have $c \geq -k_{\max} \cdot k_{\min} + k \cdot \tMinus$.
	\end{enumerate}
\end{restatable}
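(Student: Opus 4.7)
The plan is to prove both bounds by contradiction, leveraging the Sylvester--Frobenius representation theorem in combination with the characterisation of non-$t$-inductivity from \autoref{Theorem:inductive}: $(k,c)$ is not $t$-inductive iff there exists $x \in \N^n$ with $c \leq k\cdot x + k\cdot \tMinus < c - k\cdot \tDelta$. In each case I shall assume the stated bound on $c$ fails and construct such an $x$.

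Throughout I set $\Delta := -k\cdot \tDelta$, which is strictly positive since a non-trivial half space is, by definition, not oriented towards $t$. Let $d$ denote the gcd of the non-zero entries of $k$. Every value $k\cdot x$ with $x\in\N^n$ is a multiple of $d$, and $\Delta$ itself is a positive multiple of $d$, so $\Delta \geq d$. This is the interval-length ingredient that ensures the target range for $k\cdot x$ always contains at least one candidate multiple of $d$.

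For case 1 ($k\geq 0$), the contradiction hypothesis $c - k\cdot \tMinus \geq k_{\max}\cdot k_{\min}$ places the entire target interval $[c - k\cdot\tMinus,\, c - k\cdot \tMinus + \Delta - 1]$ above the Frobenius threshold. Since its length is $\Delta \geq d$, it contains a multiple $y^*$ of $d$; by the Sylvester--Frobenius theorem applied to $k/d$ (whose entries have gcd $1$), $y^*$ is representable as $k\cdot x$ for some $x\in\N^n$. This $x$ falsifies $t$-inductivity. For case 2 ($k\leq 0$) I argue dually. Setting $M := k\cdot\tMinus - c$, the contradiction hypothesis becomes $M > |k_{\max}|\cdot|k_{\min}|$, and I seek $x$ with $(-k)\cdot x \in [M - \Delta + 1,\, M]$. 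Take $y^*$ to be the largest multiple of $d$ with $y^* \leq M$; then $M - d < y^* \leq M$, so $\Delta \geq d$ places $y^*$ inside the interval. A short calculation, using the fact that $|k_{\max}|$ and $|k_{\min}|$ are themselves multiples of $d$, gives $y^* \geq |k_{\max}|\cdot|k_{\min}|/d$, which is the sharp Frobenius threshold for the normalised vector $(-k)/d$; hence $y^* = (-k)\cdot x$ for some $x\in\N^n$, and $x$ is the desired witness.

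The main obstacle is the asymmetry between the two cases. In case 1 the hypothesis controls the entire target interval and a loose application of the Frobenius bound suffices. In case 2 the hypothesis only controls the upper endpoint, so one must carefully pick the topmost representable multiple of $d$ in the interval and invoke the sharp threshold $|k_{\max}|\cdot|k_{\min}|/d$ (not the coarser $|k_{\max}|\cdot|k_{\min}|$ quoted earlier) to guarantee representability. Verifying this inequality is routine once one notices that every entry of $k$ is a multiple of $d$, so that $|k_{\max}|,|k_{\min}|\geq d$.
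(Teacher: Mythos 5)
Your argument follows the paper's proof in all essential respects: both rely on the Frobenius number of $k/\gcd(k)$, both use the observation that $-k\cdot\tDelta$ is a positive multiple of $\gcd(k)$ (so the target interval of length $-k\cdot\tDelta$ from \autoref{Theorem:inductive} must contain a representable multiple of $\gcd(k)$), and both close the argument by contradiction with $t$-inductivity. The only structural difference is that you fold the paper's intermediate step (its Lemma~\ref{theorem_fin_many_c}, which first bounds $c$ in terms of the exact Frobenius number $y$ before converting to $k_{\max}\cdot k_{\min}$) directly into a single contradiction argument; this is a cosmetic restructuring, not a different route.
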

%---------------------------------------------------------------------
%---------------------------------------------------------------------
%---------------------------------------------------------------------
%---------------------------------------------------------------------

The theorem enforces termination and correctness of \cg.
In fact, we only need to compute sums $k \cdot \tMinus + \sum_{i=1}^\ell k_i$ with $k_1 \dots k_i \in K^*$ up to the limit given in the theorem and still find all values $c$ such that $(k,c)$ is $t$-inductive.
Since the limit is polynomial in the values of $k$ and $t$, \cg~runs in pseudopolynomial time.

We employ \cg~within our CEGAR loop.
Assume we have a solution $k$ to our SMT-formula $\closedm$.
It is left to decide whether there exists a $c \in \N$ such that $(k,c)$ is an IHS.
We apply \cg~to $k$ and each transition $t$.
This yields a set $C_t$ containing all $c_t$ such that $(k,c_t)$ is $t$-inductive and separates $\amark_0$ from $\amark_f$.
Hence, the intersection $\bigcap_{t \in T} C_t$ contains all $c$ such that $(k,c)$ is an IHS that separates $\amark_0$ from $\amark_f$. 
Algorithmically, we only need to test the intersection for non-emptiness.
%---------------------------------------------------------------------
%---------------------------------------------------------------------
%---------------------------------------------------------------------
%---------------------------------------------------------------------
\subsection{Complexity}
We prove that deciding $t$-inductivity for a half space $(k,c)$ and transition $t$ is \coNP-complete.
Membership follows from a non-deterministic variant of \ic.
%Further analysis shows that it is in \textbf{L} for unary input and fixed dimension of $k$, in \textbf{co-NL} for unary input, in \textbf{co-CSL} for binary input, and that it is fixed parameter tractable.
Further analyses show that the problem also lies in \FPT~and in \lang{coCSL}, where \CSL~is the class of languages accepted by context-sensitive grammars.
For unary input, it is in \coNL\ and --- if the dimension of $k$ is fixed --- in \L.
We provide details in the appendix.
The interesting part is \coNP-hardness for which we establish a reduction from the \emph{unbounded subset sum problem}~\cite{Garey:1990:CIG:574848}.
%---------------------------------------------------------------------
%---------------------------------------------------------------------
%---------------------------------------------------------------------
%---------------------------------------------------------------------
\begin{theorem}\label{Theorem:InductivityComplexity}
	Checking $t$-inductivity of a half space $(k,c)$ is \coNP-complete.
\end{theorem}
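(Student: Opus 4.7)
The plan is to prove two directions: membership in \coNP\ and \coNP-hardness via the unbounded subset sum problem.

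For membership, I would exhibit a succinct certificate for \emph{non}-$t$-inductivity. By \autoref{Theorem:inductive}, $(k,c)$ fails to be $t$-inductive iff there is an $x \in \N^n$ with $c \le k\cdot x + k\cdot t^- < c - k\cdot t^\Delta$. I would take such an $x$ as the witness and argue that, whenever one exists, one of polynomial bit-length exists. First, using \autoref{Theorem:TrivialInductive} the verifier can reject (answer ``inductive'') in polynomial time whenever $(k,c)$ is trivial for $t$. Otherwise, by \autoref{Theorem:Unmixed}, we may assume $k \ge 0$ or $k \le 0$ (if $k$ has mixed signs and is not oriented, the constructive proof of \autoref{Theorem:Unmixed} already builds a witness of polynomial bit-size, which the verifier can follow). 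In the case $k \ge 0$, the witness $x$ satisfies $k \cdot x \le c - k\cdot t^- - k \cdot t^\Delta - 1$, so every coordinate with $k(i) > 0$ is bounded by a quantity polynomial in the binary representation of $(k,c,t^-,t^\Delta)$, while coordinates with $k(i) = 0$ may be set to zero. The case $k \le 0$ is symmetric. Since verifying the two inequalities takes polynomial time, this places the complement in \NP, hence the original problem lies in \coNP; the non-deterministic variant of \ic\ mentioned in the paper makes this concrete.

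For \coNP-hardness, I would reduce from the \emph{unbounded subset sum problem}: given $a_1,\dots,a_n \in \N_{>0}$ and $T \in \N$, decide whether there exist $x_1,\dots,x_n \in \N$ with $\sum_i a_i x_i = T$. Dividing everything by $\gcd(a_1,\dots,a_n)$ (and rejecting if it does not divide $T$), we may assume $\gcd(a_1,\dots,a_n) = 1$; the problem remains NP-complete. Set $k \define (a_1,\dots,a_n)$. By the extended Euclidean algorithm, compute in polynomial time a vector $t^\Delta \in \Z^n$ with $k \cdot t^\Delta = -1$. Define $t^-(i) \define \max(0,-t^\Delta(i))$, so that $t^+ \define t^\Delta + t^-$ lies in $\N^n$, and set $c \define T + k \cdot t^-$. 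Then the target interval from \autoref{Theorem:inductive} collapses to $k \cdot x + k \cdot t^- \in [c,\, c]$, i.e.\ $k \cdot x = T$. Hence $(k,c)$ is \emph{not} $t$-inductive iff the subset sum instance admits a solution, giving \NP-hardness of non-inductivity and thus \coNP-hardness of inductivity.

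The main obstacle is the membership step: bounding the witness $x$ by a polynomial in the input bit-size. The pure-sign cases follow from a direct division argument, but the mixed-sign case forces us to invoke \autoref{Theorem:Unmixed} and its proof, whose construction uses syzygies $s_p$ and coefficients $\mu_p$ that must be shown to admit a polynomial-size choice. The reduction itself is then mostly bookkeeping, the only subtlety being to ensure the Bezout coefficients produced for $t^\Delta$, and the resulting $t^-$, have bit-length polynomial in the input, which follows from standard bounds on the extended Euclidean algorithm.
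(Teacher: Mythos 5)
Your proposal is correct and follows essentially the same approach as the paper: membership by exhibiting a polynomially-bounded witness $x$ of non-inductivity via \autoref{Theorem:inductive} (the paper packages this as a non-deterministic variant of \ic\ together with the bound on witness size proved in the appendix), and hardness by reducing from the unbounded subset sum problem by setting $k$ to the weight vector, using Bezout coefficients to build a transition with $k \cdot t^\Delta = -\gcd(k)$, and shifting $c$ by $k \cdot t^-$ so the target interval contains exactly one multiple of $\gcd(k)$. The only cosmetic difference is that you normalize the instance to $\gcd = 1$ up front, whereas the paper keeps the $\gcd$ and argues that $d$, $d+\gcd(k)$, and $k\cdot x$ are all multiples of it; and your worry about the mixed-sign case in the membership direction is unnecessary, since a mixed-sign, non-oriented $k$ is already known to be non-inductive by \autoref{Theorem:Unmixed}, so the verifier can settle that case deterministically without following the syzygy construction.
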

%---------------------------------------------------------------------
%---------------------------------------------------------------------
%---------------------------------------------------------------------
%---------------------------------------------------------------------

Before we elaborate on the reduction, we introduce the \emph{unbounded subset sum problem} (USSP).
An instance consists of a vector $w \in \N^n$ and an integer $d \in \N$.
The task is to decide whether there exists a vector $x \in \N^n$ such that $w \cdot x = d$.
The problem is \NP-complete~\cite{Garey:1990:CIG:574848}.
To prove \autoref{Theorem:InductivityComplexity}, we reduce from USSP to the complement of checking $t$-inductivity.
This yields the desired \coNP-hardness.
%---------------------------------------------------------------------
%---------------------------------------------------------------------
%---------------------------------------------------------------------
%---------------------------------------------------------------------
\begin{proof}
	Let $(w,d)$ be an instance of USSP.
	We construct a half space $(k,c)$ and a Petri net with a transition $t$ such that $(k,c)$ is not $t$-inductive if and only if there is an $x \in \N^n$ such that $w \cdot x = d$.
	We rely on the inductivity criterion from \autoref{Theorem:inductive}.
	The main difference between this criterion and USSP is that the latter requires reaching a precise value $d$, while the former requires reaching an interval.
	The idea is to define an appropriate half space $(k,c)$ and a transition $t$ such that in the corresponding interval only one value might be reachable.
	
	We set $k = w$. 
	Note that we can assume that $d$ is a multiple of $\gcd(k)$.
	Otherwise, $(w,d)$ is a no-instance of USSP since each linear combination $w \cdot x$ is a multiple of $\gcd(k)$.
	By using the Euclidean algorithm, we can compute an $a \in \Z^n$ such that $k \cdot a = \gcd(k)$ in polynomial time~\cite{Bach1996}.
	%We construct the uniquely defined smallest Petri net with $n$ places and one transition $t$ such that $t^\Delta=-a$.
	We construct a Petri net with $n$ places and one transition $t$ with $\tMinus (i)\define a(i)$ if $a(i)> 0$, $t^+\define -a$ if $a(i)< 0$, and $0$ otherwise. It holds $t^\Delta=-a$.
	Set $c = d + k \cdot \tMinus$.
	It is left to show that $(k,c)$ is not $t$-inductive if and only if $(w,d)$ is a yes-instance of USSP.
	
	Assume that $(k,c)$ is not $t$-inductive.
	Then there exists a vector $x \in \N^n$ such that $c \leq k \cdot x + k \cdot \tMinus < c - k \cdot \tDelta$.
	By plugging in the above definitions, we obtain that $d \leq k \cdot x < d + \gcd(k)$.
	Since $d$, $d + \gcd(k)$, and $k \cdot x$ are all multiples of $\gcd(k)$, we obtain that $d = k \cdot x = w \cdot x$.
	Hence, $(w,d)$ is a yes-instance of USSP.
	
	For the other direction, let $w \cdot x = d$.
	We obtain that $d \leq k \cdot x < d + \gcd(k)$.
	As above, we can employ the definitions and derive that $c \leq k \cdot x + k \cdot \tMinus < c - k \cdot \tDelta$.
	This shows non-inductivity of $(k,c)$ and proves correctness of the reduction.
	\qed
\end{proof}

\section{Experiments}
\label{sec:pn:experiments}
We implemented the CEGAR loop in our Java prototype tool \inequalizer~\cite{inequalizer}. 
It employs \zthree~\cite{DeMoura} as a back-end SMT-solver.
The tool makes use of \emph{incremental solving} as well as \emph{minimization}, a feature of \zthree~that guides the CEGAR loop towards more likely candidates of IHS.
Incremental solving reuses information learned from previous queries to \zthree~and minimization prioritizes solutions with minimal values.
Before \inequalizer~starts the CEGAR loop, it uses an SMT-query to check whether there is a separating IHS $(k,c)$ that is trivial for all transitions.
We use minimization to get half spaces that are non-trivial with respect to fewer transitions.
The reason is that non-trivial half spaces are harder to find and typically only a few values for $c$ ensure inductivity in this case.
%---------------------------------------------------------------------
%---------------------------------------------------------------------
%---------------------------------------------------------------------
%---------------------------------------------------------------------
\begin{table}[t]
	\resizebox{\textwidth}{!}{
		
% !TEX spellcheck = en-EN

\begin{tabular}{|l|c|r|r|r|r|r|c|r|}
\hline
\multicolumn{1}{|c|}{\multirow{2}{*}{Benchmark}} &\multicolumn{1}{c|}{\multirow{2}{*}{$|P|$}}
&\multicolumn{1}{c|}{\multirow{2}{*}{$|T|$}}
& \multicolumn{1}{c|}{\multirow{2}{*}{\inequalizer}}
& \multicolumn{5}{c|}{\mist} \\
\cline{5-9}
  & & & & \multicolumn{1}{c|} {backward} & \multicolumn{1}{c|}{ic4pn}  & \multicolumn{1}{c|}{tsi} & \multicolumn{1}{c|}{eec} & \multicolumn{1}{c|}{eec-cegar}\\
\hline
\bench{BasicME}& 5& 4 & 0.6 & 0.1 & 0.1 & 0.1 & 0.1 & 0.1 \\
\bench{Kanban}&16 &14 & 0.7  & 0.1& 0.2& 0.8 & 0.1 & 0.2 \\
\bench{Lamport}& 11& 9& T/O &0.1 & 0.1 & 0.1 & 0.1 & 0.1 \\
\bench{Manufacturing} &13 &6 & 0.6 & 1.9 & 0.1 & 0.1 & 0.1 & 0.1 \\
\bench{Petersson} & 14& 12 & T/O & 0.2 & 0.1 & 0.1 & 0.1 & 0.1 \\
\bench{Read-write} & 13& 9 & 0.5  &0.1 & 1 & 0.1 & 0.9 & 0.5\\
\bench{Mesh2x2} & 32& 32 & 1.2  & 0.3 & 0.1 & 48.6 & 0.8 & 0.2 \\
\bench{Mesh3x2} & 52& 54 & 2.1  & 2.2& 0.2& T/O & T/O & 2.2 \\
\bench{Multipool} & 18& 21  & 0.8  & 0.3 & 2  & 2.2 & 1 & 2.3 \\
\hline
\end{tabular}
	}
	\caption{\inequalizer\ vs. \mist.}
	\label{tab:mist}
	\vspace*{-0.4cm}
\end{table}
%---------------------------------------------------------------------
%---------------------------------------------------------------------
%---------------------------------------------------------------------
%---------------------------------------------------------------------
Before we show the applicability of \inequalizer~on larger benchmarks, let us consider the Petri net in \autoref{Fig:Runexample}.
When executing \inequalizer, we find that there are no trivial separating IHS.
Using incremental solving, \inequalizer~performs three iterations of the CEGAR loop and returns the non-trivial separating IHS with $k = (53, 52)$ and $c = 209$.
When enabling minimization, we only require two iterations and obtain $k = (8, 5)$, $ c= 22$.
The difference in iterations is due to that we expect minimization to choose vectors $k$ that are trivial for many transitions. This increases the chance of finding a suitable $c$.
On the other hand, incremental solving improves the running time in executions with more iterations.

We evaluated \inequalizer~for \probnamecov~on a benchmark suite and compared it to various methods for coverability implemented in \mist~\cite{mist,GEERAERTS2006180,inproceedings,Ganty2007SymbolicDS,10.1007/3-540-46002-0_13}. 
Results are given in~\autoref{tab:mist}.
The experiments were performed on a 1,7 GHz Intel Core i7 with 8GB memory.
The running times are given in seconds.
For entries marked as T/O, the timeout was reached.
The running times of \inequalizer~are similar to \mist although the former has a small overhead from generating the SMT-query.
In each of the listed Petri nets, the unsafe marking is not coverable.
Except for the mutual exclusion nets \bench{Petersson} and \bench{Lamport}, \inequalizer~reliably finds separating IHS.
Surprisingly, each found IHS is trivial.
We suspect that the cases where \inequalizer~timed out are actually negative instances of \probnamecov.

The experiments show that many practical instances admit trivial IHS, which we synthesize using only one SMT-query.
To test the generation of non-trivial IHS, we ran \inequalizer~on a list of nets that do not  admit trivial ones.
The results are given in the appendix.
They show that \inequalizer~finds non-trivial IHS within few iterations of the CEGAR loop.

\section{Conclusion and Outlook}
%\flo{hier ist vergangenheitsform. davor wares above und below. einheitlich?}
We considered an invariant-based approach to disprove reachability and coverability in Petri nets. %formalized in the problems \probnamereach~and \probnamecov.
The idea was to synthesize an inductive half space that over-approximates the reachable markings of the net and separates them from unsafe markings.
For the synthesis, we established a structure theory of IHS and derived an SMT-formula which linearly approximates the space of IHS.
We provided two algorithms, \ic and \cg.
The former decides whether a half space is inductive, the latter generates suitable constants that guarantee inductivity.
The SMT-formula and the algorithm \cg~were then combined in a CEGAR loop which attempts to synthesize IHS.
We implemented the loop into our tool \inequalizer.
It combines SMT-queries with efficient heuristics and was capable of solving practical instances in our experiments.

We expect that further structural studies of IHS will improve the efficiency of the CEGAR loop.
This may lead to a tighter approximation of the space of IHS or to an improved refinement step eliminating more than multiples.
It is also an intriguing question whether the problems \probnamereach~and \probnamecov~are decidable.
To tackle this, we are currently examining equivalence classes and normal forms of half spaces and their connection to well-quasi orderings.
\paragraph{Acknowledgements}
We thank Roland Meyer for his ideas and contributions that greatly influenced the work on inductive half spaces at hand.
Moreover, we thank Chrisitan Eder for sharing with us his experience in commutative algebra, and Marvin Triebel for his ideas and the questions that he raised during his visit.
\bibliographystyle{plain}
\bibliography{masterbib}
%
%%-----------------------------------------------------------------------------------------------------
%%-----------------------------------------------------------------------------------------------------
%%-----------------------------------------------------------------------------------------------------
%
\appendix
\appendixpage
\addappheadtotoc

\section{Proof of \autoref{Theorem:inductive}}\label{sec:pn:app:firstproof}
\inductive*
\begin{proof}
	We use the following property:
	%The minimum of a set is less or equal some value $z$ iff the set contains an element less or equal $z$.
	%\begin{equation}\label{eq:infy}
	% inf_{t,s} (k\cdot P \geq c)\leq z \Leftrightarrow \exists y\in M_t \cap M(k\cdot P \geq c ) : k\cdot y\leq z
	%\end{equation}
	\begin{equation}\label{eq:yx}
		\exists \amark\in \mathbb{N}^n: \amark\geq t^- \Leftrightarrow \exists x\in \mathbb{N}^S: \amark=x+t^- .
	\end{equation}
	According to the definition of t-inductivity, a half space is not t-inductive iff there is a marking $\amark$ that satisfies the following condition:
	\begin{align*}
		\exists \amark\in \mathbb{N}^n:&\; k\cdot \amark \geq c \wedge  \amark\geq t^- \wedge  k\cdot \amark +k\cdot t^\Delta < c\\
		\stackrel{(\ref{eq:yx})}{\Leftrightarrow} \exists x\in \mathbb{N}^n:&\; k\cdot x +k\cdot t^- \geq c  \wedge  k\cdot x +k\cdot t^- +k\cdot t^\Delta < c \\
		\Leftrightarrow \exists x\in \mathbb{N}^n:&\; k\cdot x +k\cdot t^- \geq c  \wedge  k\cdot x +k\cdot t^-  < c -k\cdot t^\Delta\\
		\exists x\in \mathbb{N}^n:&\; c \leq k\cdot x + k \cdot t^- < c- k\cdot t^\Delta
	\end{align*}
\end{proof}
%-------------------------------------------------------------
%-------------------------------------------------------------
%-------------------------------------------------------------
%-------------------------------------------------------------
%-------------------------------------------------------------
%-------------------------------------------------------------
%-------------------------------------------------------------
%-------------------------------------------------------------
\section{Proofs for \autoref{Section:Charinduct}}
\begin{lemma}
	Let $(k, c)$ be oriented towards $t$. 
	If $\amark\in \Act{t} \cap \Sol{k}{c}$, then $\amark+\tDelta\in \Sol{k}{c}$.
\end{lemma}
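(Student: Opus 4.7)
The plan is to unfold the definitions and observe that the result is essentially immediate from the orientation hypothesis together with the linearity of the scalar product; the hypothesis $\amark \in \Act{t}$ actually plays no role in the argument.

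First I would recall the definitions in play: orientation of $(k,c)$ towards $t$ means $k \cdot \tDelta \geq 0$, while $\amark \in \Sol{k}{c}$ means $k \cdot \amark \geq c$. The goal $\amark + \tDelta \in \Sol{k}{c}$ unfolds to $k \cdot (\amark + \tDelta) \geq c$.

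Then I would apply distributivity of the scalar product to split this into $k \cdot \amark + k \cdot \tDelta$, and bound each summand separately using the two hypotheses. Concretely:
\begin{align*}
k \cdot (\amark + \tDelta) \;=\; k \cdot \amark + k \cdot \tDelta \;\geq\; c + 0 \;=\; c,
\end{align*}
where the first inequality uses $k \cdot \amark \geq c$ (from $\amark \in \Sol{k}{c}$) and $k \cdot \tDelta \geq 0$ (from orientation). This yields $\amark + \tDelta \in \Sol{k}{c}$ as required.

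There is no real obstacle here; the statement is essentially a one-line consequence of bilinearity together with the orientation condition. The only subtle point worth remarking on is that we do not need $\amark \geq \tMinus$ (i.e., $\amark \in \Act{t}$) to conclude membership in the half space — the orientation condition alone guarantees that the scalar product cannot decrease under the vector shift $+\tDelta$, so the half space is closed under this shift regardless of whether $t$ is actually enabled. The activation hypothesis appears in the statement only because this lemma is phrased in the setting of checking $t$-inductivity, where one naturally quantifies over $\amark \in \Act{t} \cap \Sol{k}{c}$.
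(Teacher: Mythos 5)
Your proof is correct and matches the paper's argument exactly: distributivity of the scalar product, the orientation hypothesis $k \cdot \tDelta \geq 0$, and $\amark \in \Sol{k}{c}$, chained into $k \cdot (\amark + \tDelta) \geq c$. Your side remark that $\amark \in \Act{t}$ is not actually used is also accurate.
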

\begin{proof}
	The lemma holds by $k\cdot(\amark + \tDelta) = k\cdot \amark + k\cdot \tDelta \geq k\cdot \amark\geq c$. 
	The first equality is by distributivity of the scalar product, the following inequality is by the definition of orientation towards $t$, and the last inequality is by $\amark\in\Sol{k}{c}$. 
	Finally, $(\amark + \tDelta) \in \Sol{k}{c}$ follows from $k\cdot(\amark + \tDelta)\geq c$ according to the definition of $\Sol{k}{c}$.
\end{proof}
%-------------------------------------------------------------
%-------------------------------------------------------------
%-------------------------------------------------------------
%-------------------------------------------------------------

\begin{lemma}
	Let $(k, c)$ be monotone for $t$. If $\amark\in \Act{t} \cap \Sol{k}{c}$, then $\amark+\tDelta\in \Sol{k}{c}$.
\end{lemma}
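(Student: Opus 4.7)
The plan is to exploit the characterization $\Act{t} = \setcond{\tMinus + x}{x \in \N^n, x \geq 0}$ mentioned right after the definition of the activation space. Given $\amark \in \Act{t} \cap \Sol{k}{c}$, I would immediately decompose $\amark = \tMinus + x$ for some non-negative vector $x \in \N^n$. This is the same decomposition that underlies \autoref{Theorem:inductive}, so it is natural to use it here as well.

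The heart of the argument is then a short algebraic manipulation. From the decomposition, $\amark + \tDelta = \tMinus + \tDelta + x$, and distributivity of the scalar product gives
\begin{align*}
    k \cdot (\amark + \tDelta) \;=\; k \cdot (\tMinus + \tDelta) \;+\; k \cdot x.
\end{align*}
The first summand is at least $c$ by the monotonicity assumption $k \cdot (\tMinus + \tDelta) \geq c$. For the second summand, monotonicity also supplies $k \geq 0$, and since $x \geq 0$ this forces $k \cdot x \geq 0$. Adding the two bounds yields $k \cdot (\amark + \tDelta) \geq c$, which is exactly the condition $\amark + \tDelta \in \Sol{k}{c}$.

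There is no real obstacle in this proof: monotonicity was defined precisely so that the ``smallest enabled marking'' $\tMinus$ stays inside the half space after firing $t$, and the $k \geq 0$ clause exactly takes care of propagating this property to all larger enabled markings via the $x \geq 0$ slack. The only thing worth noting for expository clarity is that we do \emph{not} need the hypothesis $\amark \in \Sol{k}{c}$ at all --- monotonicity alone ensures that $\tMinus + x + \tDelta$ lands in the half space for every $x \geq 0$. I would therefore present the proof in a single short display, parallel in style to the preceding lemma about orientation, so that the analogy between the two trivial cases is visible.
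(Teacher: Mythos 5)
Your proof is correct and follows the same route as the paper's: decompose $\amark = \tMinus + x$ with $x \in \N^n$, apply distributivity to get $k \cdot (\amark + \tDelta) = k \cdot (\tMinus + \tDelta) + k \cdot x$, and bound the two summands by $c$ and $0$ respectively using the two clauses of the monotonicity definition. Your side observation that the hypothesis $\amark \in \Sol{k}{c}$ is not actually used is also accurate and matches what the paper's argument implicitly does.
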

\begin{proof}
	Note that membership $\amark\in\Act{t}$ implies $\amark = \tMinus+\avec$ with $\avec\in \N^n$. 
	We have $k\cdot (\amark + \tDelta) = k\cdot (\tMinus + \tDelta + \avec) = k\cdot (\tMinus+\tDelta) + k\cdot \avec \geq c$.
	The first equality rearranges the terms of $\amark+\tDelta$, the second is distributivity, the third is by the fact that $k\cdot (\tMinus+\tDelta)\geq c$ and $k\cdot \avec\geq 0$.
	The former holds by the definition of monotonicity for half spaces, the latter is by the fact that $k, c, \avec \geq 0$. It holds $\amark+\tDelta\in \Sol{k}{c}$ since $k\cdot (\amark + \tDelta)\geq c$.
\end{proof}
%-------------------------------------------------------------
%-------------------------------------------------------------
%-------------------------------------------------------------
%-------------------------------------------------------------
\begin{lemma}
	If $(k, c)$ is antitone for $t$, then $\Act{t} \cap \Sol{k}{c}=\emptyset$. 
\end{lemma}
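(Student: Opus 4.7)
The plan is a direct computation showing that every marking in $\Act{t}$ fails the defining inequality of $\Sol{k}{c}$. The hypothesis of antitonicity gives me two pieces of information to combine: the sign condition $k \leq 0$ and the strict inequality $k \cdot \tMinus < c$. The former will let me bound the scalar product on $\Act{t}$ from above by $k \cdot \tMinus$, and the latter then closes the gap to $c$.

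Concretely, I would take an arbitrary $\amark \in \Act{t}$ and use the characterization $\Act{t} = \setcond{\tMinus + \avec}{\avec \geq 0}$ from the preliminaries to write $\amark = \tMinus + \avec$ with $\avec \in \N^n$. Then by distributivity of the scalar product,
\begin{align*}
k \cdot \amark \;=\; k \cdot \tMinus + k \cdot \avec.
\end{align*}
Since every component of $k$ is non-positive and every component of $\avec$ is non-negative, the term $k \cdot \avec$ is a sum of non-positive products, so $k \cdot \avec \leq 0$. Combined with the antitonicity assumption $k \cdot \tMinus < c$, this yields
\begin{align*}
k \cdot \amark \;\leq\; k \cdot \tMinus \;<\; c,
\end{align*}
hence $\amark \notin \Sol{k}{c}$. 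Since $\amark$ was arbitrary in $\Act{t}$, the intersection $\Act{t} \cap \Sol{k}{c}$ is empty.

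There is no real obstacle here; the lemma is essentially a one-line unfolding of definitions once the representation of $\Act{t}$ as $\tMinus + \N^n$ is in hand. The only subtlety worth flagging in writing is to be explicit about why $k \cdot \avec \leq 0$, since this is the single place where the sign hypothesis $k \leq 0$ is used, and to note that the chain of (in)equalities has a strict step, which is what forces $\amark \notin \Sol{k}{c}$ rather than merely $\amark$ lying on the boundary.
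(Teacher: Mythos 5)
Your proof is correct and follows exactly the same route as the paper's: decompose $\amark = \tMinus + \avec$, split $k\cdot\amark$ by distributivity, bound $k\cdot\avec \leq 0$ via $k \leq 0$ and $\avec \geq 0$, and close with the strict inequality $k\cdot\tMinus < c$ from antitonicity.
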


\begin{proof}
	Note that membership $\amark\in \Act{t}$ means $\amark = \tMinus+\avec$ for some $\avec\in \N^n$. 
	Now $k\cdot \amark = k\cdot (\tMinus+\avec) = k\cdot \tMinus+k\cdot \avec < c$ follows. The inequality is by $k\cdot \tMinus<c$ and $k\cdot \avec\leq 0$. 
	The former holds by the definition of antitone half spaces, the latter by $k\leq 0$ and $\avec\geq 0$. 
\end{proof} 

\NecessNontrivial*
\begin{proof}
	Assume towards contradiction that there is a $k(i)$ with $0<|k(i) |\leq -k\mal \trans$
	Since it is non-trivial, we can apply \autoref{Lemma:Nontrivial} and either (a) or (b) holds. 
	%$k$ is not mixed, so either $k\leq 0$ or $k\geq 0$ holds. 
	%Since neither Def.\ref{def:trivial}.2) nor Def.\ref{def:trivial}.3) holds, we need to examine two cases,
 %either $k\mal t^- <c-k\mal \trans $ and $k\geq 0$ or $k\mal t^- \geq c$ and $k\leq 0$ hold. 
	For both cases, we define a value $z$ such that $c\leq k\mal t^- +z\mal k(i)<c-k\mal \trans$ is satisfied.
	
	\paragraph*{Case (a) $k\mal t^- <c-k\mal \trans $ and $k\geq 0$:} 
	There is a smallest $z\in \N$ such that $c\leq k\mal t^- +z\mal k(i)$. 
	If $k\mal t^-\geq c$, then $z=0$ holds trivially. Note that  $k\mal t^- +z\mal k(i)=k\mal \tMinus <c-k\mal \trans$ holds by (a).
	If $k\mal t^- < c$, then $z$ exists since $k(i)>0$.
	Assume towards contradiction that $c- k\mal \tDelta \leq k\mal t^- +z\mal k(i)$. It holds 
	$$ k\mal t^- +(z-1)\mal k(i)\geq c-k\mal \tDelta- k(i)\geq c.$$
	The first inequality is by the assumption  $c- k\mal \tDelta \leq k\mal t^- +z\mal k(i)$ and the second is by $|k(i) |\leq -k\mal \trans$.
	This means $z-1$ also satisfies the condition which is a contradiction to $z$ being smallest. 	
	The condition $c\leq k\mal t^- +z\mal k(i)<c-k\mal \trans$ is satisfied either way.
			
	\paragraph*{Case (b) $k\mal t^- \geq c$ and $k\leq 0$:} 
	There is a smallest $z\in \N$ such that $k\mal t^- +z\mal k(i)<c-k\mal \trans$. 
	Since $z$ is smallest and $|k(i) |\leq -k\mal \trans$ holds it follows
	$c\leq k\mal t^- +z\mal k(i)<c-k\mal \trans$ as well. The detailed proof is omitted since it is analogue to (a).
	\smallskip
	
	We define the vector $x\in \N^n$ as $x=z\mal e_i$. 
	According to \autoref{Theorem:inductive}, this implies that $(k,c)$ is not t-inductive which is a contradiction.
	\qed
\end{proof}

%---------------------------------------------------------------------
%---------------------------------------------------------------------
%---------------------------------------------------------------------
%---------------------------------------------------------------------
%---------------------------------------------------------------------
%---------------------------------------------------------------------
%---------------------------------------------------------------------
%---------------------------------------------------------------------
\section{Proofs for \autoref{Section:Generating}}
\label{sec:pn:app:proofgenc}
In order to prove \autoref{lem:genc}, we first require the following lemma which examines the non-trivial case:
\begin{lemma}\label{lemma_turning_k(i)nto_invariant}
	Let $k$ be an non-mixed vector with $\ktdelta < 0$ and $\abs{k(i)} > -\ktdelta$ for all $k(i) \neq 0$. Then there is a $c \in \Z$ so that \ineq is $t$-inductive.
\end{lemma}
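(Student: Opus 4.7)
The plan is to apply \autoref{Theorem:inductive}: the half space $(k,c)$ fails to be $t$-inductive precisely when some $x \in \N^n$ satisfies $k\cdot x + k\cdot t^- \in [c,\, c - k\cdot t^\Delta - 1]$. Writing $L \define -k\cdot t^\Delta > 0$, this is a window of exactly $L$ consecutive integers. Our task reduces to choosing $c$ so that the $L$-integer window $[c, c+L-1]$ avoids the set $S \define \{k\cdot x + k\cdot t^- \mid x \in \N^n\}$. The hypothesis $|k(i)| > L$ for every non-zero $k(i)$ is exactly what lets us fit such a window into a gap of $S$.

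I would proceed by the natural case split afforded by the non-mixed hypothesis. In the case $k \geq 0$, the set $S$ consists of $k \cdot t^-$ together with values of the form $k\cdot t^- + \sum_j a_j k(i_j)$ where every $k(i_j) > 0$, hence $\geq L+1$ by hypothesis. The smallest element of $S$ is $k\cdot t^-$, and the next smallest is at least $k \cdot t^- + L + 1$. Taking $c = k\cdot t^- + 1$, the window becomes $[k\cdot t^- + 1,\, k\cdot t^- + L]$: its left endpoint sits strictly above $k\cdot t^-$ and its right endpoint sits strictly below $k\cdot t^- + L + 1$, so it avoids $S$ entirely.

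The case $k \leq 0$ is symmetric. Now $S$ has $k\cdot t^-$ as its largest element, and the next largest is at most $k\cdot t^- - (L+1)$, since every non-zero $k(i)$ satisfies $k(i) \leq -(L+1)$. Setting $c$ so that $c + L - 1 = k\cdot t^- - 1$, i.e.\ $c = k\cdot t^- - L$, places the window strictly between these two values of $S$, as one checks: its right endpoint is $k\cdot t^- - 1 < k\cdot t^-$, and its left endpoint is $k\cdot t^- - L > k\cdot t^- - (L+1)$. Again $[c, c+L-1]$ misses $S$, so by \autoref{Theorem:inductive} the half space $(k,c)$ is $t$-inductive.

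There is no real obstacle here beyond bookkeeping: once one reformulates the inductivity criterion in terms of avoiding a length-$L$ window in $S$, the hypothesis bounds each $|k(i)|$ strictly below the next reachable element of $S$, producing a gap larger than the window. The main thing to be careful about is the off-by-one accounting in the interval $[c, c-k\cdot t^\Delta - 1]$ and the strict versus non-strict inequalities in the hypothesis $|k(i)| > -k\cdot t^\Delta$.
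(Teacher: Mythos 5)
Your proof is correct and takes essentially the same approach as the paper: the same case split on $k \geq 0$ versus $k \leq 0$, the same choices of constant ($c = k\cdot t^- + 1$ and $c = k\cdot t^- + k\cdot t^\Delta$, since your $k\cdot t^- - L$ equals the latter), and the same observation that the hypothesis $|k(i)| > -k\cdot t^\Delta$ places the next reachable value of $k\cdot x + k\cdot t^-$ strictly outside the target window. Your framing in terms of fitting a length-$L$ window into a gap of the set $S$ is a clean way to phrase what the paper verifies by direct case analysis on $x$.
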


\begin{proof}
	We examine both cases for the non-mixed vector $k$: $k\geq 0$ and $k\leq 0$.
	\begin{description}
		\item[$k \geq 0$:] 		
		First, we assume $k \geq 0$ and set $c \define k\mal t^{-} + 1$. From $\ktdelta < 0$ and $\abs{k(i)} > -\ktdelta$ follows $k(i) >1$ for all $k(i)\neq 0$. 
		We check if the condition in \autoref{Theorem:inductive} is satisfied for any $x\in \N^n$:
		$$c \leq k\cdot x + k \cdot t^- < c- k\cdot t^\Delta.$$
		If $x$ is the vector $0^n$ then it follows $c=k\mal t^- +1 \nleq k\mal x +k\mal t^- =k\mal t^-$.\newline
		Let $x$ contain an entry $x(i)>0$. If $k(i)=0$ for all such entries then the case is analogue to $x=0^n$. 
		If there is an $i\leq n$ with $x(i)>0$ and $k(i)>0$ (and thus $k(i)\geq -\ktdelta +1$ ), then it holds 
		$$k\mal x+k\mal t^-\geq k(i)+k\mal t^-\geq  -\ktdelta+1+k\mal t^-=c-\ktdelta.$$
		\item[$k \leq 0$:] 	
		We define $c \define k \mal t^{-} + \ktdelta$. 
		If $x$ is the vector $0^n$, then it follows $k\mal x +k\mal t^- =k\mal t^-\nless  k t^{-} = c - \ktdelta$.
		Let $x$ contain an entry $x(i)>0$. If $k(i)=0$ for all such entries then the case is analogue to $x=0^n$. 
		If there is an $i\leq n$ with $x(i)>0$ and $k(i)<0$ (and thus $k(i)<\ktdelta$ ), then it holds 
		$$k\mal x+k\mal t^- \leq k(i)+k\mal t^-<  \ktdelta+k\mal t^-=c.$$
	\end{description}
	This means the condition is not satisfied in either case and the half space is t-inductive according to \autoref{Theorem:inductive}.
\end{proof}
We recall the \autoref{Theorem:inductive} from \autoref{Section:Generating}:
\genc*
\begin{proof}
Now, we examine the formula \closedform.
Let $(k,c)$ be separating, meaning $\amark_0 \in \Sol{k}{c}$ and $\amark_0 \notin \Sol{k}{c}$, and $t$-inductive. Since it is separating, condition (0) holds and $c$ is in the interval 
$(k\cdot \amark_f,k\cdot \amark_0 ]$.
If it is trivial, it follows that one of the conditions (\ref{eq:gen1})-(\ref{eq:gen3}) holds.
If it is non-trivial, conditions (4) and (5) hold. It follows that $k$ is a satisfying assignment of $\closedmt$.

Let $k$ be a solution of $\closedmt$. We show that there is a value $c$ such that $(k,c)$ is t-inductive:
\begin{itemize}
	\item If $k$ satisfies (\ref{eq:gen1}), it is $t$-inductive for any $c$ and since (0) holds, we can choose a $c$ such that it is separating.
	\item If $k$ satisfies (\ref{eq:gen2}) then we set $c=k\cdot \amark_0$ and thus $(k,c)$ is separating and antitone.
	\item Condition (\ref{eq:gen3}) is analogue, here we set $c=k\cdot \amark_f +1$ and thus $(k,c)$ is separating and monotone.
	\item If $k$ satisfies Conditions (4) and (5), than the property holds according to the following Lemma.
\end{itemize}
\end{proof}
%---------------------------------------------------------------------
%---------------------------------------------------------------------
%---------------------------------------------------------------------
%---------------------------------------------------------------------
\lemmamult*
\begin{proof}
	We use contraposition. 
	Given a marking $\amark$ that violates t-inductivity of $(k, \lceil \frac c a \rceil )$. 
	We show that it violates t-inductivity of $(k',c)$ as well.
	
	If it holds $k\mal \amark \geq \lceil \frac c a \rceil$, then it follows $ k\mal \amark \geq \frac c a$ from $\lceil \frac c a \rceil\geq \frac c a$ and thus $ k'\mal \amark \geq c$.
	 %Note that $\frac{c}{a}$ may not be an integer but $k\mal \amark$ is an integer that is at least  $\frac{c}{a}$ and thus at least $\lceil \frac c a \rceil$. This means  $(k, \lceil \frac c a \rceil )$ holds. 
	If it holds  $k\mal (\amark +\tDelta)<\lceil \frac c a \rceil $ then it follows 
	 $k\mal (\amark +\tDelta)+1 \leq \lceil \frac c a \rceil $. Finally, we conclude  
	 $k'\mal (\amark +\tDelta)<c$ using $\lceil \frac c a \rceil \leq \frac c a +1$.\qed
\end{proof}

\begin{theorem}
		Let $(k,c)$ be such that $\amark_f\notin \Sol{k}{c}$. It holds $\amark_f \uparrow \cap \Sol{k}{c} =\emptyset$ iff $k\leq 0$.
\end{theorem}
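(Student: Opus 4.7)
The plan is to prove both directions separately, with the nontrivial direction handled by contraposition and an explicit witness construction.

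For the ``if'' direction, assume $k \leq 0$. I would use monotonicity of the scalar product against a non-positive vector: for any $\amark \in {\uparrow\!\amark_f}$ we have $\amark \geq \amark_f$, hence $k \cdot \amark \leq k \cdot \amark_f$, and the assumption $\amark_f \notin \Sol{k}{c}$ gives $k \cdot \amark_f < c$. Chaining these yields $k \cdot \amark < c$, so $\amark \notin \Sol{k}{c}$. This shows ${\uparrow\!\amark_f} \cap \Sol{k}{c} = \emptyset$.

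For the ``only if'' direction, I would argue by contraposition: assume $k \not\leq 0$, so there is an index $i$ with $k(i) > 0$. The idea is exactly the intuition given in the paragraph preceding the theorem: start from $\amark_f$ and add tokens to place $i$ until we cross the half-space boundary. Concretely, choose $\lambda \in \N$ large enough that $\lambda \cdot k(i) \geq c - k \cdot \amark_f$, which is possible since $k(i) > 0$ (note $c - k \cdot \amark_f > 0$ by $\amark_f \notin \Sol{k}{c}$, but any sufficiently large $\lambda$ works regardless). Set $\amark \define \amark_f + \lambda \cdot e_i$. Then $\amark \geq \amark_f$, so $\amark \in {\uparrow\!\amark_f}$, and
\begin{align*}
k \cdot \amark \;=\; k \cdot \amark_f + \lambda \cdot k(i) \;\geq\; c,
\end{align*}
so $\amark \in \Sol{k}{c}$. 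Hence ${\uparrow\!\amark_f} \cap \Sol{k}{c} \neq \emptyset$.

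There is essentially no major obstacle; the only thing to be slightly careful about is ensuring the witness $\amark$ is a proper marking, i.e.\ lies in $\N^n$. This is automatic here because $\amark_f \in \N^n$ and we only add non-negative multiples of a unit vector. The proof is therefore short and combines a one-line monotonicity argument with a one-line witness construction.
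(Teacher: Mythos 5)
Your proof is correct and takes essentially the same route as the paper: the easy direction uses monotonicity of $m \mapsto k \cdot m$ for $k \leq 0$, and the other direction constructs a witness by pushing tokens onto a place $i$ with $k(i) > 0$ until the boundary is crossed. The only cosmetic difference is that the paper fixes the explicit shift $\lambda = c - k \cdot \amark_f$ (and uses $k(i) \geq 1$), whereas you leave $\lambda$ as any sufficiently large natural number; both choices work.
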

\begin{proof}
	"$\Rightarrow$:" Let $m_f \uparrow \cap \Sol{k}{c} =\emptyset$. 
	We assume towards contradiction that $k(i)>0$ holds for some $i\leq n$. Let $m$ be such that $m(i)=m_f(i)+(c-k\mal m_f)$ and $m(j)=m_f(j)$ for all $j\leq n$ with $j\neq i$.
	Then $k\mal m= k\mal m_f +k(i)\mal (c-k\mal m_f)$. Since $k(i)>0$ and $(c-k\mal m_f)>0$, it follows $k\mal m\geq k\mal m_f +1\mal (c-k\mal m_f)=k\mal m_f +c-k\mal m_f=c\geq c$ and thus $m\in m_f \uparrow \cap \Sol{k}{c} \neq \emptyset$. This is a contradiction to $m_f \uparrow \cap \Sol{k}{c} =\emptyset$.
	
	\noindent "$\Leftarrow$:" Let $k\leq 0$ and $m\in m_f \uparrow$ and thus $m\geq m_f$.
	It follows $k\mal m \leq k \mal m_f<c$ and thus $m\notin \Sol{k}{c}$. This means $m_f \uparrow \cap \Sol{k}{c} =\emptyset$.
\end{proof}
\section{Proofs for \autoref{Sec:checking}}\label{sec:pn:app:algo}
We recall \autoref{Theorem:thm1} as well as  \autoref{alg:Infinum}. 
	\begin{restatable}{theorem}{thmone}
			\label{Theorem:thm1}
			A half space $(k,c)$ is t-inductive iff
$$\nexists \sumk{1} \ldots \sumk{l} \in K^* : c \leq k \cdot t^- + \sum_{i=1}^{l} \sumk{i} < c- k\cdot t^\Delta$$
	\end{restatable}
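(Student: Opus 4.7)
The plan is to derive the theorem directly from \autoref{Theorem:inductive}, which characterizes failure of $t$-inductivity as the existence of some $x \in \N^n$ satisfying $c \leq k \cdot x + k \cdot t^- < c - k \cdot t^\Delta$. Since $k \cdot t^-$ is fixed, the theorem will follow once I show the set equality
\begin{align*}
\{ k \cdot x \mid x \in \N^n \} \;=\; \Bigl\{ \sum_{i=1}^{\ell} \sumk{i} \;\Big|\; \sumk{1} \ldots \sumk{\ell} \in K^* \Bigr\}.
\end{align*}
Both sides are subsets of $\Z$, and the theorem is exactly the substitution of the right-hand description of reachable values for the left-hand description inside the interval $[c, c - k\cdot t^\Delta)$.

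For the inclusion $\subseteq$, given $x \in \N^n$, I construct the sequence that lists the value $k(i)$ exactly $x(i)$ times, for every $i \in [1,n]$. Every element of this sequence lies in $K$ by definition, and its sum equals $\sum_{i=1}^n x(i)\, k(i) = k \cdot x$. For the inclusion $\supseteq$, given a sequence $\sumk{1} \ldots \sumk{\ell} \in K^*$, each $\sumk{j}$ is by definition some entry of $k$, so I pick an index $i_j \in [1,n]$ with $k(i_j) = \sumk{j}$. Setting $x(i) \define \lvert \{\, j \mid i_j = i \,\} \rvert$ yields $x \in \N^n$ with $k \cdot x = \sum_i x(i)\, k(i) = \sum_j k(i_j) = \sum_j \sumk{j}$.

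Combining the equivalence with \autoref{Theorem:inductive} gives the theorem: a violating $x$ in the earlier characterization corresponds to a sequence in $K^*$ with the same sum, and vice versa, so non-existence on one side is non-existence on the other. There is no genuine obstacle here; the only mild care needed is the handling of repeated values in $k$, where the choice of index $i_j$ with $k(i_j) = \sumk{j}$ is non-unique, but this does not affect $k \cdot x$, which depends only on the multiset of values selected. The content of the theorem is thus a reparameterization of the existential in \autoref{Theorem:inductive} from vectors in $\N^n$ to sequences over the set of distinct entries of $k$, which is precisely the form consumed by the dynamic-programming search in \autoref{alg:Infinum}.
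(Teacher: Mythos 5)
Your proposal is correct and follows essentially the same route as the paper: both derive the theorem from \autoref{Theorem:inductive} by showing that $\{k \cdot x \mid x \in \N^n\}$ coincides with the set of sums of sequences over $K^*$, then substituting one existential for the other inside the interval condition. The only difference is that you handle the case of repeated entries in $k$ a bit more carefully (by allowing non-unique choice of index $i_j$ and defining $x(i)$ as a count of chosen indices), whereas the paper's phrasing "$x(j)$ the number of occurrences of value $k_j$" glosses over the fact that $\sum_{j=1}^n k(j)\, x(j)$ would double-count when distinct indices share the same entry; your formulation sidesteps that ambiguity.
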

%\thmone*
		\begin{proof}
			We use the following property:
			%The minimum of a set is less or equal some value $z$ iff the set contains an element less or equal $z$.
			%\begin{equation}\label{eq:infy}
			% inf_{t,s} (k\cdot P \geq c)\leq z \Leftrightarrow \exists y\in M_t \cap M(k\cdot P \geq c ) : k\cdot y\leq z
			%\end{equation}
			For any sum $\sum_{i=1}^l \sumk{i}$ over $K$ with $x(j)$ the number of occurrences of value $k_j$ in the sum, it holds
			\begin{equation}\label{eq:sumprop}
			\sum_{i=1}^l \sumk{i}=\sum_{j=1}^{n} k(j)\cdot x(j)=k\cdot x.
			\end{equation}
			Obviously, we can construct a sum with value $k\cdot x$ for any vector $x$. It follows:
			\begin{equation}\label{eq:sumy}
			\exists x\in \mathbb{N}^n: k\cdot x= z \Leftrightarrow \exists \sumk{1}\ldots \sumk{l} \in K^*: \sum_{i=1}^l \sumk{i}=z
			\end{equation}
			%$$inf_{t,s} (k\cdot P \geq c)+k\cdot t^\Delta < c $$
			%$$\stackrel{(\ref{eq:infy})}{\Leftrightarrow}  
			A half space is not t-inductive iff there is a marking $x$ that violates \autoref{Theorem:inductive}:
			\begin{align*}
				\exists x\in \mathbb{N}^n:&\; c \leq k\cdot x + k \cdot t^- < c- k\cdot t^\Delta\\
				\stackrel{(\ref{eq:sumy})}{\Leftrightarrow} \exists \sumk{1} \ldots \sumk{l} \in K^*:&\; c \leq \sum_{i=1}^{l} \sumk{i} + k \cdot t^- < c- k\cdot t^\Delta
			\end{align*}
		\end{proof}
	\begin{lemma}\label{lem:pn:indcorrect}
		The algorithm is correct.
	\end{lemma}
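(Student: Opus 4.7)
The plan is to verify that \ic~(\autoref{alg:Infinum}) decides $t$-inductivity by proving soundness, completeness, and termination. Throughout I will rely on \autoref{Theorem:thm1}, which reduces inductivity to the non-existence of a sequence $\sumk{1} \dots \sumk{\ell} \in K^*$ whose running sum starting from $k \cdot \tMinus$ lands in the \emph{target interval} $[c,\; c - k \cdot \tDelta - 1]$. I will also rely on the algorithm's standing assumption that $(k,c)$ is non-trivial, so that \autoref{Theorem:Unmixed} applies and forces either $k \geq 0$ or $k \leq 0$.

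For \textbf{soundness}, a routine induction on the number of enqueue operations shows that every value stored in the queue is of the form $k \cdot \tMinus + \sum_{i=1}^{\ell} \sumk{i}$ for some $\sumk{1} \dots \sumk{\ell} \in K^*$; the base case is the initial enqueueing of $k \cdot \tMinus$, and the inductive step is the addition on line~\ref{alg:queueadd}. Whenever line~\ref{alg:ifuninductive} fires, the inequality $c \leq \mathit{current} < c - k \cdot \tDelta$ therefore exhibits a witness sequence, and \autoref{Theorem:thm1} yields that $(k,c)$ is indeed not $t$-inductive.

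For \textbf{completeness}, suppose $(k,c)$ is not $t$-inductive and fix a witness sequence $\sumk{1} \dots \sumk{\ell}$ of minimal length $\ell$, writing $s_j \define k \cdot \tMinus + \sum_{i=1}^{j} \sumk{i}$. Minimality gives $s_j \notin [c,\; c - k \cdot \tDelta - 1]$ for every proper prefix $j < \ell$, for otherwise $\sumk{1} \dots \sumk{j}$ would be a shorter witness. Combined with \autoref{Theorem:Unmixed}, the sign of $k$ then controls every prefix uniformly: if $k \geq 0$ all elements of $K$ are non-negative, so $s_0 \leq \dots \leq s_\ell \leq c - k \cdot \tDelta - 1$ forces $s_j < c - k \cdot \tDelta$ for every $j$, which is precisely the pruning condition on line~\ref{alg:ifadda}; the case $k \leq 0$ is symmetric and gives $s_j \geq c$ as required on line~\ref{alg:ifaddb}. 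Hence each $s_j$ is enqueued on its first visit---the \textsf{reached} array only suppresses duplicates, not first additions---and upon dequeuing $s_\ell$ the algorithm returns ``Not inductive''.

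For \textbf{termination}, the same pruning confines every value placed on the queue to $[k \cdot \tMinus,\; c - k \cdot \tDelta - 1]$ when $k \geq 0$, or to $[c,\; k \cdot \tMinus]$ when $k \leq 0$. Each such interval contains only finitely many integers, and the \textsf{reached} array ensures that every value is enqueued at most once, so the main loop executes finitely often. The main obstacle I expect is the prefix argument in completeness: one must simultaneously rule out overshooting the target (using the sign of $k$ given by \autoref{Theorem:Unmixed}) and already landing inside it on a proper prefix (using the minimality of $\ell$), in order to conclude that \ic's pruning preserves at least one shortest witness.
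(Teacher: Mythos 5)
Your proof is correct, but it proves strictly more than the paper attributes to this particular lemma: in the paper, \autoref{lem:pn:indcorrect} covers only the \emph{soundness} direction --- if \ic~outputs ``Not inductive,'' then the dequeued value $\mathit{current}$ is a partial sum $k \cdot \tMinus + \sum_{i=1}^{\ell} \sumk{i}$ over $K$ landing in $[c,\; c - k \cdot \tDelta - 1]$, so by \autoref{Theorem:thm1} (the restatement of \autoref{Theorem:inductive} in terms of $K$-sums) the half space is indeed not $t$-inductive. Completeness is split off into a separate lemma (\autoref{lem:pn:indcomplete}), and termination is a consequence of the pseudopolynomial running-time bound in \autoref{thm:pol}. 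Your soundness paragraph is exactly the paper's argument; your completeness paragraph --- minimal-length witness, prefixes outside the target by minimality, monotonicity of partial sums from the unmixed sign of $k$ (\autoref{Theorem:Unmixed}) showing that lines~\ref{alg:ifadda}/\ref{alg:ifaddb} never prune a prefix --- mirrors the paper's proof of \autoref{lem:pn:indcomplete} almost verbatim; and your termination argument (partial sums confined to a finite interval, each value enqueued once) is the paper's \autoref{thm:pol}. So the approach is the same; you have merely merged three results that the paper states separately. One small phrasing note on completeness: you assert ``each $s_j$ is enqueued,'' but the cleaner way to close the induction is to observe that minimality guarantees the dequeue check on line~\ref{alg:ifuninductive} fails for $s_j$ with $j < \ell$, so processing continues; if that check had fired on a shorter prefix it would itself produce the correct output, so strictly speaking minimality just gives you a tidier induction rather than being logically indispensable.
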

	\begin{proof}
		Assume the algorithm returns "Not inductive".
		It holds $c\leq current <c-k\mal t^\Delta $ and $current$ was derived by adding elements of $K$ to the starting value $k\mal t^-$.
		It follows that there is a sequence $\sumk{1} \ldots \sumk{l} \in K^*$ and $c\leq k \mal t^- +\sum_{i=1}^l \sumk{i} <c-k\mal t^\Delta $.
		This violates the condition of \autoref{Theorem:thm1} and thus the half space is not t-inductive.
	\end{proof}
	%\begin{lemma}
	% $$\exists k_1\ldots k_l \in K^*:\sum_{i=1}^{l}=z \Leftrightarrow$$
	% $$\exists k'_1 \ldots k'_l \in K^*:\sum_{i=1}^{l}=z \wedge \forall i<n: \left( \sum_{j=1}^i k'_j<z \leftrightarrow k'_{i+1}\in K^*\right)$$
	%\end{lemma}
	%\begin{proof}
	%\end{proof}
	
	\begin{lemma}\label{lem:pn:indcomplete}
		The algorithm is complete.
	\end{lemma}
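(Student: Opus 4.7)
The plan is to prove the contrapositive: if $(k,c)$ is not $t$-inductive, then \autoref{alg:Infinum} terminates and returns \emph{Not inductive}. We may assume $(k,c)$ is non-trivial, since otherwise \autoref{Theorem:TrivialInductive} would contradict the hypothesis. By \autoref{Theorem:Unmixed} then $k \geq 0$ or $k \leq 0$, and non-triviality additionally forces $k \cdot \tDelta < 0$ (otherwise $(k,c)$ would be oriented towards $t$ and hence trivial). We treat the case $k \geq 0$; the case $k \leq 0$ is symmetric, with the roles of the thresholds $c$ and $c - k \cdot \tDelta$ swapped in the enqueue conditions on lines \autoref{alg:ifadda}--\autoref{alg:ifaddb}.

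By \autoref{Theorem:thm1}, non-$t$-inductivity yields a witness sequence $\sumk{1}, \ldots, \sumk{l} \in K^*$ whose partial sums $S_j \define k \cdot \tMinus + \sum_{i=1}^{j} \sumk{i}$ (with $S_0 \define k \cdot \tMinus$) satisfy $S_l \in [c,\, c - k \cdot \tDelta - 1]$. Because every $\sumk{i} \geq 0$, the sums form a monotone chain $S_0 \leq S_1 \leq \cdots \leq S_l$, so there is a smallest index $m \define \min\{\,j \mid S_j \geq c\,\}$, and $c \leq S_m \leq S_l < c - k \cdot \tDelta$, i.e., $S_m$ itself lies in the target interval.

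The main step is a short induction on $j \leq m$ showing that $S_j$ is enqueued (equivalently, its \texttt{reached} flag is set) at some point during the execution. The base case $j = 0$ holds since $k \cdot \tMinus$ is enqueued on line~2. For the inductive step, $S_{j-1}$ has been enqueued and is therefore dequeued eventually---termination is guaranteed because every enqueued value lies in the bounded range $[k \cdot \tMinus,\, c - k \cdot \tDelta)$. When $S_{j-1}$ is dequeued and the loop variable takes the value $\sumk{j} \in K$, the enqueue condition on line~\autoref{alg:ifadda} reduces to $S_{j-1} + \sumk{j} = S_j < c - k \cdot \tDelta$, which holds since $S_j \leq S_m < c - k \cdot \tDelta$ for all $j \leq m$. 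Hence $S_j$ is either inserted at that moment or was already reached via an earlier path; either way it is in the queue at some point.

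It follows that $S_m$ is eventually dequeued, and the check on line~\autoref{alg:ifuninductive} succeeds because $c \leq S_m < c - k \cdot \tDelta$, producing the return value \emph{Not inductive}. If a different value in the target interval was dequeued sooner, the return fires even earlier, which is fine. The main obstacle is the bookkeeping in the inductive step: one has to verify that the pruning condition never rejects a partial sum of the witness, which reduces to the two strict inequalities $S_j < c$ (for $j < m$) and $S_m < c - k \cdot \tDelta$, both of which rely on non-triviality through $k \cdot \tDelta < 0$ to make the target interval non-empty and to place the entire chain below $c - k \cdot \tDelta$.
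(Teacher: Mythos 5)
Your proposal is correct and follows essentially the same strategy as the paper's proof: invoke \autoref{Theorem:thm1} to obtain a witness sequence, then induct on the partial sums to show the algorithm enqueues and eventually dequeues a value in the target interval, at which point line~\autoref{alg:ifuninductive} fires. Your use of monotonicity and the first index $m$ with $S_m \geq c$ plays the same role as the paper's ``shortest witness sequence'' normalization, and your explicit termination remark is a small but welcome addition (the paper defers it to a separate theorem).
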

	\begin{proof}
		We show that the algorithm identifies any half space that is not t-inductive.
		We know that $k$ is not mixed. We now examine the case $k\geq 0$. %According to Lemma~\ref{lem:3con} it holds $k\mal t^-<c$
		Let $k_1 \ldots k_l$ be a sequence that satisfies the condition of \autoref{Theorem:thm1}. 
		%W.l.o.g. we can assume that there is no subsequence of $k_1 \ldots k_l$  that sums up to $0$ since we could simply remove it without changing the sum. 
		W.l.o.g., we can assume that $k_1 \ldots k_l$ is the shortest sequence that reaches the target area. 
		We examine its prefixes $k_1 \ldots k_i$ with $i<l$.
		It follows that $k\mal t^-+\sum_{j=1}^i \sumk{j}< c$ holds for all $i<l$ and $ k\mal t^-+\sum_{j=1}^{l} \sumk{j}<c-k\mal \trans$.
		%Let $x\in \mathbb{N}^n$ be such that $ c \leq k\mal x + k \mal t^- < c- k\mal t^\Delta$. W.l.o.g. we can assume that there is no $0<y\leq x$ such that $k\mal y=0$ since $k\mal (x-y)=k\mal x$ and thus $x-y$ is a solution as well. Furthermore, we can assume that there is no $0<y\leq x$ such that $ c \leq k\mal y + k \mal t^- < c- k\mal t^\Delta$.
		%There is a sequence $k_1 \ldots .k_l$ such that $|\{k_i \mid k_i=x(j)\}|=x_j$ and $l=|x|$. It follows $k\mal x= \sum_{i=1}^l k_i$.
		%For any $i<l$ it holds that if $k\mal t^-+\sum_{j=1}^i k_j< c$, then $\sum_{j=1}^i k_j<\sum_{j=1}^l k_j$ and thus there is a $k_j\in K^+$ with $j>i$.
		%For any $i<l$ it holds that if $k\mal t^-+\sum_{j=1}^i k_j\geq c-k\mal t^\Delta$, then $\sum_{j=1}^i k_j>\sum_{j=1}^l k_j$ and thus there is a $k_j\in K^-$ with $j>i$. Since the sequence is shortest, the sum is outside the target area for every $j<l$ which means they all satisfy one of the two conditions.
		%Since addition is commutative, we can rearrange the sequence and move those $k_j$ to the front of the suffix.
		%It follows that there is a sequence $k'_1 \ldots k'_l$ that additionally satisfies:
		%\begin{equation}\label{eq:k+}
		%	\forall_{i<l}: k\mal t^-+\sum_{j=1}^i k'_j< c \Rightarrow k'_{i+1}\in K^+
		%\end{equation}
		%\begin{equation}\label{eq:k-}
		%	\forall_{i<l}: k\mal t^-+\sum_{j=1}^i k'_j\geq c-k\mal t^\Delta \Rightarrow k'_{i+1}\in K^-
		%\end{equation}	
		
		We now apply a induction over the sequence to show that the algorithm processes $k\mal t^- +\sum_{j=1}^l k_j $ or returns "Not inductive" before that:
		\begin{description}
			\item[Induction basis:] The algorithm processes $k\mal t^-$ (Line~1 and 2). 
			\item[Induction hypothesis:] The algorithm processes $current=k\mal t^- +\sum_{j=1}^i \sumk{j}$.
			\item[Induction step:] 		We know $current+\sumk{i+1}= k\mal t^-+\sum_{j=1}^{i+1} \sumk{j}<c-k\mal \trans$ holds and thus the condition in \autoref{alg:ifadda} is satisfied.
			It is added to the queue (\autoref{alg:queueadd}) and it is either processed later or the algorithm returns "Not inductive" before that.
			The argument is analogue for $k\leq 0$. Here, the condition in \autoref{alg:ifaddb} is satisfied.
		\end{description}
%		\paragraph*{Induction basis:} The algorithm processes $k\mal t^-$ (Line~1 and 2). 
%		\paragraph*{Induction hypothesis:} The algorithm processes $current=k\mal t^- +\sum_{j=1}^i \sumk{j}$.
%		\paragraph*{Induction step:} %Either the condition  If $k\mal t^-+\sum_{j=1}^i k'_j< c$ holds
%		%The condition in \autoref{alg:setk+} is met and the algorithm assigns $Set:=K^+$ when $current$ is processed. 
%		We know $current+\sumk{i+1}= k\mal t^-+\sum_{j=1}^{i+1} \sumk{j}<c-k\mal \trans$ holds and thus the condition in \autoref{alg:ifadda} is satisfied.
%		It is added to the queue (\autoref{alg:queueadd}) and it is either processed later or the algorithm returns "Not inductive" before that.
%		The argument is analogue for $k\leq 0$. Here, the condition in \autoref{alg:ifaddb} is satisfied.
%				
		It follows that unless "Not inductive" is returned earlier, $k\mal t^-+\sum_{j=1}^l \sumk{j}$ is processed. 
		It meets the condition in \autoref{alg:ifuninductive} and the algorithm returns "Not inductive".
	\end{proof}

	\begin{theorem}\label{thm:pol}
	The run-time of \autoref{alg:Infinum} is polynomial in the input values.
	\end{theorem}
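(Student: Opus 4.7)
The plan is to bound the total number of iterations of the outer Repeat loop by showing that every value ever placed in the queue lies in a bounded integer interval, and that each such integer is enqueued at most once thanks to the \emph{reached} flags. Since each outer iteration performs only $O(|K|) \leq O(n)$ arithmetic work on numbers of bounded magnitude, this will yield a pseudopolynomial bound.

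First, I would reduce to the non-trivial case. If $(k,c)$ is trivial, \autoref{alg:Infinum} is not invoked (or if it is, the initial checks return an answer in constant time), so I may assume $(k,c)$ is non-trivial with respect to $t$. Then \autoref{Theorem:Unmixed} applies and $k$ is non-mixed, letting me split the analysis into the two cases $k \geq 0$ and $k \leq 0$ and treat them symmetrically.

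Second, I would pin down the enclosing interval $I$ of enqueued values. In the case $k \geq 0$, the starting value $k \cdot t^-$ is a lower bound for everything that ever enters the queue, because every $k(i) \in K$ is non-negative; and the guard on \autoref{alg:ifadda} forces every newly enqueued $current + k(i)$ to satisfy $current + k(i) < c - k\cdot \trans$. Hence $I \subseteq [k \cdot t^-,\, c - k \cdot \trans - 1]$. A dual argument using the guard on \autoref{alg:ifaddb} gives $I \subseteq [c,\, k \cdot t^-]$ in the case $k \leq 0$. Either way $|I| \leq |c - k\cdot \trans - k \cdot t^-| + 1$, which is polynomial in the magnitudes of $c$ and the entries of $k$, $t^-$, $\trans$.

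Third, the \emph{reached} array guarantees that each integer in $I$ is enqueued, and subsequently dequeued in the outer loop, at most once. Therefore the outer loop runs at most $|I|$ times, and inside each iteration the inner \textbf{for}-loop makes at most $|K| \leq n$ passes, each doing a constant number of arithmetic comparisons on values whose magnitudes are polynomial in the input values. Multiplying these bounds yields a total running time of $O(n \cdot |I|)$, which is polynomial in the input values as claimed. The main obstacle is purely bookkeeping: showing that no value outside $I$ is ever enqueued, which requires a careful case analysis of the guards on lines \ref{alg:ifadda} and \ref{alg:ifaddb}, and it is precisely the dichotomy provided by \autoref{Theorem:Unmixed} that makes this case analysis go through cleanly.
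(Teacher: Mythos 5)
Your proposal is correct and follows essentially the same argument as the paper: bound the set of values ever enqueued to a single integer interval of size polynomial in the input values (using the guards on lines \ref{alg:ifadda} and \ref{alg:ifaddb} and the non-mixedness of $k$), observe that the \emph{reached} flags ensure each value is processed at most once, and multiply by the $O(|K|)$ per-iteration cost. The paper states the same interval bound $[\min(k\cdot t^-, c),\, \max(k\cdot t^-, c - k\cdot t^\Delta)]$ more compactly as a single disjunction rather than an explicit case split on $k\geq 0$ versus $k\leq 0$, but the content is identical.
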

%	\begin{lemma}\label{lem:pol}
%	\end{lemma}
	\begin{proof}
	We show that the algorithm's runtime is polynomial in the input values.
		The processing time of one value is linear in $|K|$.
		Either the lowest processed value is  the starting value $k\mal t^-$ and the highest is some value at most $ c-k\mal t^\Delta$ (garanteed by the condition in \autoref{alg:ifadda}) or 
		the highest processed value is the starting value $k\mal t^-$ and the lowest is at least $c$ (garanteed by the condition in \autoref{alg:ifaddb}).
		It follows that the algorithm only processes values in the polynomial sized segment 
		$$[min(k\mal t^-,c) , max(k\mal t^-,c-k\mal t^\Delta)].$$
		Since every processing step reaches a new unprocessed value in the segment, the number of processing steps are limited by the seqment size
		$$l_s:= | max(k\mal t^-,c-k\mal t^\Delta)-min(k\mal t^-,c) |.$$
	\end{proof}
	
	We continue by analyzing the space-complexity of the problem. We study the complexity class \textbf{L} which denotes problems that can be solved deterministically using an amount of memory space that is logarithmic in the size of the input. The class \textbf{NL} describes problems that can be solved non-deterministically in logarithmic space. The problems that can be solved non-deterministically using space that is linear in the input size are in \textbf{CSL}. For any class of problems \textbf{C}, we denote the class of their complements as \textbf{co-C}.
	%We will show that it is in \textbf{NL} and we hope to prove either \textbf{NL}-hardness or membership in \textbf{Logspace}.

	\begin{theorem}\label{Theorem:smallx}
		A half space is not t-inductive if and only if there is a vector $x$ that violates \autoref{Theorem:inductive} and $x(j)\leq l_s$ for all $j\leq n$
	\end{theorem}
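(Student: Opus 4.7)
The plan is to dispatch the $(\Leftarrow)$ direction by direct appeal to \autoref{Theorem:inductive}---any vector $x$ witnessing the inequality contradicts $t$-inductivity regardless of its magnitude---and to tackle the $(\Rightarrow)$ direction through \autoref{Theorem:thm1}, which recasts non-inductivity as the existence of a sequence over $K$ whose partial sums reach the target interval $[c, c - k\cdot t^\Delta - 1]$. From such a sequence $\sumk{1}\ldots\sumk{\ell}$, the required vector is recovered by setting $x(j) \define \abs{\{i : \sumk{i} = k(j)\}}$, so that $k\cdot x = \sum_i \sumk{i}$ and $x$ violates \autoref{Theorem:inductive}. The game is then to arrange the underlying sequence to be short enough that each coordinate of $x$ is bounded by $l_s$.

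First I would pick a witness sequence of minimum length and drop any zero entries, which do not affect partial sums. Second, a standard pumping argument applies: if $s_j = s_{j'}$ for partial sums $s_j = k\cdot t^- + \sum_{i \leq j}\sumk{i}$ with $j < j'$, then the fragment $\sumk{j+1}\ldots\sumk{j'}$ can be excised to produce a strictly shorter witness with the same final value, contradicting minimality. Hence the partial sums $s_0 = k\cdot t^-, s_1, \ldots, s_\ell$ of a minimum sequence are pairwise distinct integers.

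The third step is to bound $\ell$ by $l_s$, which is the crux. In the principal case where $k$ is non-mixed, partial sums move monotonically: strictly increasing for $k \geq 0$ and strictly decreasing for $k \leq 0$. For $k \geq 0$ I may further assume $k\cdot t^\Delta < 0$, since $k\cdot t^\Delta \geq 0$ renders $(k,c)$ oriented and thus $t$-inductive by \autoref{Theorem:TrivialInductive}, contradicting our hypothesis. The partial sums then lie in $[k\cdot t^-, c - k\cdot t^\Delta - 1]$, so distinctness gives $\ell \leq c - k\cdot t^\Delta - 1 - k\cdot t^-$; a short case distinction on whether $k\cdot t^- \leq c$ shows this quantity matches $l_s$ exactly. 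The $k \leq 0$ subcase is dual. Since $x(j)$ counts occurrences of $k(j)$ in a sequence of length $\ell$, we conclude $x(j) \leq \ell \leq l_s$ for every $j$.

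I expect the main obstacle to be the case of a mixed vector $k$, where partial sums are not forced to move monotonically and the distinctness-plus-range bound is not immediate. A natural remedy is to reorder the minimum-length sequence so that the non-negative entries precede the non-positive ones; the reordered walk then consists of a monotone ascent to a peak followed by a monotone descent, and careful bookkeeping of distinct values within and across the two monotone phases should recover a length bound of $l_s$. Verifying that minimality survives this reordering---or, equivalently, that a direct minimum reordered witness exists---will be the most delicate part of the argument and is where I would focus the remaining effort.
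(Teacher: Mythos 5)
Your argument and the paper's proof rest on the same core observation: a minimum-length witness sequence over $K$ has pairwise distinct partial sums all confined to a window of size $l_s$, hence length at most $l_s$, and each coordinate $x(j)$ counts occurrences in that sequence and is therefore bounded by $l_s$. The paper packages this as a termination argument for \ic (the ``reached'' table ensures no value is processed twice, and the search is confined to the segment $[\min(k\cdot t^-, c),\ \max(k\cdot t^-, c - k\cdot t^\Delta)]$), whereas you obtain the same bound directly from minimality and a pumping argument; these are the same idea in different dress. The obstacle you flag at the end --- mixed $k$ --- does not arise: Theorem~\ref{Theorem:smallx} lives in the appendix section whose running proofs all stand under the assumption (made explicit in the completeness proof of \ic) that $(k,c)$ is non-trivial, and by \autoref{Theorem:Unmixed} a half space with $k\cdot t^\Delta < 0$ can only be $t$-inductive if $k$ is non-mixed, while the case $k\cdot t^\Delta \geq 0$ has an empty target interval and is vacuously $t$-inductive; mixed $k$ with $k\cdot t^\Delta < 0$ is handled outright by \autoref{Theorem:Unmixed} and never reaches this lemma. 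You can therefore simply record the non-mixed assumption and drop the mixed-case paragraph, at which point your proof is complete and in fact somewhat more self-contained than the paper's appeal to the algorithm.
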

	\begin{proof}
		Since there are $l_s$ many possible values, and the algorithm processes a value only once, the algorithm constructs a sum of length at most $l_s$ iff it is not t-inductive.
		It follows from (\autoref{eq:sumprop}) that if (\autoref{Theorem:inductive}) is violated, than it is violated by some vector $x$ with $|x|\leq l_s$.
	\end{proof}
	It follows from \autoref{Theorem:smallx} that deciding inductivity is in co-NP. Choosing some vector $x$ with values at most $l_s$ non-deterministically and checking if it violates \autoref{Theorem:inductive} takes polynomial time.
	\medskip
	
	We assume the dimension $n$ of $k$ (which is the number of places in the Petri net) is a fixed parameter and introduce a new algorithm that solves t-inductivity in logarithmic space. 
	It simply iterates all possible vectors $x$ that satisfy $x(j)\leq l_s$ for all $j\leq n$ and checks whether they satisfy the inequality.
	The successor function $\mathit{succ}$ handles the vector $x\leq l_s$ like a number with $n$ digits to the basis $l_s$ and works like a standard successor. It starts at the first value and if it is less than $l_s$ it adds one and terminates, if the current value is $l_s$, it sets it to $0$ and handles the next one.
	\begin{algorithm}
		\caption{Inductivity-LogSpace}
		\label{logspaceAlg}
			$x=0^{n}$\;
			\Repeat{$x=l_s^{n}$}{
			\If{$c \leq k\mal x + k \mal t^- < c- k\mal t^\Delta$}{
				\Return Not inductive 
			}
			$x=\mathit{succ}_{l_s}( x )$\;
		}
			\Return Inductive
	\end{algorithm}

	The value of $l_s$ is linear in every input variable and thus it can be stored in logarithmic space.
	Any vector $x$ with $x(j)\leq l_s$ for all $j\leq n$ can also be stored in logspace.
	
	\begin{theorem}
		Deciding inductivity of a half space is in \textbf{L} for unary encoded input and fixed dimension of $k$.
	\end{theorem}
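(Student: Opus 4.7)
The plan is to establish that \autoref{logspaceAlg} is both correct and runs in logarithmic space under the stated assumptions. Correctness follows almost directly from the results already in place: by \autoref{Theorem:smallx}, the half space fails $t$-inductivity iff there exists a vector $x \in \N^n$ with $x(j) \leq l_s$ for every $j \leq n$ violating the inequality of \autoref{Theorem:inductive}. Algorithm~2 enumerates precisely this finite set of candidate vectors via $\mathit{succ}_{l_s}$ and tests the inequality on each, so it returns \emph{Not inductive} exactly when such a witness exists.

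For the space bound, the plan is to argue in three steps. First, for unary input the value of each numeric component of $k$, $c$, $t^-$, $t^\Delta$ equals its encoding length, hence is at most the input size $N$. Second, since $n$ is fixed, the scalar products $k \cdot t^-$ and $k \cdot t^\Delta$ are bounded by a polynomial in $N$, so $l_s = |\max(k \cdot t^-, c - k \cdot t^\Delta) - \min(k \cdot t^-, c)|$ admits a binary representation of $O(\log N)$ bits. Third, each of the $n$ components of the current vector $x$ is bounded by $l_s$, and because $n$ is constant, storing $x$ requires only $O(n \log l_s) = O(\log N)$ bits. The same bound applies to the temporary variable holding $k \cdot x + k \cdot t^-$ during the inner test, since it is bounded by $n \cdot l_s \cdot k_{\max}$, again polynomial in $N$.

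The remaining obligations are routine but must be checked: the successor routine $\mathit{succ}_{l_s}$ is a standard base-$l_s$ incrementer and can be implemented in logarithmic workspace since it only reads and rewrites one digit at a time; and the inequality test $c \leq k \cdot x + k \cdot t^- < c - k \cdot t^\Delta$ is a composition of logspace arithmetic primitives (addition, multiplication, comparison) applied to numbers of $O(\log N)$ bits. Composing a constant number of such operations keeps the total workspace in $O(\log N)$.

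The main potential obstacle is confirming that no intermediate quantity silently blows up to superpolynomial magnitude; this is where the fixed-dimension hypothesis is essential, since it caps the arity of the scalar-product sums and prevents $n$ from entering as a non-constant factor. Once this is verified, one concludes the algorithm is a deterministic logspace decider, establishing membership in \textbf{L}.
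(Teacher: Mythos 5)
Your proposal is correct and follows essentially the same approach as the paper: it runs the same enumeration algorithm (\autoref{logspaceAlg}), justifies correctness via \autoref{Theorem:smallx}, and bounds space by observing that $l_s$ is polynomial in the unary input size so that each of the constantly many components of $x$ fits in $O(\log N)$ bits. The paper's own proof is terser, but the underlying argument and supporting results are the same.
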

	
	A nondeterministic version of the inductivity algorithm has to store only the current value and the number of executed steps and it executes at most $l_s$ steps.
	\begin{theorem}\label{unaryNL}
		Deciding inductivity of a half space is in \textbf{co-NL} for unary encoded input.
	\end{theorem}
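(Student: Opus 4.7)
The plan is to give a nondeterministic logspace procedure for the complement problem---checking that a half space $(k,c)$ is \emph{not} $t$-inductive---from which inductivity itself lies in \coNL directly, with no recourse to Immerman--Szelepcs\'enyi needed. By \autoref{Theorem:smallx}, together with the equivalent sequence characterization in \autoref{Theorem:thm1}, non-inductivity is witnessed by a sequence $\sumk{1}\dots\sumk{\ell}\in K^*$ with $\ell\leq l_s$ and $c\leq k\cdot t^- + \sum_{i=1}^{\ell}\sumk{i} < c - k\cdot \tDelta$.

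The idea is to simulate \ic nondeterministically: rather than exploring candidate additions via a queue, we guess them one at a time. The procedure maintains two registers, the current sum (initialized to $k\cdot t^-$) and a step counter (initialized to $0$). In each round it nondeterministically picks an index $i\in[1,n]$, checks the direction condition inherited from \ic to ensure the sum stays within the reachable window, updates the current sum by adding $k(i)$, and increments the counter. It accepts as soon as $c\leq \textit{current} < c - k\cdot \tDelta$, and rejects once the counter exceeds $l_s$. Correctness and completeness transfer from \autoref{lem:pn:indcorrect} and \autoref{lem:pn:indcomplete}: every short witnessing sequence the deterministic \ic could find is guessable by this variant, and conversely every accepting run exhibits such a sequence.

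For the space bound, observe that both registers stay bounded by $l_s = |\max(k\cdot t^-,c-k\cdot t^\Delta) - \min(k\cdot t^-,c)|$, which is linear in the magnitudes of the entries of $k$, $c$, $t^-$, and $\tDelta$. Under unary encoding these magnitudes are at most the input length, so $l_s$ is polynomial in the input size and each register therefore occupies $O(\log l_s)$ bits. The guessed index $i$ fits in $O(\log n)$ bits. Hence the whole procedure runs in logarithmic space, giving an \NL decider for non-inductivity and placing inductivity in \coNL.

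The main obstacle is ensuring that the pruning conditions from \ic remain truly \emph{local}, depending only on the current sum and on the chosen $k(i)$, so they can be re-evaluated on the fly without storing any history of previously added values. Since those conditions are precisely the inequalities $\textit{current}+k(i)<c-k\cdot\tDelta$ (for $k\geq 0$) and $\textit{current}+k(i)\geq c$ (for $k\leq 0$), this is immediate, and nothing more is needed beyond a one-line check that the polynomial bound on $l_s$ established for \autoref{thm:pol} applies verbatim under unary input.
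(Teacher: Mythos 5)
Your proposal is correct and takes essentially the same route as the paper's own (one-sentence) argument: a nondeterministic variant of \ic{} that stores only the current running sum and a step counter bounded by $l_s$, which is polynomial---hence of logarithmic bit-length---under unary encoding. Your write-up merely spells out the details of that simulation and its correctness.
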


	For binary encoded input, it follows from \autoref{unaryNL}:
	\begin{theorem}
		Deciding inductivity of a half space is in \textbf{co-CSL} for binary encoded input.
	\end{theorem}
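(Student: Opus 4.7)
The plan is to lift the co-NL result for unary input (\autoref{unaryNL}) to the binary setting by exploiting the characterization $\mathbf{CSL} = \mathbf{NSPACE}(n)$ and observing that a binary encoding of size $n$ only allows values up to exponential in $n$, which in turn need only $O(n)$ bits to store.

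More concretely, first I would revisit the non-deterministic algorithm underlying \autoref{unaryNL}: the complement of t-inductivity is witnessed by a vector $x \in \N^n$ satisfying $c \leq k \cdot x + k \cdot t^- < c - k \cdot t^\Delta$. By \autoref{Theorem:smallx}, we may restrict the entries of $x$ to be bounded by the segment length $l_s = |\max(k\mal t^-, c - k\mal t^\Delta) - \min(k\mal t^-, c)|$. A non-deterministic machine can guess the entries of $x$ componentwise, maintain a running accumulator for $k\cdot x + k\cdot t^-$, and finally test the interval membership. Exactly as in the unary case, only the accumulator, a step counter, and one guessed entry of $x$ need to be stored at any time.

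The second step is the space accounting under binary encoding. If the input has binary size $n$, then all numbers $k(i)$, $t^-(i)$, $t^\Delta(i)$, and $c$ have magnitude at most $2^{O(n)}$, so $l_s$ is bounded by $2^{O(n)}$ as well, and its binary representation occupies $O(n)$ bits. Consequently the accumulator, the guessed entries of $x$ (each $\leq l_s$), and all intermediate arithmetic (addition and comparison of $O(n)$-bit integers) fit in linear space. Hence the complement is decided in $\mathbf{NSPACE}(n) = \mathbf{CSL}$, placing t-inductivity itself in $\mathbf{coCSL}$.

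The only real obstacle is verifying that no auxiliary quantity secretly blows up past linear space. The critical check is that the per-component bound $l_s$ itself, and not just one entry of $x$, admits an $O(n)$-bit representation; this follows because $l_s$ is an arithmetic combination of input numbers whose binary length is already $O(n)$. Everything else is a standard padding argument: the unary algorithm that used $O(\log U)$ space on inputs of unary size $U = 2^{O(n)}$ uses exactly $O(n)$ space on binary inputs of size $n$.
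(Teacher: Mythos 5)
Your proposal is correct and follows exactly the same padding argument the paper has in mind: the co-\textbf{NL} algorithm for unary input uses space logarithmic in the unary size, which becomes linear in the binary size, and $\mathbf{NSPACE}(n) = \mathbf{CSL}$ places the complement in \textbf{CSL}. The only cosmetic difference is that you guess the entries of $x$ componentwise rather than guessing increments $k_i \in K$ one step at a time as in the paper's non-deterministic variant of \autoref{alg:Infinum}, but both keep the accumulator at $O(n)$ bits and yield the same bound.
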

	
		For an instance of the inductivity problem given by a half space and a transition we introduce the parametrized instance with the greatest total value $k_{max}$ of $k$ as the parameter.
	
	\begin{theorem}\label{Theorem:tractable}
		The parametrized inductivity problem is fixed parameter tractable.
	\end{theorem}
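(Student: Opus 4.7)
The plan is to combine the a priori bounds established by \autoref{Lemma:NecessNontrivial} and \autoref{thm:frobeniusbound} with the pseudo-polynomial running time of \ic recorded in \autoref{thm:pol}. Both results are stated as necessary conditions for non-trivial t-inductive half spaces, so their contrapositives allow us to reject non-t-inductive instances cheaply and to confine the remaining dynamic programming search to a window whose size depends only on the parameter $k_{\max}$.

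First I would dispose of the trivial case in polynomial time: by \autoref{Theorem:TrivialInductive}, triviality with respect to $t$ immediately implies t-inductivity, and triviality is a syntactic check on $(k,c)$ and $t$. Now assume $(k,c)$ is non-trivial with respect to $t$. If some nonzero entry of $k$ satisfies $|k(i)| < -k\cdot \tDelta$, then \autoref{Lemma:NecessNontrivial} tells us $(k,c)$ cannot be a non-trivial t-inductive half space, so we return \emph{not t-inductive}. Otherwise every nonzero entry of $k$ is at least $-k\cdot\tDelta$ in absolute value, which gives $|k\cdot \tDelta| \leq k_{\min} \leq k_{\max}$. Similarly, if $|c - k\cdot \tMinus| > k_{\max}\cdot k_{\min}$, then by the contrapositive of \autoref{thm:frobeniusbound} the half space cannot be non-trivial t-inductive, and we return \emph{not t-inductive}. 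Otherwise we have $|c - k\cdot \tMinus| \leq k_{\max}\cdot k_{\min} \leq k_{\max}^{2}$. All of these checks take polynomial time in the input size.

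Second, on the surviving instances I would run \ic. Its running time, by \autoref{thm:pol}, is $\mathcal{O}(l_s \cdot |K|)$, where the processed segment has length
\begin{align*}
l_s \;=\; \bigl|\max(k\cdot \tMinus,\, c - k\cdot \tDelta) - \min(k\cdot \tMinus,\, c)\bigr| \;\leq\; |c - k\cdot \tMinus| + |k\cdot \tDelta| \;\leq\; k_{\max}^{2} + k_{\max}.
\end{align*}
Since $|K| \leq n$, the overall running time is $\mathcal{O}(k_{\max}^{2} \cdot n)$, which has the shape $f(k_{\max}) \cdot \mathrm{poly}(n)$ and so confirms fixed-parameter tractability.

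The main obstacle, as I see it, is recognising that the structural results of \autoref{Section:Charinduct}, originally derived as \emph{necessary conditions} for the existence of IHS during synthesis, can be repurposed here as early rejection tests for single-transition inductivity checking. Once we commit to the non-trivial branch, their contrapositives allow us to decide \emph{not t-inductive} purely by arithmetic inspection of $k$, $c$, $\tMinus$, and $\tDelta$. After these filters, the remaining candidates live in a window whose length is bounded solely in terms of $k_{\max}$, and so the pseudo-polynomial dynamic program of \ic collapses to a true FPT procedure.
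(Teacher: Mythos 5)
Your proof is correct, and it takes a genuinely different route from the paper's. The paper establishes FPT by a kernelization argument: it normalizes the witness vector $x$ so that $x_2,\dots,x_n \leq k_1$, then substitutes $x_1$ to produce an \emph{equivalent} instance $(k,c')$ whose constant $c'$ is bounded by $k\cdot\tMinus$ plus a quantity depending only on $k_{\max}$, and finally runs the pseudopolynomial algorithm on that kernel. You instead observe that the a priori bounds already proved — \autoref{Lemma:NecessNontrivial} and \autoref{thm:frobeniusbound} — can be applied as early rejection filters, after which the segment $l_s$ that bounds the running time of \ic in \autoref{thm:pol} is at most $|c-k\cdot\tMinus|+|k\cdot\tDelta| \leq k_{\max}^2 + k_{\max}$. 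Both routes are grounded in the Frobenius number bound, but yours reuses the structural theory as stated rather than re-deriving it, which is arguably cleaner. Two small things worth tightening: (i) the bound $|c-k\cdot\tMinus| \leq k_{\max}\cdot k_{\min}$ does not follow from the contrapositive of \autoref{thm:frobeniusbound} alone — that theorem gives a one-sided bound per sign case; the opposite direction (e.g.\ $c - k\cdot\tMinus > k\cdot\tDelta \geq -k_{\min}$ for $k\geq 0$) comes from \autoref{Lemma:Nontrivial} together with the \autoref{Lemma:NecessNontrivial} check you already performed, and you should say so explicitly. (ii) After ruling out triviality, you should also reject instances where $k$ has entries of both signs before running \ic (this is justified by \autoref{Theorem:Unmixed} and is also needed for $k_{\min}, k_{\max}$ to be meaningful). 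With those two remarks added, the argument is complete.
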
 
	\begin{proof}
		For any vector $x\in \N^{n}$ with $x_i\geq k_1$ it holds
		$$
		k\mal x= k \mal (x_1+k_i, \ldots ,x_{i-1},x_i-k_1,x_{i+1}, \ldots ,x_n)^T
		$$
		We iterate this argument and it follows that if there is a vector that satisfies the condition of \autoref{Theorem:inductive} then it is also satisfied by a vector $x$ with $x_2, \ldots x_n\leq k_1$.
		We assume $k\geq 0$ and $k\mal t^- <c$.
		
		$$k\mal t^- + k\mal x=k\mal t^- + k_1\mal x_1+ \sum_{i=2}^n k_i\mal x_i\geq c$$
		$$\Rightarrow k_1\mal x_1 \geq c-k\mal t^- - \sum_{i=2}^n k_i\mal x_i\geq c -k\mal t^- - k_1\mal \sum_{i=2}^n k_i$$
		$$\Rightarrow x_1\geq \lceil \frac{c - k\mal t^- - k_1\mal \sum_{i=2}^n k_i}{k_1} \rceil$$
		Instead of imposing a lower bound on $x_1$ we introduce $x'$ with $x_1=x'_1- \lceil \frac{c -k\mal t^- - k_1\mal \sum_{i=2}^n k_i}{k_1} \rceil$ and $x'_i=x_i$ for $i>1$:
		$$ k\mal x= k \mal x' + k_1 \mal\lceil \frac{c -k\mal t^- - k_1\mal \sum_{i=2}^n k_i}{k_1} \rceil $$
		It follows that an half space $k\mal \amark \geq c$ is t-inductive iff the following half space is t-inductive: $k\mal \amark \geq c-k_1 \mal\lceil \frac{c -k\mal t^- - k_1\mal \sum_{i=2}^n k_i}{k_1} \rceil$.
		Note that the new half space is only bounded by $k_{max}$ and not $c$:
		$$c-k_1 \mal\lceil \frac{c -k\mal t^- - k_1\mal \sum_{i=2}^n k_i}{k_1} \rceil\leq k\mal t^- + k_1\mal \sum_{i=2}^n k_i$$ 
		The construction for $k\leq 0$ and $k\mal t^- >c$ is analogue.
	\end{proof}
	
	\subsection{Proofs for Generating c}\label{sec:pn:app:genc}
	The main contribution of this subsection is the proof of \autoref{corollary_bound_on_c}. This requires the following two technical lemmas.
	\begin{lemma}\label{lem:gcd}
$$|\gcd{k} |\leq | k\mal \trans |$$
\end{lemma}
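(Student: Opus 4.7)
The plan is to exploit the basic number-theoretic fact that $\gcd(k)$ divides every integer linear combination of the entries of $k$. Since $t^\Delta = t^+ - t^-$ belongs to $\Z^n$, the scalar product $k \cdot t^\Delta = \sum_{i=1}^n k(i)\cdot t^\Delta(i)$ is an integer multiple of $\gcd(k)$. Hence $\gcd(k)$ divides $k \cdot t^\Delta$.

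The next step is to argue that $k \cdot t^\Delta \neq 0$. This is where the context of the subsection enters: the lemma is applied to non-trivial $t$-inductive half spaces in the proof of \autoref{corollary_bound_on_c}. Non-triviality for $t$ in particular forbids $(k,c)$ from being oriented towards $t$, so $k \cdot t^\Delta < 0$. In particular $k \cdot t^\Delta$ is nonzero.

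Once we know that $\gcd(k)$ divides the nonzero integer $k \cdot t^\Delta$, the standard fact that any divisor of a nonzero integer has absolute value at most that integer's absolute value yields
\begin{align*}
|\gcd(k)| \;\leq\; |k \cdot t^\Delta|,
\end{align*}
which is exactly the claim. There is no real obstacle here; the only subtle point is to state (or recall from the surrounding text) the non-triviality hypothesis that rules out $k \cdot t^\Delta = 0$, since otherwise the inequality would degenerate. I would keep the proof to a short paragraph that cites \autoref{Theorem:Unmixed} only implicitly and relies on the divisibility argument above.
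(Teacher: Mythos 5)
Your proof takes essentially the same route as the paper's: observe that $k \cdot t^\Delta$ is an integer linear combination of the entries of $k$, hence a multiple of $\gcd(k)$, and conclude. You are in fact more careful than the paper, which writes $k \cdot t^\Delta = z \cdot \gcd(k)$ for some $z \in \Z$ and immediately infers $|k \cdot t^\Delta| \geq |\gcd(k)|$, silently assuming $z \neq 0$; you explicitly note that the non-triviality hypothesis in force where the lemma is invoked gives $k \cdot t^\Delta < 0$, ruling out the degenerate case.
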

\begin{proof}
It holds $k\mal \trans =\trans (1) \mal k(1) + \ldots + \trans (n) \mal k(n)=z\mal \gcd{k}$  for some $z\in \mathbb{Z}$. It follows $|k\mal \trans |\geq |\gcd{k} |$.
\end{proof}

	\begin{lemma}\label{theorem_fin_many_c}
		Let $(k,c)$ be a non-trivial $t$-inductive half space and let $y \in \N$ denote the Frobenius number of $\frac{k(1)}{gcd(k)}, \dots, \frac{k(n)}{gcd(k)}$.
		\begin{enumerate}
			\item[a)] \label{theorem_finite_c_part_a} If $k \geq 0$, it holds $k \mal t^{-} + \ktdelta < c \leq gcd(k)\mal y + k \mal t^{-} $.
			\item[b)] \label{theorem_finite_c_part_b} If $k \leq 0$, it holds $gcd(k)\mal y + k\mal  t^{-} \leq c < k\mal t^{-}$.
		\end{enumerate}
	\end{lemma}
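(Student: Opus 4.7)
The plan is to prove the stated two-sided bounds by combining, on one side, the structural constraint from \autoref{Lemma:Nontrivial} with the inductivity criterion of \autoref{Theorem:inductive}, and on the other side, a contrapositive argument that exploits the Frobenius number together with \autoref{lem:gcd} to fit a representable multiple of $\gcd(k)$ into the interval on which inductivity can fail. I focus on case (a); case (b) is symmetric after replacing $k$ by $-k$.

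For the inner bound in case (a), the inequality $k\mal t^- + k\mal \trans < c$ is immediate: \autoref{Lemma:Nontrivial}(a) supplies $k\mal t^- < c - k\mal \trans$ whenever $(k,c)$ is a $t$-inductive non-trivial half space with $k \geq 0$, and this rearranges to the desired bound. For the analogous inner bound $c < k \mal t^-$ in case (b), \autoref{Lemma:Nontrivial}(b) gives $k \mal t^- \geq c$, and strictness follows by a short contradiction: if $c = k \mal t^-$, then $x = 0 \in \N^n$ satisfies $c \leq k \mal x + k \mal t^-$ together with $k \mal x + k \mal t^- = k \mal t^- < c - k \mal \trans$ (using $k \mal \trans < 0$ from non-orientation), so \autoref{Theorem:inductive} yields non-inductivity, contradicting the assumption.

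For the outer Frobenius bound in case (a), I argue by contraposition. Set $d \define \gcd(k)$ and let $y$ denote the Frobenius number of $k(1)/d, \ldots, k(n)/d$. A standard property I will rely on is that $\{k \mal x : x \in \N^n\} = d \cdot V$, where $V \subseteq \N$ contains $0$ and every integer strictly greater than $y$. Assuming $c > d \cdot y + k \mal t^-$, I construct $x \in \N^n$ violating \autoref{Theorem:inductive}: the task reduces to producing $x$ with $k \mal x$ in the integer interval $I = [\,c - k \mal t^-,\ c - k \mal t^- - k \mal \trans - 1\,]$, whose length equals $-k \mal \trans$. By \autoref{lem:gcd}, $d \leq -k \mal \trans$, so $I$ must contain a multiple of $d$, say $d v$. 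Since $d v \geq c - k \mal t^- > d y$ by the hypothesis, $v > y$ is representable, yielding the desired $x \in \N^n$ with $k \mal x = d v \in I$ and contradicting $t$-inductivity.

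The main obstacle will be a clean handling of case (b), where the sign convention for $\gcd(k)$ must be fixed so that the quotients $k(i)/\gcd(k)$ remain non-negative integers and the Frobenius number is well-defined; with that convention in place, the contrapositive applied symmetrically to $-k \geq 0$ delivers the outer bound $\gcd(k) \cdot y + k \mal t^- \leq c$, and the strict inequality $c < k \mal t^-$ has already been established in the first step.
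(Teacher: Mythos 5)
Your proof is correct and follows essentially the same route as the paper: the inner bound comes from \autoref{Lemma:Nontrivial}, and the outer bound is by contradiction, rounding $(c - k\cdot t^-)/\gcd(k)$ up to the next integer (equivalently, fitting a multiple of $\gcd(k)$ into the interval $I$), invoking \autoref{lem:gcd} to guarantee the interval is wide enough, and the Frobenius property to obtain a nonnegative representation that contradicts \autoref{Theorem:inductive}. In fact your proposal is slightly tighter than the paper's at one point: the paper claims the inner bound $c < k\cdot t^-$ of case (b) "follows immediately" from \autoref{Lemma:Nontrivial}, but that lemma only yields $c \leq k\cdot t^-$; your short contradiction via $x = 0$ (which witnesses non-inductivity when $c = k\cdot t^-$ and $k\cdot t^\Delta < 0$) is exactly the missing step to make the inequality strict.
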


\begin{proof} 
	We prove the lower and upper bounds for both cases.
	\begin{itemize}
		\item[a)] The lower bound follows immediately from \autoref{Lemma:Nontrivial}.
		For the upper bound, we assume $c>gcd(k)\mal y + k\mal t^{-}$ and thus $\frac{c-k\mal t^{-}}{gcd(k)}> y$. 
		Since $y$ is the Frobenius number, there is a vector $b\in \N^n$ such that $\lfloor \frac{c-k t^{-}}{gcd(k)}\rfloor=\frac{k^T}{gcd(k)}  \mal b$. 
		This means $$ c-k\mal t^{-}\leq {k}\mal b<  c-k\mal t^{-}+{gcd(k)}.$$	
		According to \autoref{Lemma:Nontrivial}, it holds $k \mal \trans <0$. 
		We apply \autoref{lem:gcd} and get $ -k \mal \trans \geq gcd(k)$. This means that $c\leq k \mal b +k t^{-} <c- \ktdelta$ holds and thus the vector $ b$ satisfies \autoref{Theorem:inductive}. 
		This is a contradiction to inductivity.
		\item[b)] The upper bound follows immediately from \autoref{Lemma:Nontrivial}.
		For the lower bound, we assume $c<gcd(k) \mal y + k \mal t^{-}$ and thus $\frac{c-k t^{-}}{gcd(k)}> y$. 
		The remainder is analogue to a).
	\end{itemize}

%Synce $y$ is the Frobenius number, there is a vector $b\in \N^n$ such that $c-k t^{-}=k^T \mal gcd(k) \mal b$. Since $k \mal \trans <0$, it follows that $c=k^T \mal gcd(k) \mal b +k t^{-} <c- \ktdelta$ holds and thus the vector $gcd(k) \mal b$ satisfies \autoref{eq:thmx}. This is a contradiction to inductivity.
\end{proof}		

We recall \autoref{corollary_bound_on_c}:
	\boundonc*
	\begin{proof}
	Let $y$ denote the Frobenius number of $\frac{k(1)}{\gcd(k)}, \dots, \frac{k(n)}{\gcd(k)}$. By \autoref{Lemma:NecessNontrivial} and \autoref{lem:gcd}, we know that $\abs{k(i)} > -\ktdelta \geq \abs{\gcd(k)}$. From this we obtain that $\frac{k(i)}{\gcd(k)} = \frac{\abs{k(i)}}{\abs{\gcd(k)}} \geq 2$. Now we apply the definition of the Frobenius number and get: $y \leq (\frac{k_{max}}{\gcd(k)} - 1)(\frac{k_{min}}{\gcd(k)} - 1)$. \\
		Now assume that $k \geq 0$. We give an estimation for $\gcd(k) \cdot y$:
		\begin{align*}
			\gcd(k)\cdot y &\leq \gcd(k) \cdot (\frac{k_{max}}{\gcd(k)} - 1)(\frac{k_{min}}{\gcd(k)} - 1)  \\
			&\leq \gcd(k)^{2} \cdot (\frac{k_{max}}{\gcd(k)} - 1)(\frac{k_{min}}{\gcd(k)} - 1) \\
			&\leq (k_{max} - \gcd(k))(k_{min} - \gcd(k)) \\
			&\leq k_{max} \cdot k_{min}.
		\end{align*}
		We now combine this with the bound proven in \autoref{theorem_fin_many_c} and derive the criterion of \autoref{corollary_bound_on_c} $$c < k_{max} \cdot k_{min} + k \mal t^{-}.$$ 
		\medskip
		If $k \leq 0$, we derive a similar bound. Note that it holds $\gcd(k) < 0$). We now derive a lower bound of $	\gcd(k) \cdot y$:
		\begin{align*}
			\gcd(k) \cdot y &\geq \gcd(k) \cdot (\frac{k_{max}}{\gcd(k)} - 1) \mal (\frac{k_{min}}{\gcd(k)} - 1) \\
			&\geq - \gcd(k)^{2} \cdot (\frac{k_{max}}{\gcd(k)} - 1)\mal (\frac{k_{min}}{\gcd(k)} - 1) \\
			&\geq - (k_{max} - \gcd(k))\mal (k_{min} - \gcd(k)) \\
			&\geq - k_{max} \cdot k_{min}.
		\end{align*}
		Like above, we apply this to the bound on $c$ from \autoref{theorem_fin_many_c} and get 
		$$c \geq -k_{max} \cdot k_{min} + k \mal t^{-}$$
		
	\end{proof}
\section{Non-Trivial Petri Nets}
Since the benchmark suite did not require non-trivial separating IHS, it did not accurately present our CEGAR method. 
We would like a better understanding of which Petri nets require non-trivial separating IHS. 
For this purpose, we construct a simple Petri net that has a non-trivial separating IHS but not a trivial one. We begin by collecting sufficient conditions of a Petri net that ensure non-triviality for any separating IHS.
\begin{lemma}
	Let $a\in \Real_+^m$  be such that $m_f=m_0+\sum_{i=1}^m a(i)\mal \tDelta_i$. For any separating half space, there is a transition that is not oriented towards it. 
\end{lemma}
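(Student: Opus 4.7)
The plan is a short proof by contradiction using linearity of the scalar product. Suppose $(k,c)$ is a separating half space, so $k\cdot m_0 \geq c$ while $k\cdot m_f < c$, and suppose towards contradiction that $(k,c)$ is oriented towards every transition $t_i$, i.e.\ $k\cdot \tDelta_i \geq 0$ for all $i\in\{1,\dots,m\}$.

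Applying $k$ to the given identity $m_f = m_0 + \sum_{i=1}^m a(i)\mal \tDelta_i$ and using linearity of the scalar product, I obtain
\begin{align*}
k\cdot m_f \;=\; k\cdot m_0 \,+\, \sum_{i=1}^m a(i)\,\bigl(k\cdot \tDelta_i\bigr).
\end{align*}
Since $a(i)\geq 0$ (entries of $a\in\Real_+^m$) and each $k\cdot \tDelta_i\geq 0$ by the contradiction hypothesis, every summand on the right is non-negative. Hence $k\cdot m_f \geq k\cdot m_0 \geq c$, contradicting $k\cdot m_f < c$. Therefore at least one transition $t_i$ must satisfy $k\cdot \tDelta_i < 0$, i.e.\ $(k,c)$ is not oriented towards it.

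There is essentially no obstacle here: the argument is a one-line linear algebra observation, and it does not even need inductivity of the half space, only that it separates $m_0$ from $m_f$. The only subtlety worth noting in the writeup is the convention for $\Real_+$ (non-negative versus strictly positive); in either case the sign of the summands is preserved, so the conclusion goes through unchanged.
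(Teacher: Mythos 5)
Your proof is correct and takes essentially the same approach as the paper: apply $k$ to the marking-equation identity, observe that $k\cdot(m_f - m_0) = \sum_i a(i)\,(k\cdot\tDelta_i)$ must be negative because $(k,c)$ separates, and conclude some $k\cdot\tDelta_i < 0$. The paper phrases this directly (``one summand must be negative'') while you phrase it as a contradiction (``if all were non-negative the sum would be too''), but the content is identical.
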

\begin{proof}
	Since the half space is separating, it holds $k\mal m_0\geq c>k\mal m_f$ and thus $0>k\mal (m_f-m_0)=\sum_{i=1}^m a(i)\mal k\mal  \tDelta_i$. 
	One element in the sum has to be negative: $\exists_{i<m} :  a(i)\mal k\mal  \tDelta_i<0$. 
	Since $a(i)$ can not be negative, it follows  $\exists_{i<m} :   k\mal  \tDelta_i<0$.
	So $(k,c)$ is not oriented towards $t_i$.
\end{proof}
% (litmus tests and mutual exclusion algorithms)

It follows that, for any separating half space, one of the transitions $t_i$ with an associated value $a_i$ greater than zero is not oriented towards it.
In order to ensure that the separating half space is not trivial, we require two additional properties: it can neither be antitone, nor monotone.

For any $t_i$ with $a_i>0$ we require $t_i^-\leq m_0$, i.e.\ the transition is activated in the initial marking. This means $\Act{t}\cap \Sol{k}{c}\neq \emptyset$ and thus it is not antitone.

For any $t_i$ with $a_i>0$ we require $t_i^- +\tDelta_i \leq \amark_f$, which means there is a marking from which $t_i$ can be fired in order to reach $\amark_f$. Assume there is separating half space $(k,c)$ that is monotone for $t_i$. Then it holds $k\geq 0$ and thus $k\mal \amark_f\geq k\mal (t_i^-+\tDelta_i)$. Since $k\mal \amark_f <c$, it follows $ k\mal (t_i^-+\tDelta_i)<c$. 
%And thus $m_f\in \Sol{k}y{c}$ and the half space is not separating. 
This is a contradiction to monotonicity for $t_i$. % and thus the half space can not be monotone.

In summary, if the marking equation has a continuous solution such that the used transitions can all be fired from the initial marking and they can all be fired to reach $m_f$, then there are no trivial separating half spaces. 

This sufficient condition for non-triviality is useful, because it is not much stronger than the following necessary condition for unreachability.
 If no solution of the marking equation exists where at least one used transition can be fired in the beginning and one in the end, then $m_f$ is unreachable. 
 This condition is very easy to check.
This comparison suggests that for a Petri net where it is not immediately obvious that $m_f$ is unreachable, a non-trivial half space is likely to be required.

\paragraph*{Example}
We now construct a minimal non-trivial example for larger dimensions. 
We introduce a Petri net $N_n$ of size $n\geq 3$ that has a non-trivial separating half space but no trivial separating half spaces (see~\autoref{Figure:genex}). 
Furthermore, if one transition is removed, a trivial separating invariant exists.

We set $m_0=1^n, m_f=2^n$, meaning we start with one token in each place and ask whether we can avoid getting having tokens in each place. 
Then, we choose some $j\leq n$ and we define $n$ transition $t_1,...t_n$ such that each transition $t_i$ removes one token from each place and then puts $n$ tokens in place $p_i$. The exception is $t_j$ which puts $n+1$ tokens into $p_j$. 

Formally, this means $t_i^-=1^n$ for all $i\leq n$ and $\tDelta_i(i)=n-1, \tDelta_i(k)=-1$ for $k\leq n,k\neq i$. For $t_j$, it holds $\tDelta_j(i)=-1$ for $i\leq n,i\neq j$ and $\tDelta_j(j)=n$.

\begin{figure}
	%\centering
	\begin{center}
	\begin{tikzpicture}[node distance=1.5cm and 0.5cm , bend angle=45, auto, ->]
		\node [place] (p1) {$p_1$};
				\node[right= of p1] (dot1) {$\cdots$};
		\node[place, right= of  dot1] (pi) {$p_i$};
				\node[right= of  pi] (doti) {$\cdots$};
		\node[place, right= of  doti] (pj) {$p_j$};
				\node[right= of  pj] (dotj) {$\cdots$};
		\node [transition, above= of  pi] (ti) {$t_i$};	
		\node [transition, below= of  pj] (tj) {$t_j$};
		
	%	\coordinate[right= of  p1] (tco);
			\path (p1) edge[out = 90, in = 180] (ti);
			\path (p1) edge[out = 285, in = 180] (tj);

			\draw[->] (pi) edge (ti);
			\draw[->, bend right] (ti) edge node[left] {\footnotesize{$n$}} (pi);
			\draw[->] (pi) edge (tj);
			\draw[->] (pj) edge (ti);
			\draw[->] (pj) edge (tj);
			\draw[->, bend right] (tj) edge node[right] {\footnotesize{$n+1$}} (pj);		
		\node[place, right= of  dotj] (pn) {$p_n$};
			\path (pn) edge[out = 270, in = 0] (tj);
			\path (pn) edge[out = 90, in = 0] (ti);

	\end{tikzpicture}
\end{center}
	\caption{A non-trivial Petri net}\label{Figure:genex}
\end{figure}
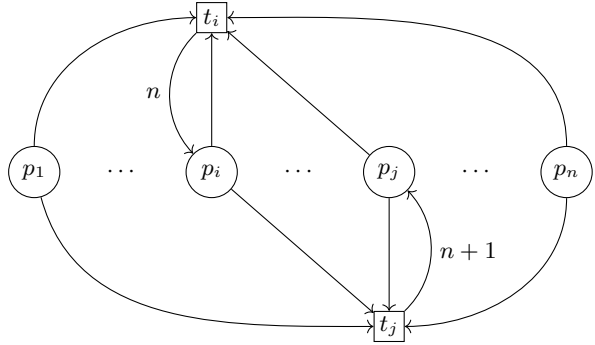

\begin{lemma}
	The Petri net $N_n$ has a separating non-trivial IHS but no separating trivial IHS.
\end{lemma}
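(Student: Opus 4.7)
The plan is to reduce separation to a simple arithmetic constraint on $S := \sum_i k(i)$, and then separately handle the three trivial cases before exhibiting an explicit non-trivial witness. Since $\amark_0 = 1^n$ and $\amark_f = 2^n$, a half space $(k,c)$ separates them exactly when $S \geq c$ and $2S < c$, which forces $S < 0$ and $c \in (2S, S]$. The whole argument will rest on this observation.

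To rule out trivial separating IHS, I will dispose of monotone and antitone half spaces immediately: monotonicity requires $k\geq 0$ which contradicts $S<0$, and antitonicity requires $k\cdot \tMinus_\ell < c$ for every transition $\ell$, but here $\tMinus_\ell = 1^n$, so $k\cdot \tMinus_\ell = S \geq c$. Hence every transition must be oriented towards $(k,c)$. A direct computation gives $k\cdot \tDelta_\ell = n\,k(\ell) - S$ for $\ell \neq j$ and $k\cdot \tDelta_j = (n+1)\,k(j) - S$, so orientation translates to $k(\ell)\geq S/n$ for $\ell\neq j$ and $k(j)\geq S/(n+1)$. Summing these inequalities yields
\begin{align*}
S \;=\; \sum_\ell k(\ell) \;\geq\; \frac{n-1}{n}S + \frac{1}{n+1}S.
\end{align*}
Dividing by the negative number $S$ flips the inequality and gives $1 \leq (n-1)/n + 1/(n+1) = (n^2+n-1)/(n^2+n)$, which is false. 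This contradiction shows that no trivial separating IHS exists.

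For the existence of a non-trivial separating IHS, I will guess the witness $k(i) = -(n+1)$ for $i\neq j$, $k(j) = -n$, and $c = -n^2 - n$. A short calculation gives $S = -n^2 - n + 1$, so separation ($c\in(2S,S]$) and $k\leq 0$ both hold, and $k\cdot\tDelta_\ell = -1$ for every $\ell$ (so none of the transitions is oriented towards $(k,c)$). Using $k\cdot\tMinus_\ell = S$ together with $c = S-1$, the inductivity criterion of Theorem~\ref{Theorem:inductive} reduces to the infeasibility of $k\cdot x = -1$ over $x\in\N^n$. Writing $k\cdot x = -(n+1)\abs{x}_1 + x(j)$, this becomes $(n+1)\abs{x}_1 - x(j) = 1$ with $0 \leq x(j) \leq \abs{x}_1$, which has no solution for $n\geq 2$ since the left-hand side is either $\leq 0$ or $\geq n \geq 2$. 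Non-triviality is confirmed by the same data: $(k,c)$ is neither oriented, nor monotone (all $k(i)<0$), nor antitone (since $k\cdot\tMinus_\ell = S \geq c$).

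The main obstacle is the construction in the last step: guessing the precise values $k(j)=-n$, $k(i)=-(n+1)$ for $i\neq j$, and $c=S-1$. The asymmetry between $k(j)$ and the other entries mirrors the asymmetry between $t_j$ and the other transitions in $N_n$, and choosing $c$ so that the target interval of Theorem~\ref{Theorem:inductive} collapses to the single value $-1$ is what makes the inductivity check trivial. Once these parameters are written down, all remaining verifications are routine arithmetic. The non-triviality argument itself is essentially a dual of the contradiction used in the trivial case and requires no new ingredient.
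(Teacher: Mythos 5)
Your proof is correct, and your witness for the non-trivial separating IHS is exactly the one the paper uses: $k(j)=-n$, $k(i)=-(n+1)$ for $i\neq j$, $c=-n(n+1)$, with the same reduction of inductivity via Theorem~\ref{Theorem:inductive} to the unsolvability of $(n+1)\lvert x\rvert_1 - x(j) = 1$ over $\mathbb{N}^n$. You are in fact a bit more careful than the paper here: you explicitly verify $k\cdot\tDelta_\ell = -1$ for \emph{every} $\ell$, whereas the paper only exhibits the computation for $t_j$ and leaves the remaining transitions implicit.

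Where you genuinely diverge is in ruling out trivial separating half spaces. The paper first observes $S:=k\cdot m_0 < 0$, then performs a case analysis on $k_l = \min_i k(i)$ (whether $l=j$, whether $l\neq j$ with entries unequal, whether all entries are equal) to exhibit one transition $t_l$ with $k\cdot\tDelta_l<0$. You instead assume orientation towards every transition, rewrite each $k\cdot\tDelta_\ell\geq 0$ as a lower bound $k(\ell)\geq S/n$ (resp.\ $k(j)\geq S/(n+1)$), and sum these bounds to force $1\leq (n^2+n-1)/(n^2+n)$, which is false. This aggregation argument is cleaner: it avoids the case split entirely, it does not require singling out a minimal entry, and it makes the arithmetic obstruction (the coefficients sum to strictly less than $1$) transparent. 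Both approaches discharge monotonicity and antitonicity the same way, via $S<0$ and $k\cdot\tMinus_\ell=S\geq c$.
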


\begin{proof}
Let $c=-n\mal (n+1)$, $k(j)=-n$, and $k(i)=-(n+1)$ for $i\neq j,i\leq n$. 
The half space is separating, since it holds $k\mal m_0=-(n+1)\mal (n-1)-n= -n\mal (n+1)+1 > c$ and $k\mal m_f=-(n+1)\mal (n-1)\mal 2-2n=-2(n+1)\mal n +2< c$.

We show that the half space is a IHS using~\autoref{Theorem:thm1}.
Since $k\mal \tDelta_j=(n+1)\mal (n-1)-n^2=-1$ it holds $k\mal \tMinus_j=k\mal \amark_0 \geq c-k \mal \tDelta_j$.
If we add any $k_i$ then we get  $k\mal \tMinus_j +k_i = -n\mal (n+1)+1-n<c$ and if we add $k_j$ or additional values of $k$ we get an even smaller value.

Let $(k,c)$ be any separating half space. Since $k\mal m_0>k\mal m_f$, it holds $-k_1\ldots -k_n>0$. 
Let $k_l=min(k_1,\ldots k_n)$ be the negative entry of $k$ with the largest absolute value. If $l=j$ , then $k\mal \tDelta_l=-k_1...-k_n+n\mal k_l+k_l\leq -n\mal k_l+n \mal k_l+k_l=k_l<0$. 

If $l\neq j$, then $k\mal t_l=-k_1...-k_n+n\mal k_l=k_l$. If all entries of $k$ are not equal, then it holds $-k_1+...-k_n > n\mal k_l$ and thus $k\mal t_i<0$. 
If all entries of $k$ are equal then they are also negative and it holds 
$$k\mal t_j=-k_1...-k_n+n\mal k_j+k_j=-n\mal k_j+n\mal k_j+k_j=k_j<0.$$ 
It follows that any separating half space is oriented towards at least one transition.

Obviously all transitions are enabled at $m_0$ and the half space is not antitone.
According to $-k_1\ldots -k_n>0$, it holds $k\ngeq 0$ and thus it is not monotone either.
\end{proof}

%We divide our results into two categories: reachability and coverability checks.

We evaluate the performance of \inequalizer\ for reachability on the non-trivial Petri nets of sizes three to ten in~\autoref*{tab:nontrivial}. We give the number of iterations of the CEGIS loop performed by the tool. 
We do not include the run-time results of \mist\ for these Petri nets in the table since they were all well below $0.1$ second. We use incremental solving and find that our tool usually computes the IHS quickly using few iterations. There are only two diverging results where the SMT-solver returns a number of unusable vectors $k$.

\begin{table}[t]
	\centering

% !TEX spellcheck = en-EN

\begin{tabular}{|r|r|r|}
\hline
$|P|$ & Iterations & Time \\
\hline
%\bench{2}& 5& 4 & 0.6 & 0.1 & 0.1 & 0.1 & 0.1 & 0.1 \\
\bench{3}  & 2 & 0.4  \\
\bench{4} & 113 & 6.9 \\
\bench{5}  & 2 & 0.4 \\
\bench{6} & 2 & 0.4 \\
\bench{7}  & 6 & 0.6 \\
\bench{8}  &3 & 0.6  \\
\bench{9}  &378 & 205.2 \\ 
\bench{10} & 2  & 0.5 \\
\hline
\end{tabular}
	
	\caption{\inequalizer\ on non-trivial Petri nets. }
	\label{tab:nontrivial}
\end{table}

\end{document}